\documentclass[12pt]{amsart}



\usepackage{graphicx}
\usepackage{amsmath,amssymb}

\textwidth 6.5truein
\textheight 8.67truein
\oddsidemargin 0truein
\evensidemargin 0truein
\topmargin 0truein

\def\>{\relax\ifmmode\mskip.666667\thinmuskip\relax\else\kern.111111em\fi}
\def\<{\relax\ifmmode\mskip-.333333\thinmuskip\relax\else\kern-.0555556em\fi}
\def\vsk#1>{\vskip#1\baselineskip}
\def\vv#1>{\vadjust{\vsk#1>}\ignorespaces}
\def\vvn#1>{\vadjust{\nobreak\vsk#1>\nobreak}\ignorespaces}

\newtheorem{thm}{Theorem}[section]
\newtheorem{cor}[thm]{Corollary}
\newtheorem{lem}[thm]{Lemma}

\theoremstyle{definition}                                  
\numberwithin{equation}{section}

\theoremstyle{definition}
\newtheorem*{rem}{Remark}

\numberwithin{equation}{section}

\usepackage{amssymb}
\usepackage{comment}
\usepackage[all]{xy}
\usepackage{enumerate}

\usepackage[all]{xy}
\usepackage{bbm}

\usepackage[dvipsnames]{xcolor}
\usepackage{graphicx}

\usepackage[colorlinks=true,allcolors = blue]{hyperref} 

\usepackage{amsmath}
\usepackage{amsxtra}
\usepackage{amscd}
\usepackage{amsthm}
\usepackage{amsfonts}
\usepackage{amssymb}
\usepackage{eucal}
\usepackage{verbatim}
\usepackage[all]{xy}
\usepackage{mathrsfs}
\usepackage{lineno}
\usepackage{color}
\definecolor{deepjunglegreen}{rgb}{0.0, 0.29, 0.29}
\definecolor{darkspringgreen}{rgb}{0.09, 0.45, 0.27}

\usepackage{stmaryrd}


\usepackage{etex,etoolbox,needspace}
\pretocmd\section{\Needspace*{4\baselineskip}}{}{}


\DeclareMathOperator{\Char}{{char}}

\usepackage{tikz-cd}

\newcommand{\rar}[1]{\stackrel{#1}{\longrightarrow}}

\newcommand{\al}{\alpha}  \newcommand{\ga}{\gamma}

\newcommand{\om}{\omega} \newcommand{\Om}{\Omega}

\newcommand{\bA}{{\mathbb A}}

\newcommand{\bC}{{\mathbb C}}

\newcommand{\bF}{{\mathbb F}}
\newcommand{\bG}{{\mathbb G}}

\newcommand{\bK}{{\mathbb K}}

\newcommand{\bP}{{\mathbb P}}
\newcommand{\bQ}{{\mathbb Q}}

\newcommand{\bZ}{{\mathbb Z}}

\newcommand{\cF}{{\mathcal F}}

\newcommand{\cH}{{\mathcal H}}

\newcommand{\cL}{{\mathcal L}}

\newcommand{\cO}{{\mathcal O}}
\newcommand{\cP}{{\mathcal P}}

\newcommand{\cU}{{\mathcal U}}
\newcommand{\cV}{{\mathcal V}}

\newcommand{\fO}{{\mathfrak O}}

\newcommand{\Yb}{\overline{Y}}

\newcommand{\nc}{\newcommand}

\nc\wh{\widehat}

\nc\on{\operatorname}

\nc\Gr{\on{Gr}}

\nc\Fl{\on{Fl}}

\DeclareMathOperator{\Lie}{{Lie}}

\DeclareMathOperator{\Mat}{{Mat}}

\DeclareMathOperator{\coker}{{coker}}

\newcommand{\limfrom}{{\displaystyle\lim_{\longleftarrow}}}
\newcommand{\limto}{{\displaystyle\lim_{\longrightarrow}}}
\newcommand{\rightlim}{\mathop{\limto}}


\newcommand{\leftlim}{\mathop{\displaystyle\lim_{\longleftarrow}}}
\newcommand{\limfromn}{\leftlim\limits_{\raise3pt\hbox{$n$}}}
\newcommand{\limton}{\rightlim\limits_{\raise3pt\hbox{$n$}}}


\newcommand{\rightlimit}[1]{\mathop{\lim\limits_{\longrightarrow}}\limits%
                    _{\raise3pt\hbox{$\scriptstyle #1$}}}

\newcommand{\leftlimit}[1]{\mathop{\lim\limits_{\longleftarrow}}\limits%
                    _{\raise3pt\hbox{$\scriptstyle #1$}}}

\newcommand{\iso}{\buildrel{{}_\sim}\over{\longrightarrow}}
\newcommand{\mono}{\hookrightarrow}

\DeclareMathOperator{\Id}{{Id}}

\DeclareMathOperator{\Hom}{{Hom}} 

\DeclareMathOperator{\Spec}{{Spec}}

\DeclareMathOperator{\Res}{{Res}}

\makeatletter

\newcommand{\Rmnum}[1]{\expandafter\@slowromancap\romannumeral #1@}
\makeatother

\newtheorem{pr}{Proposition}[section]

\theoremstyle{definition}



\newcommand{\Div}{\operatorname{div}}




\def\vsk#1>{\vskip#1\baselineskip}
\def\<{\relax\ifmmode\mskip-.333333\thinmuskip\relax\else\kern-.0555556em\fi}

\nc{\bea}{\begin{eqnarray*}}
\nc{\eea}{\end{eqnarray*}}
\nc{\bean}{\begin{eqnarray}}
\nc{\eean}{\end{eqnarray}}

\let\bi\bibitem
\def\kz{K\!Z}
\let\der\partial
\let\mc\mathcal
\def\F{{\mathbb F}}
\def\C{{\mathbb C}}
\def\ox{{\otimes}}
\def\Z{\mathbb Z}
\def\E{\mathbb E}

\renewcommand{\thefootnote}{[\arabic{footnote}]}

\def\K{\mathbb K}
\def\L{\mathbb L}

\tolerance=10000

\begin{document}

\title[KZ equations in characteristic $p$]
{Finding all solutions to  the  KZ  equations
\\
 in characteristic $p$}
\author[Alexander Varchenko]{Alexander Varchenko$^{\star}$}
\author[Vadim Vologodsky]{Vadim Vologodsky$^\circ$}

\maketitle

\begin{center}

{\it $^{\star}$Department of Mathematics, University
of North Carolina at Chapel Hill\\ Chapel Hill, NC 27599-3250, USA\/}

\vsk.5>

{\it ${}^\circ$ Department of Mathematics, Princeton University\\
Princeton, NJ 08540, USA\/}

\end{center}

{\let\thefootnote\relax
\footnotetext{\vsk-.8>\noindent
{$^\star$\sl E-mail}:\enspace anv@email.unc.edu,
supported in part by NSF grant  DMS - 1954266
\\
{$^\circ$\sl E-mail}:\enspace vologod@gmail.com, supported by a Simons Foundation Investigator Grant No.~622511 through Bhargav Bhatt. 
}}

\begin{abstract}

The KZ equations are differential equations satisfied by the correlation functions (on the Riemann sphere) of two-dimensional conformal field theories associated with an affine Lie algebra at a fixed level. They form a system of complex partial differential equations with regular singular points satisfied by the $n$-point functions of affine primary fields.
In \cite{SV1} the KZ equations were identified with equations for flat sections of suitable Gauss-Manin connections, and solutions of the KZ equations were constructed in the form of multidimensional hypergeometric integrals.
In \cite{SV2} the KZ equations were considered modulo a prime number $p$, and, for rational levels, polynomial solutions of the KZ equations modulo $p$ were constructed by an elementary procedure as suitable $p$-approximations of the hypergeometric integrals. 
In this paper we study in detail
 the first nontrivial example of the KZ equations in characteristic $p$. In particular, if the level is irrational,  we prove a version of the steepest descent result that relates the KZ local system to the
space of functions on the critical locus of the master function. We use this result to prove the generic irreducibility of the KZ local system at any irrational level.
If the level is rational, we describe all solutions of the KZ equations in characteristic $p$ by demonstrating that they all stem from the $p$-hypergeometric solutions.
Finally, we prove a Lagrangian property of the subbundle of the KZ bundle spanned by the $p$-hypergeometric sections.

\end{abstract}

{\small \tableofcontents  }

\section{Introduction}
\label{sec 1}

The Knizhnik–Zamolodchikov equations, or KZ equations, are linear differential equations satisfied by the correlation functions (on the Riemann sphere) of two-dimensional conformal field theories associated with an affine Lie algebra at a fixed level. 
They form a system of complex partial differential equations with regular singular points satisfied by the $n$-point functions of affine primary fields.
The equations were derived by physicists Knizhnik and Zamolodchikov in their seminal work \cite{KZ}.
In \cite{SV1} the KZ equations were identified with equations for flat sections of suitable Gauss-Manin connections, and solutions of the KZ equations were constructed in the form of multidimensional hypergeometric integrals.
In \cite{SV2} the KZ equations were considered modulo a prime number $p$, and polynomial solutions of the KZ equations modulo $p$ were constructed by an elementary procedure as suitable $p$-approximations of the hypergeometric integrals.
In this paper we address the problem of whether all solutions of the KZ equations modulo $p$ are generated by the $p$-hypergeometric 
 solutions. 
We consider the first nontrivial example of the KZ equations and 
demonstrate that, indeed, in this case, all solutions of the KZ equations modulo $p$ stem from the $p$-hypergeometric solutions.

\subsection{KZ equations}
\label{sec 1.1}
The KZ equations considered in this paper have the following form.
Let $\Om_{ij}$ be the $n\times n$-matrix with only four nonzero elements given by the formula
\small
\begin{equation}
\label{Omij}
 \Omega_{ij} \ = \ \begin{pmatrix}
             & \vdots^i &  & \vdots^j &  \\
        {\scriptstyle i} \cdots & {-1} & \cdots &
            1  & \cdots \\
                   & \vdots &  & \vdots &   \\
        {\scriptstyle j} \cdots & 1 & \cdots & {-1}&
                 \cdots \\
                   & \vdots &  & \vdots &
                   \end{pmatrix} .
\end{equation}               
\normalsize
Let $z=(z_1,\dots,z_n)$. 
Define 
\bea
H_i(z) = \sum_{j \ne i}
   \frac{\Omega_{ij}}{z_i - z_j}  ,
\quad i = 1, \dots , n,
\eea
called the Gaudin Hamiltonians.
Consider the system of differential and algebraic equations for 
 column vectors  $I(z)=(I_1(z)$, \dots, $I_{n}(z))^\intercal$:
 \bean
\label{KZ}
\phantom{aaaaaa}
\der_i I \ + \
   h\,H_i(z) \,I \,=\,0,
\quad i = 1, \dots , n,
\quad
I_1(z)+\dots+I_{n}(z)=0,
\eean
where $\der_i$ denotes $\frac{\partial }{\partial z_i}$; \
$M^\intercal$ denotes the transpose matrix of a matrix $M$; 
$h$ is a parameter of the system.
In this paper this systems of equations is called the KZ equations
 and denoted $\kz(h)$.

\smallskip

Let $\mathbb A^n$ be affine space with coordinates $z_1,\dots,z_n$.
Let
$S$ be the complement to the union of all diagonal hyperplanes defined by equations
$z_i=z_j$, $i\ne j$.
The KZ equations define a flat KZ connection $\nabla^{\on{KZ}, h}$
on the trivial bundle $\pi : V\times S\to S$ whose fiber is
the vector space $V$ of $n$-vectors with the zero sum of coordinates. We denote the sheaf of sections of this bundle
by
$\cV$.

\begin{rem}

The system of equations \eqref{KZ} is the system of the original KZ equations 
associated with the Lie algebra $\mathfrak{sl}_2$ and the subspace of singular vectors of weight $n-2$
of the tensor power $(\C^2)^{\ox n}$  of the two-dimensional irreducible $\mathfrak{sl}_2$-module, up to a gauge
transformation, see this example in \cite[Section 1.1]{V2}.

\end{rem}

In this paper we study the KZ  connection $\nabla^{\on{KZ}, h}$ in characteristic $p>0$. In particular, we describe explicitly all flat sections of 
the vector bundle 
$\cV$ with connection $\nabla^{\on{KZ}, h}$
and compute the $p$-curvature of $\nabla^{\on{KZ}, h}$.

\subsection{Dual connections}

The vector space $V$ has a non-degenerate bilinear form
\bean
\label{Sha}
(x,y) = x_1y_1+\dots + x_ny_n\,,
\eean
the Shapovalov form. The Gaudin Hamiltonians are symmetric,
\bea
(H_i(z) x,y) = (x,H_i(z)y)
\eea
for all $i$,  $x,y$.
The KZ connections $\nabla^{\on{KZ}, h}$ and $\nabla^{\on{KZ}, -h}$ are dual, that is,
\bean
\label{do}
\der_i (x(z), y(z)) = (\nabla^{\on{KZ}, h}_i x(z), y(z))+ (x(z), \nabla^{\on{KZ}, -h}_i y(z))
\eean
for any $x(z), y(z)$. Equivalently, form \eqref{Sha} defines an isomorphism of bundles with flat connections 
\begin{equation}\label{eq:dualitySha}
     (\cV, \nabla^{\on{KZ}, -h})\iso (\cV, \nabla^{\on{KZ}, h})^*.
\end{equation}

In particular, if $x(z)$ is a flat section of the connection $\nabla^{\on{KZ}, h}$
and $y(z)$ is a flat section of the connection $\nabla^{\on{KZ}, -h}$, then
$\der_i(x(z), y(z))=0$.

\smallskip

Let $y(z)$ be a nonzero $V$-valued function.  For a fixed $z$ let $W_{y(z)}\subset V$ be the hyperplane  defined by the formula
\bea
W_{\!y(z)} =\{x\in V\mid (x, y(z))=0\}.
\eea
Let $\mc W_{y(z)}$ be the codimension one subbundle of the trivial bundle 
$\pi : \mathbb A^n\times V\to \mathbb A^n$ with fibers $W_{\!y(z)}$.

\smallskip

If $y(z)$ is a flat section of the KZ connection $\nabla^{\on{KZ}, -h}$, then the subbundle
$\mc W_{y(z)}$ is invariant with respect to the KZ connection $\nabla^{\on{KZ}, h}$,
namely, if $x(z)$ is a section of $\mc W_{y(z)}$, then $\nabla^{\on{KZ}, h}_ix(z)$ also is a section of 
$\mc W_{y(z)}$  for any $i$. 

\smallskip

 Indeed,
\bea
 \big(\nabla^{\on{KZ}, h}_ix(z), y(z)\big)=\big(\nabla^{\on{KZ}, h}_ix(z), y(z)\big)+ 
\big(x(z), \nabla^{\on{KZ}, - h}_iy(z)) = \der_i\big(x(z),y(z)\big) = 0.
\eea

\subsection{Solutions over $\C$}

 Assume that 
\bea
h\ =\ -\, \frac r q\,,
\eea
where $r, q$ are coprime positive integers, $1\leq r<q$. Define the { master function}
\bean
\label{mast f}
\Phi(x,z) = \prod_{a=1}^{n}(x-z_a)^{-r/q}
\eean
and  the ${n}$-vector  of hypergeometric  integrals
\bean
\label{Iga}
I^{(\ga)} (z)=(I_1(z),\dots,I_n(z))^\intercal,\qquad I_j=\int_{\ga(z)} \frac{\Phi(x,z)dx}{x-z_j},
\eean
 where  $\ga(z)$ is a Pochhammer contour in $\C-\{z_1,\dots,z_n\}$.

\begin{thm}
\label{thm1.1}

 The vector $I^{(\ga)}(z)$ satisfies  the KZ equations  $\kz(-r/q)$.
\end{thm}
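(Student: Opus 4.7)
The plan is a direct verification by differentiating under the integral sign, using the fact that the master function $\Phi(x,z)$ has the property that $\partial_i \Phi = \frac{r/q}{x-z_i}\Phi$ (since $\log\Phi = -\frac{r}{q}\sum_a\log(x-z_a)$). The Pochhammer contour $\gamma(z)$ is designed so that $\Phi$ returns to its original branch after traversal, and hence $\int_{\gamma(z)} d(\text{anything}\cdot\Phi) = 0$. This fact plays two roles: it lets us differentiate under the integral sign without picking up boundary terms, and it allows integration by parts to eliminate unwanted integrands.

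First I would verify the algebraic constraint $I_1(z)+\dots+I_n(z)=0$. Summing the integrands gives $\Phi(x,z)\sum_j \frac{1}{x-z_j}$, and directly from the formula for $\Phi$ this equals $-\frac{q}{r}\frac{d\Phi}{dx}$, so the sum integrates to zero along $\gamma(z)$.

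Next I would compute $\partial_i I_k$ for $i\neq k$ by differentiating under the integral:
\begin{equation*}
\partial_i I_k \,=\, \frac{r}{q}\int_{\gamma(z)} \frac{\Phi(x,z)\,dx}{(x-z_i)(x-z_k)} \,=\, \frac{r}{q}\cdot\frac{1}{z_i-z_k}\,(I_i-I_k),
\end{equation*}
using the partial fraction decomposition. Since $h=-r/q$, one checks this matches $-h$ times the $k$-th component of $H_i(z)I$, which by the shape of $\Omega_{ij}$ equals $\frac{I_i-I_k}{z_i-z_k}$.

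The main technical step is computing $\partial_i I_i$. A direct differentiation produces $\int_{\gamma(z)}(1+\frac{r}{q})\frac{\Phi}{(x-z_i)^2}dx$, which is not obviously of the right form. The key observation is the identity
\begin{equation*}
\Bigl(1+\tfrac{r}{q}\Bigr)\frac{\Phi}{(x-z_i)^2} \,=\, -\frac{r}{q}\sum_{a\neq i}\frac{\Phi}{(x-z_i)(x-z_a)} - \frac{d}{dx}\!\left(\frac{\Phi}{x-z_i}\right),
\end{equation*}
obtained by expanding $\frac{d}{dx}(\Phi/(x-z_i))$ and using $d\Phi/dx = -\frac{r}{q}\Phi\sum_a \frac{1}{x-z_a}$. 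Integrating over $\gamma(z)$ kills the total derivative, and partial fractions on the remaining sum convert it into $\frac{r}{q}\sum_{a\neq i}\frac{I_a-I_i}{z_i-z_a}$, which is exactly the $i$-th component of $-hH_i(z)I$. This integration-by-parts step is the only nontrivial point; everything else is bookkeeping. The property of the Pochhammer contour that makes the boundary term vanish is the central input.
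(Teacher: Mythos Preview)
Your direct verification is correct. The paper itself does not supply a proof of Theorem~\ref{thm1.1}; it simply records the statement as classical and points to \cite{SV1} and \cite[Section~1.1]{V2}. So there is no in-paper argument to compare against. What you have written is essentially the standard computation one finds in those references: differentiate under the integral using $\partial_i\Phi=\tfrac{r/q}{x-z_i}\Phi$, use partial fractions for the off-diagonal entries, and for the diagonal entry rewrite $(1+\tfrac{r}{q})\Phi/(x-z_i)^2$ as a sum over $a\ne i$ plus a total $x$-derivative, which the Pochhammer contour kills. One small expository quibble: the role of the Pochhammer contour is only in the vanishing of $\int_{\gamma} d(\cdots)$ (single-valuedness of the integrand along $\gamma$); differentiation under the integral sign holds for any compact contour avoiding the $z_i$ and does not rely on that property.
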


Theorem \ref{thm1.1} is a classical statement. 
Much more general algebraic and differential equations satisfied by analogous multidimensional hypergeometric integrals were considered in \cite{SV1}.  Theorem \ref{thm1.1} is discussed as an example in  \cite[Section 1.1]{V2}.

\begin{thm} [{\cite[Formula (1.3)]{V1}}]
\label{thm dim}

If $nr/q$ is not a positive integer, then all local solutions of  the equations $\kz(-r/q)$ have this form.
In particular, the complex vector space of local solutions of the form \eqref{Iga} is $n-1$-dimensional. 

\end{thm}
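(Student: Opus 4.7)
My plan is to exhibit an intrinsically $(n-1)$-dimensional family of hypergeometric sections $I^{(\ga)}$, matching the rank $n-1$ of the bundle $\cV$ that bounds the local solution space from above. First, the algebraic constraint $\sum_j I_j=0$ is automatic for every Pochhammer contour: from $\sum_{j=1}^n(x-z_j)^{-1}=-\tfrac{q}{r}\,\tfrac{d\log\Phi}{dx}$ one has $\sum_j I_j(z)=-\tfrac{q}{r}\int_{\ga(z)}d\Phi=0$, since $\Phi$ is single-valued along a Pochhammer contour by construction, and Theorem \ref{thm1.1} then identifies each $I^{(\ga)}$ as a local flat section of $\nabla^{\on{KZ},-r/q}$. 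Fixing $z^0\in S$, putting $U=\C\setminus\{z^0_1,\dots,z^0_n\}$, and letting $\cL$ denote the rank-one local system on $U$ with monodromy $e^{-2\pi i r/q}$ at each puncture, I obtain a linear period map
\[
\mathrm{Per}\colon H_1(U,\cL)\longrightarrow\{\text{local flat sections of }\nabla^{\on{KZ},-r/q}\text{ near }z^0\},\qquad [\ga]\mapsto I^{(\ga)}(z^0).
\]
Since $0<r/q<1$, the local system $\cL$ is nontrivial, so $H^0(U,\cL)=0$ and the Euler-characteristic identity $\chi(U)=1-n$ forces $\dim_\C H_1(U,\cL)=n-1$, matching the upper bound on the target. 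The theorem reduces to the injectivity of $\mathrm{Per}$.

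Dually, injectivity of $\mathrm{Per}$ is equivalent to the $n$ forms $\om_j:=\tfrac{dx}{x-z^0_j}$ spanning the twisted de Rham cohomology $H^1_\dR(U,\nabla)$, where $\nabla f=df+f\,d\log\Phi$. By partial fractions every algebraic $1$-form on $U$ is cohomologous to a $\C$-linear combination of the $\om_j$ plus a polynomial multiple of $dx$, so it suffices to reduce every $x^{k-1}dx$, $k\geq 1$, to the $\om_j$'s modulo $\nabla$-exact forms. From
\[
\nabla(x^k)\,=\,\bigl(k-\tfrac{nr}{q}\bigr)x^{k-1}dx\;+\;(\text{lower-order polynomial in }x)\,dx\;-\;\tfrac{r}{q}\sum_{j=1}^n z^k_j\,\om_j
\]
one sees that whenever $nr/q\notin\Z_{>0}$ the coefficient $k-nr/q$ is nonzero for every $k\geq 1$, and induction on $k$ expresses each $x^{k-1}dx$ in terms of the $\om_j$'s. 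Combined with the single built-in relation $\sum_j\om_j\equiv-\tfrac{q}{r}\,\nabla(1)\equiv 0$ in $H^1_\dR(U,\nabla)$, the $\om_j$'s yield exactly $n-1$ independent cohomology classes, proving that $\mathrm{Per}$ is an isomorphism.

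The principal obstacle is the inductive cohomological reduction above: the hypothesis $nr/q\notin\Z_{>0}$ is indispensable, since if $nr/q=k\in\Z_{>0}$ the reduction breaks at stage $k$, the form $x^{k-1}dx$ becomes an extra independent class in $H^1_\dR(U,\nabla)$, and the $\om_j$'s span only $n-2$ classes, so $\mathrm{Per}$ ceases to be surjective. The remaining ingredients --- single-valuedness of $\Phi$ along Pochhammer contours, nontriviality of $\cL$, and the Euler-characteristic count --- are essentially formal.
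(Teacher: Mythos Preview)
The paper does not prove this theorem; it is quoted as a known result from \cite{V1}. So there is no proof in the paper to compare yours against. Your argument is essentially correct and is in fact the standard one; moreover, it is parallel to the machinery the paper develops in Section~\ref{sec 2}: your inductive reduction of $x^{k-1}dx$ via $\nabla(x^k)$ is exactly part~(ii) of Lemma~\ref{lm:divisage} specialized to $h=-r/q$, and your implicit claim that the logarithmic classes $[\omega_j]$ span the de Rham cohomology of the open curve $U$ is Proposition~\ref{pr:log_to_open}.

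Two small points deserve attention. First, your sentence ``By partial fractions every algebraic $1$-form on $U$ is cohomologous to a $\C$-linear combination of the $\omega_j$ plus a polynomial multiple of $dx$'' is not literally what partial fractions gives: you obtain terms $(x-z_j)^{-m}dx$ for $m\geq 2$ as well, and these must be reduced using $\nabla\bigl((x-z_j)^{1-m}\bigr)$. The relevant leading coefficient at $z_j$ is $(1-m)-r/q$, which is nonzero for all $m\geq 2$ since $r/q>0$, so the reduction goes through; but you should say so. This is part~(i) of Lemma~\ref{lm:divisage}. Second, the passage ``injectivity of $\mathrm{Per}$ is equivalent to the $\omega_j$ spanning $H^1_{dR}(U,\nabla)$'' rests on the perfectness of the twisted period pairing $H_1(U,\cL)\times H^1_{dR}(U,\nabla)\to\C$, i.e., the de Rham comparison theorem for rank-one local systems on a punctured sphere. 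This is standard, but it is doing real work and should be named.
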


\subsection{Solutions in finite characteristic}\label{ss:solutions_in_finite}

Let $\K$ be a field of finite characteristic $p$.
Consider the equations $\kz(h)$ over $\K$. 

The case $h=0$ is trivial. Assume that  $h\in \F_p\subset \K$, $h\ne 0$.
 Let $\tilde h$ be  the representative of $h$ in $\Z$ such that
\bean
\label{h p}
1\leq \tilde  h\leq p-1.
\eean
Define
\bean
\label{def P}
P(x,z)=\prod_{s=1}^n(x-z_s)\ \in\ \K[x,z].
\eean
The function $P(x,z)^{\tilde h}$ is called the master function.
Consider the vector of polynomials
\bean
\label{Q}
Q(x,z,h) =P(x,z)^{\tilde h} \Big(\frac 1{x-z_1}, \dots,\frac 1{x-z_n}\Big)^\intercal \,=\, \sum_i Q^{(i)}(z,h) \,x^i,
\eean
where 
\bean
\label{Q coor}
Q^{(i)}(z,h)= (Q^{(i)}_1(z,h), \dots, Q^{(i)}_n(z,h))^\intercal
\eean
are vectors of polynomials in $z$ with coefficients in $\F_p$\,.

\begin{thm}[{\cite[Theorem 1.2]{SV2}}] 
\label{thm Fp}

For any positive integer $\ell$, the vector of polynomials
\\
 $Q^{(\ell p-1)}(z,h)$ satisfies the  equations $\kz(h)$ over $\K$\,.

\end{thm}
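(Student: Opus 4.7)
The strategy is the characteristic-$p$ analog of integration by parts: in $\K[x]$, the map $R\mapsto [x^{\ell p - 1}]R$ annihilates the image of $\partial_x$, so it suffices to show that $(\partial_i + h H_i)Q$ and $\sum_k Q_k$, viewed as elements of $\K[x,z]$, are either zero or total $x$-derivatives. This is the mod-$p$ shadow of the vanishing of boundary terms that produces Theorem \ref{thm1.1}.

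The computation rests on two identities coming from logarithmic differentiation,
\[
\partial_i P^{\tilde h} \;=\; -\tilde h\,\frac{P^{\tilde h}}{x - z_i},\qquad
\partial_x P^{\tilde h} \;=\; \tilde h\,P^{\tilde h}\sum_{s=1}^n \frac{1}{x-z_s},
\]
together with the elementary telescoping
\[
\frac{Q_j - Q_i}{z_i - z_j} \;=\; -\,\frac{P^{\tilde h}}{(x-z_i)(x-z_j)}.
\]
The telescoping identifies the off-diagonal Gaudin contribution: for $k\ne i$, $(H_i Q)_k = P^{\tilde h}/((x-z_i)(x-z_k))$, while the first identity gives $\partial_i Q_k = -\tilde h\,P^{\tilde h}/((x-z_i)(x-z_k))$. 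Since $h \equiv \tilde h\pmod p$ in $\K$, the $k$-th entry of $(\partial_i + h H_i)Q$ vanishes identically in $\K[x,z]$.

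For the diagonal entry $k = i$, the same formulas give $\partial_i Q_i = (1-\tilde h)\,P^{\tilde h}/(x-z_i)^2$ and $(H_i Q)_i = -\sum_{j\ne i} P^{\tilde h}/((x-z_i)(x-z_j))$. Combining the leftover $P^{\tilde h}/(x-z_i)^2$ with the missing $j = i$ summand of the Gaudin sum and then applying the second identity yields
\[
(\partial_i Q + h H_i Q)_i \;=\; \frac{P^{\tilde h}}{(x-z_i)^2} \,-\, \frac{\partial_x P^{\tilde h}}{x - z_i} \;=\; -\,\partial_x\!\left(\frac{P^{\tilde h}}{x-z_i}\right) \;=\; -\partial_x Q_i,
\]
which is $\partial_x$-exact. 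Analogously, $\sum_k Q_k = \tilde h^{-1}\,\partial_x P^{\tilde h}$ by the second identity, and $\tilde h$ is invertible in $\F_p$ because $1\le \tilde h\le p-1$.

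Finally, for any $R\in \K[x,z]$ one has $[x^{\ell p - 1}]\,\partial_x R = \ell p\,[x^{\ell p}]\,R = 0$ in $\K[z]$; extracting the coefficient of $x^{\ell p - 1}$ from the two identities above therefore shows that $Q^{(\ell p - 1)}(z,h)$ satisfies the KZ equations and the zero-sum constraint. I do not foresee a genuine obstacle; the only mildly delicate step is the bookkeeping that reassembles the diagonal Gaudin action into the total $x$-derivative $\partial_x Q_i$ by uniting the sum over $j\ne i$ with the absent $j = i$ term.
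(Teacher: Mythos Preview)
Your argument is correct. The paper does not supply its own proof of this statement; it merely cites \cite[Theorem 1.2]{SV2}, and your computation is precisely the standard elementary verification one finds there: show that $(\partial_i + hH_i)Q$ has vanishing off-diagonal entries and $\partial_x$-exact diagonal entry in $\K[x,z]$, that $\sum_k Q_k$ is $\partial_x$-exact, and then extract the coefficient of $x^{\ell p-1}$ using that this functional kills $\partial_x$-exact polynomials in characteristic $p$.

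One small remark on presentation: the intermediate expression $P^{\tilde h}/(x-z_i)^2$ is not a polynomial when $\tilde h=1$, so strictly speaking the diagonal computation passes through $\K(x,z)$ rather than $\K[x,z]$; but since both $(\partial_i Q + hH_iQ)_i$ and $-\partial_x Q_i$ are manifestly polynomials, the identity between them holds in $\K[x,z]$ once verified in the field of fractions. This is harmless and not a gap.
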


The vector   $Q^{(\ell p-1)}(z,h)$ is zero if   $\ell \not\in \big\{1, \dots, \big[\frac{n{\tilde h}}p\big]\big\}$ for degree reasons.
The vectors  
\bean
\label{p hyp}
Q^{(\ell p-1)}(z,h), \qquad 
\ell = 1, \dots, \Big[\frac{n{\tilde h}}p\Big],
\eean
 are called the $p$-hypergeometric solutions of
the equations $\kz(h)$ over $\K$\,.

Notice that if $\frac{n{\tilde h}}p < 1$, then there are no $p$-hypergeometric solutions.

\smallskip
Denote  
$\K[z^p]=\K[z_1^p,\dots,z_{n}^p]$.
The set of all solutions of $\kz(h)$ over $\K$   is a module over the ring  $\K[z^p]$.

\begin{thm}[\cite{SlV}]
\label{thm inde} Let $p>n$. 
Then the  $p$-hypergeometric solutions in \eqref{p hyp}
are linear independent over the ring $\K[z^p]$,
that is, if \,
\bea
\sum_{\ell =1}^{\big[\frac{n{\tilde h}}p\big]}   c_\ell(z)Q^{(\ell p-1)}(z,h)=0
\eea
for some $c_\ell(z)\in\K[z^p]$, then  $c_\ell(z)=0$ for all $\ell$. 

\end{thm}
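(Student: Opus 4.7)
The strategy is to give a very explicit monomial expansion of each coordinate $Q^{(\ell p - 1)}_j(z,h)$ and then to exploit the fact that, because $1\le \tilde h\le p-1$, every exponent appearing lies in the fundamental box $\{0,1,\dots,p-1\}^n$. This makes it impossible for the contributions from different values of $\ell$ to cancel in a $\K[z^p]$-linear combination, and the theorem follows at once.

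\textbf{Step 1.} Since $P(x,z)^{\tilde h}$ vanishes at $x=z_j$, polynomial division gives
\[
Q^{(\ell p - 1)}_j(z,h)\;=\;\sum_{k\ge \ell p}\bigl([x^k]P(x,z)^{\tilde h}\bigr)\,z_j^{\,k-\ell p}.
\]
Expanding $P(x,z)^{\tilde h}=\prod_s(x-z_s)^{\tilde h}$ by the multinomial theorem and collapsing the resulting inner sum over $k$ using the standard identity $\sum_{a=0}^{m}(-1)^a\binom{\tilde h}{a}=(-1)^m\binom{\tilde h - 1}{m}$ (valid for $0\le m\le \tilde h$), I will derive the closed formula
\[
[z^J]\,Q^{(\ell p - 1)}_j(z,h)\;=\;(-1)^{n\tilde h - \ell p}\binom{\tilde h -1}{J_j}\prod_{s\ne j}\binom{\tilde h}{J_s}\qquad\text{whenever }\ |J|=n\tilde h -\ell p,
\]
and $[z^J]Q^{(\ell p - 1)}_j=0$ otherwise.

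\textbf{Step 2.} Because $1\le \tilde h\le p-1$, each binomial above is a binomial of integers strictly smaller than $p$, hence nonzero in $\K$ precisely when $0\le J_j\le \tilde h-1$ and $0\le J_s\le \tilde h$ for $s\ne j$. In particular, the support of $Q^{(\ell p - 1)}_j$ is contained in the fundamental box $\{0,1,\dots,p-1\}^n$.

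\textbf{Step 3.} This is the key combinatorial observation. A monomial $z^N$ in $c_\ell(z)\,Q^{(\ell p - 1)}_j(z,h)$ has exponent $N=K+J$ with $K\in(p\Z_{\ge 0})^n$ contributed by $c_\ell\in\K[z^p]$ and $J$ in the support of $Q^{(\ell p - 1)}_j$. Since every $J_s<p$ by Step 2, the decomposition is forced: $J_s=N_s\bmod p$ and $K_s=N_s-J_s$. Thus $|J|$ is determined by $N$, and the constraint $|J|=n\tilde h-\ell p$ determines $\ell$ uniquely (when a positive integer solution exists at all). Consequently, for any fixed $j$, the polynomials $c_\ell\cdot Q^{(\ell p - 1)}_j$ for different $\ell\in\{1,\dots,L\}$ cannot share a single monomial.

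\textbf{Step 4.} Now assume $\sum_{\ell=1}^{L} c_\ell(z)\,Q^{(\ell p - 1)}(z,h)=0$ with $c_\ell\in\K[z^p]$. Fix any coordinate $j$; by Step 3 each term $c_\ell(z)\,Q^{(\ell p - 1)}_j(z,h)$ must vanish on its own. For any $J$ in the support of $Q^{(\ell p - 1)}_j$ and any $K\in(p\Z_{\ge 0})^n$, the coefficient of $z^{K+J}$ in $c_\ell\,Q^{(\ell p - 1)}_j$ equals $[z^K]c_\ell\cdot[z^J]Q^{(\ell p - 1)}_j$. The second factor is nonzero by Step 2, so $[z^K]c_\ell=0$ for every $K\in(p\Z_{\ge 0})^n$, and hence $c_\ell=0$.

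The main obstacle is the bookkeeping in Step 1: one has to track carefully the two distinct sources of $z_j$, namely the factor $z_j^{k-\ell p}$ and the contributions inside $[x^k]P^{\tilde h}$ coming from $(x-z_j)^{\tilde h}$, in order to reach the clean closed form. Once that formula and the bound $\tilde h\le p-1$ are in hand, the support-disjointness argument is immediate. Notably, the hypothesis $p>n$ in the statement is not actually used in this approach; only $\tilde h\le p-1$ enters, which is automatic from the setup.
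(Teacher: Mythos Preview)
Your proof is correct. The paper does not give its own argument for this theorem; it simply cites \cite{SlV}, so there is no in-paper proof to compare against.

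A few remarks. Your Step~1 formula is right, and in fact follows more directly by writing $Q_j(x,z,h)=(x-z_j)^{\tilde h-1}\prod_{s\ne j}(x-z_s)^{\tilde h}$ and expanding each factor, bypassing the polynomial-division detour. The disjoint-support argument in Step~3 is exactly the right mechanism. One point you leave implicit in Step~4: to conclude $c_\ell=0$ you need $Q_j^{(\ell p-1)}$ to have nonempty support for the chosen~$j$ and every $\ell\in\{1,\dots,[n\tilde h/p]\}$. This does hold, since for such $\ell$ one has $0\le n\tilde h-\ell p\le n\tilde h-p<(n-1)\tilde h$ (using $p>\tilde h$), so one may take $J_j=0$ and distribute $n\tilde h-\ell p$ among the remaining $n-1$ coordinates, each bounded by $\tilde h$; the resulting coefficient is a product of binomials $\binom{\tilde h}{\cdot}$ with top entry $<p$, hence nonzero in $\K$. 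With this filled in, your observation that the hypothesis $p>n$ is never invoked is valid: only $1\le\tilde h\le p-1$ enters the argument.
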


Consider the dual equations $\kz(-h)$ over $\K$.
 The previous constructions give us the following corollary.

\begin{cor}
\label{cor -h}
The vectors
\bean
\label{p -h hyp}
Q^{(\ell p-1)}(z,-h), \qquad \ell = 1, \dots, \Big[\frac{n(p-\tilde h)}p\Big],
\eean
satisfy the equations $\kz(-h)$. If $p>n$, then these solutions are linear independent over $\K[z^p]$. 
\end{cor}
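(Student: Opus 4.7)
The plan is to deduce this corollary directly from Theorems \ref{thm Fp} and \ref{thm inde} by substituting $-h$ in place of $h$. The first step is to identify the standard representative $\widetilde{-h}\in\{1,\dots,p-1\}$. Since $1\leq\tilde h\leq p-1$ by \eqref{h p}, the integer $p-\tilde h$ also lies in this range and reduces to $-h\pmod p$, so $\widetilde{-h}=p-\tilde h$. Consequently the master function attached to $-h$ is $P(x,z)^{p-\tilde h}$, and the $p$-hypergeometric vectors for $\kz(-h)$, read off from the definitions \eqref{def P}--\eqref{Q coor} with $-h$ in place of $h$, are precisely the vectors $Q^{(\ell p-1)}(z,-h)$.

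Next I would apply Theorem \ref{thm Fp} with parameter $-h$. It asserts that $Q^{(\ell p-1)}(z,-h)$ solves $\kz(-h)$ for every positive integer $\ell$. By the degree count already noted after the statement of Theorem \ref{thm Fp}, such a vector is forced to vanish unless $\ell\in\{1,\dots,[n\widetilde{-h}/p]\}=\{1,\dots,[n(p-\tilde h)/p]\}$. This produces the first assertion of the corollary.

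For the linear-independence assertion I would invoke Theorem \ref{thm inde} verbatim with $h$ replaced by $-h$, using the standing hypothesis $p>n$, which then gives linear independence of the vectors $Q^{(\ell p-1)}(z,-h)$ over $\K[z^p]$ in the stated range of $\ell$. The whole argument is formal and no genuine obstacle arises: Theorems \ref{thm Fp} and \ref{thm inde} are stated for an arbitrary nonzero element of $\F_p$, and $-h$ is such an element whenever $h$ is. The only bookkeeping point — the identification $\widetilde{-h}=p-\tilde h$, which shifts the upper index range from $[n\tilde h/p]$ to $[n(p-\tilde h)/p]$ — is immediate from \eqref{h p}.
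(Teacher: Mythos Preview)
Your proof is correct and is precisely the argument the paper intends: it states that the corollary follows from ``the previous constructions,'' meaning Theorems \ref{thm Fp} and \ref{thm inde} applied with $-h$ in place of $h$, using $\widetilde{-h}=p-\tilde h$.
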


The total number of $p$-hypergeometric solutions of the equations $\kz(h)$ and $\kz(-h)$ equals
\bean
\label{nu s}
\Big[\frac{n{\tilde h}}p\Big] + \Big[\frac{n(p-{\tilde h})}p\Big] = n-1 = \dim V.
\eean

\subsection{What is done in this paper?}
There are two cases:  $h\in \K \backslash \bF_p$ and  $h\in \bF_p$\,. 

\smallskip

If  $h\in \K \backslash \bF_p$, we prove that the KZ connection is irreducible and, in particular, the KZ 
equations have no formal solutions. 
This is a striking difference with the picture over $\bC$. 
  
If $h\in \bF_p$\,, we show that all solutions are $p$-hypergeometric.  
We shall discuss each case in more detail in the remaining part of the introduction.

\subsubsection{Case  $h\in \K \backslash \bF_p$.}

Let $a=(a_1, \dots ,a_n)$ be a $\K$-point of $S$. Recall that a 
formal solution of the KZ equations at $a$ over $\K$ is a vector $I
\in \K[[(z_1-a_1, \dots  , z_n-a_n)]]^{\oplus n}$ satisfying equations \eqref{KZ}. 
In Corollary \ref{cor:no_formal_solutions} we prove the following statement.

\begin{thm} Let $p$ be a prime number that does not divide $n$.
    Assume that $h\in \K \backslash \bF_p$\,. Then, for every $a\in S(\K)$, the equations 
$\kz(h)$ have no formal solutions at $a$ over $\K$.
\end{thm}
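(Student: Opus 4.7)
The plan is to exploit the $p$-curvature of the KZ connection. Since $\partial_i^p$ acts as zero on the formal power series ring $\K[[z-a]]$, the $p$-curvature in direction $\partial_i$ equals $\psi_i := (\nabla^{\on{KZ},h}_i)^p$, an $\cO_S$-linear endomorphism of $\cV$ that annihilates every flat section. Hence any formal solution $I \in \K[[z-a]]^{\oplus n}$ must satisfy $\psi_i(z)\, I(z) = 0$ as a formal identity at $a$, for each $i = 1, \dots, n$.

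Writing $I = \sum_\alpha I_\alpha (z-a)^\alpha$, the $\alpha$-coefficient of $\psi_i(z)\, I(z) = 0$ reads $\sum_{\beta + \gamma = \alpha} \psi_i^{(\beta)}(a)\, I_\gamma = 0$, where $\psi_i^{(\beta)}(a)$ denotes the $\beta$-th Taylor coefficient of $\psi_i$ at $a$. A straightforward induction on $|\alpha|$ then shows that every $I_\alpha$ lies in the common kernel
\[
K(a) \;:=\; \bigcap_{i=1}^n \Ker \psi_i(a) \;\subseteq\; V.
\]
Thus it suffices to prove $K(a) = 0$ for every $a \in S(\K)$ when $h \notin \bF_p$.

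To analyze $\psi_1(a)$, restrict $\nabla^{\on{KZ},h}$ to the line $z_j = a_j$ ($j \neq 1$), yielding a one-variable matrix ODE $\partial_1 I + h \sum_{j \neq 1} \Omega_{1j}/(z_1 - a_j)\, I = 0$: regular at $z_1 = a_1$, with regular singular points at $z_1 = a_j$ ($j \geq 2$) whose residues $h\,\Omega_{1j}$ act on $V$ with nonzero eigenvalue $-2h$. Iterating $\nabla_1$ on a constant section $v \in V$ produces a formal expansion in $z_1 - a_1$, and a mod-$p$ computation using $\prod_{k=0}^{p-1}(x + k) = x^p - x$ in characteristic $p$ extracts a factor of the form $(2h)^p - 2h$, nonzero precisely when $h \notin \bF_p$ in odd characteristic. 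The hypothesis $p \nmid n$ controls the dimension of $V$ and, in the $p=2$ case, forces $n$ odd so that a parallel computation via a different coordinate direction or linear combination yields the analogous obstruction.

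The main obstacle is lifting this scalar non-vanishing to the matrix statement that $\psi_1(a)$, or the intersection $K(a)$, is zero on the $(n-1)$-dimensional space $V$. Since the $\Omega_{1j}$ do not commute on $V$, iterating $\nabla_1^p\, v$ requires careful combinatorial bookkeeping; a natural tool is the underlying $\mathfrak{sl}_2$-representation structure on $V$, or a symmetrization argument using the Shapovalov pairing \eqref{eq:dualitySha} to relate the calculations for $h$ and $-h$. Once $K(a) = 0$ is established, the induction forces $I \equiv 0$, completing the proof.
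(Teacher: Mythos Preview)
Your argument has a genuine gap: you reduce to proving $K(a)=\bigcap_i\ker\psi_i(a)=0$ but do not prove it, and you explicitly flag this as ``the main obstacle''. The sketch you offer---restricting to a line, invoking the eigenvalue $-2h$ of a residue $h\Omega_{1j}$, and extracting a factor $(2h)^p-2h$---treats only a one-dimensional eigenspace and does not control the full $(n-1)$-dimensional $V$; the non-commuting residues make a direct matrix computation of $\psi_1(a)$ genuinely hard, and for $p=2$ the scalar $(2h)^p-2h$ vanishes identically, so even the heuristic collapses. A smaller point: the induction as you phrase it (``every $I_\alpha$ lies in $K(a)$'') does not go through without already assuming $K(a)=0$, since there is no reason for $\psi_i^{(\beta)}(a)$ to annihilate $K(a)$; run under that hypothesis it is fine. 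Finally, note that $K(a)=0$ is a priori \emph{stronger} than the theorem: Cartier descent identifies the vanishing of formal flat sections with the vanishing of the $\cO$-submodule $\bigcap_i\ker\psi_i$ over the formal disk, but the fiber $K(a)$ of the intersection at the closed point can be strictly larger than the fiber of that submodule.

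The paper avoids any pointwise analysis of the $p$-curvature on $\cV$. It identifies the KZ local system with the Gauss--Manin system $E$ attached to the rank-one logarithmic local system $\cP^h$ on $\bP^1_S$, and shows that the \emph{absolute} de Rham cohomology $H^\bullet_{dR}(U,\cP^{h,\flat}|_U)$ vanishes via Ogus's generalized Cartier isomorphism: the $p$-curvature of $\cP^h$ equals $(h^p-h)\,\eta$, which is nowhere zero on $U$ once $h\notin\bF_p$, so the associated $F$-Higgs complex is acyclic. This kills global flat sections of $E^\flat$, and a short base-change lemma (flat sections over the formal disk are obtained from global ones by tensoring with $\hat A^p$) then rules out formal flat sections at every $a\in S(\K)$.
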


In fact, we prove a stronger result. Consider the vector space $V\otimes  \K(z)$ over the field $\K(z)=\K(z_1,\dots , z_n)$ of rational functions regular on $S$.   The KZ connection $\nabla^{\on{KZ}, h}$ defines a $\K$-linear action 
on $V\otimes  \K(z)$ of the algebra $D_S$ of differential operators on $S$:
 for each $i$, the operator $\frac{\partial}{\partial z_i}$ acts by $\nabla^{\on{KZ}, h}_i$. 
\begin{thm}\label{thm:intro_irr} Assume that $p>n$ and  $h\in \K \backslash \bF_p$. Then the $D_S$-module  $V\otimes  \K(z)$ is irreducible.
\end{thm}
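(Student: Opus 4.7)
My strategy exploits the $p$-curvature of $\nabla^{\on{KZ},h}$, which gives $n$ commuting $\mathcal{O}_S$-linear endomorphisms $\psi_i = (\nabla^{\on{KZ},h}_i)^p$ of $\cV$, each of which commutes with the entire connection. Consequently, any $D_S$-submodule of $V\otimes \K(z)$ is automatically invariant under every $\psi_i$. The plan is therefore to describe the joint spectrum of $\{\psi_i\}$ and to show that, for $h\notin \F_p$, this spectrum is so generic that no proper simultaneously-invariant $\K(z)$-subspace can exist.

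For the joint spectrum I would use the explicit formula for the $p$-curvature of the KZ connection that one expects to have been established earlier in the paper. The expected form is $\psi_i = (h^p - h)\, N_i(z)$ for certain $\mathcal{O}_S$-linear endomorphisms $N_i$; the scalar $h^p - h$ vanishes precisely when $h\in \F_p$, and this is the essential ``irrationality witness'' that activates the nontrivial part of the $p$-curvature. After base change to $\overline{\K(z)}$, I would verify, using $p>n$, that the $\psi_i$ simultaneously diagonalize with $n-1$ pairwise distinct joint eigenvalues, parametrized by the $n-1$ roots of the Bethe polynomial $\sum_{a=1}^n \prod_{b\ne a}(x - z_b)$, i.e.\ by the critical points in $x$ of the master function $\Phi(x,z) = \prod_a (x-z_a)^{h}$. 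This is the characteristic-$p$ incarnation of the steepest-descent principle advertised in the abstract.

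Granted simple joint spectrum, a hypothetical proper $D_S$-submodule $W \subset V \otimes \K(z)$ would, after base change to $\overline{\K(z)}$, decompose as a direct sum of joint $\psi$-eigenlines, and this decomposition must be $\Gal(\overline{\K(z)}/\K(z))$-stable. Irreducibility of $V \otimes \K(z)$ then reduces to the irreducibility of the Bethe polynomial over $\K(z)$, which under the hypothesis $p>n$ follows from the standard specialization argument: the Bethe polynomial stays separable of degree $n-1$ modulo $p$, and its Galois group is the full symmetric group $S_{n-1}$, just as over $\bC$.

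The main obstacle, and the technical heart of the argument, is the explicit matching of the $p$-curvature eigenvalues with the Bethe roots modulo $p$, together with the verification that the scalar factor $h^p - h$ really survives intact in the formula for $\psi_i$ without cancellations or polar degeneracies that could reduce the effective spectrum below $n-1$ distinct values. A subtlety worth emphasizing is that over $\overline{\K(z)}$ the module $V \otimes \K(z)$ genuinely does split into rank-one summands, since each joint $\psi$-eigenline is automatically $\nabla$-invariant; irreducibility over $\K(z)$ is therefore an arithmetic phenomenon, and the role of $h\notin \F_p$ is precisely to force these rank-one summands to be defined only over a nontrivial extension, so that Galois transitivity on the Bethe roots prevents them from descending to $\K(z)$.
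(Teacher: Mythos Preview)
Your approach is essentially the same as the paper's, and your identification of the key mechanism is correct: the $p$-curvature operators act with joint eigenvalues $(h^p-h)/(z_i-x_j)^p$ indexed by the critical points $x_j$ of $P(x)=\prod_a(x-z_a)$, and irreducibility reduces to transitivity of the Galois action on these critical points, i.e., to irreducibility of $P'_x$ over $\K(z)$. The paper packages this via the steepest-descent isomorphism (Theorem~\ref{th:main_char_p_generic_h}), which identifies $E_{\bar h,\eta}$ with the function field $\L$ of the critical locus so that the subalgebra $A\subset D_S$ generated by $\cO_S$ and the $\psi_i$ acts by multiplication in $\L$; irreducibility then follows from (a) $\L$ is a field, and (b) $\L$ is generated over $\K(z)$ by the elements $(z_i-x)^{-p}$. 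Your eigenvalue/Galois phrasing replaces (b) with the simpler observation that distinct critical points have distinct eigenvalue tuples, which is a genuine simplification since $(z_1-x)^{-p}$ already separates them.

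The gap is step (a). You assert that irreducibility of the Bethe polynomial over $\K(z)$ ``follows from the standard specialization argument'' and that ``its Galois group is the full symmetric group $S_{n-1}$, just as over~$\bC$.'' Neither claim is justified: specializing $z$ to a point of $S(\K)$ need not preserve irreducibility in characteristic $p$, and the paper neither computes the Galois group nor claims it is $S_{n-1}$ (only transitivity is needed anyway). The paper's Lemma~\ref{lem:irred_crit_locus} proves irreducibility by a genuinely different route: one considers the two covers of the space $\bA^n_\K$ of monic degree-$n$ polynomials given by (i) ordering the roots of $P$ and (ii) choosing a root of $P'$, observes that their ramification divisors $\mathrm{Disc}_P$ and $\mathrm{Disc}_{P'}$ meet in codimension $\geq 2$, and then invokes purity of the branch locus together with the fact that $\bA^n_\K$ has no nontrivial tame \'etale covers of degree $<p$. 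The hypothesis $p>n$ enters precisely here, to guarantee tameness of a cover of degree $\leq n-1$. This argument is the actual technical heart of the theorem and is not a routine specialization.
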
 
We refer the reader to Theorem \ref{thm:irreducibility} for a proof.

\subsubsection{Case  $h\in \bF_p$.}
In this case we prove the following result.

\begin{thm}\label{thm:main3_introduction}

 Let $p$ be a prime number 
 that does not divide $n$, and $h\in \bF_p\subset \K $.
Then the bundle  $(\cV, \nabla^{\on{KZ}, h})$
has a trivial flat subbundle $\mc U$ of rank $\Big[\frac{n\tilde h}p\Big]$ 
such that the quotient $ \cV/\mc U$ is also a trivial flat bundle. 
The $p$-hypergeometric solutions of the equations $\kz(h)$ form a flat basis for $\mc U$.
The $p$-hypergeometric solutions of the equations $\kz(-h)$, regarded  
via Shapovalov form \eqref{eq:dualitySha}  as flat sections  of the dual bundle $\cV^*$, 
annihilate $\mc U$ and constitute a flat basis for $\big(\cV/\mc U\big)^*$. Finally, 
if $\mc U\ne 0$, then every flat section of  $\cV$ belongs to $\mc U$.  
\end{thm}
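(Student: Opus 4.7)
The core of the argument is the mutual orthogonality
\[
(Q^{(\ell p - 1)}(z,h),\, Q^{(\ell' p - 1)}(z,-h)) \;=\; 0
\qquad (1 \le \ell \le [n\tilde h/p],\ 1\le \ell' \le [n(p-\tilde h)/p])
\]
under the Shapovalov form. I would package the pairings into a generating function:
\[
(u(x,z),\, v(y,z)) \;=\; P(x,z)^{\tilde h} P(y,z)^{p - \tilde h}\! \sum_{j=1}^n \frac{1}{(x - z_j)(y - z_j)} \;=\; -\frac{1}{\tilde h}\cdot\frac{(\partial_x + \partial_y)A(x,y)}{x - y},
\]
where $A(x,y) = P(x,z)^{\tilde h} P(y,z)^{p - \tilde h}$, via partial fractions and the characteristic-$p$ identity $p-\tilde h = -\tilde h$. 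The decisive point is that $A(x,x) = P(x,z)^p$ is a $p$-th power, so its derivative vanishes. Writing $A(x,y) = A(x,x) + (y - x) B(x,y)$ for the polynomial $B$ gives $(\partial_x + \partial_y)A = (y - x)(\partial_x + \partial_y)B$, hence $(u,v) = (1/\tilde h)(\partial_x + \partial_y)B$. The coefficient of $x^{\ell p - 1} y^{\ell' p - 1}$ in this is $\ell p \cdot [x^{\ell p} y^{\ell' p - 1}] B + \ell' p \cdot [x^{\ell p - 1} y^{\ell' p}] B$, which vanishes modulo~$p$.

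Set $u_\ell = Q^{(\ell p - 1)}(z,h)$ and $v_{\ell'} = Q^{(\ell' p - 1)}(z,-h)$, and let $\mc U = \sum_\ell \cO_S\, u_\ell \subset \cV$ and $\mc U^\perp = \bigcap_{\ell'} \mc W_{v_{\ell'}}$, the common annihilator of the $v_{\ell'}$ under the Shapovalov pairing. Theorem~\ref{thm inde} gives $\K[z^p]$-linear independence of the $u_\ell$ (and of the $v_{\ell'}$). A maximality argument upgrades this to $\K(z)$-linear independence: from any $\K[z]$-relation $\sum_\ell f_\ell u_\ell = 0$, decompose $f_\ell = \sum_{\beta \in \{0,\dots,p-1\}^n} z^\beta h_{\ell,\beta}(z^p)$ and apply $\partial^{\beta^*}$ for $\beta^*$ componentwise maximal among indices with some $h_{\ell,\beta^*}\ne 0$; flatness of the $u_\ell$ yields $\sum_\ell \beta^*!\, h_{\ell,\beta^*}(z^p)\, u_\ell = 0$, contradicting Theorem~\ref{thm inde}. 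Hence (passing to saturations if necessary) $\mc U$ is a subbundle of rank $r = [n\tilde h/p]$ and $\mc U^\perp$ is a subbundle of rank $n-1-s = r$. The orthogonality gives $\mc U \subseteq \mc U^\perp$, and matching ranks forces equality. The $u_\ell$ then form a flat trivialization of $\mc U$, while the Shapovalov pairing induces a perfect pairing $\cV/\mc U \times \sum_{\ell'} \cO_S\, v_{\ell'} \to \cO_S$, exhibiting $(\cV/\mc U)^*$ as a trivial flat bundle with basis $\{v_{\ell'}\}$.

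For the final uniqueness statement, suppose $\mc U \ne 0$ and let $w$ be any flat section of $\nabla^{\on{KZ},h}$. The pairings $c_{\ell'}(z) := (w(z), v_{\ell'}(z))$ satisfy $\partial_i c_{\ell'} = 0$ for all $i$; if all $c_{\ell'}=0$, then $w \in \mc U^\perp = \mc U$, and being flat in the trivial bundle $\mc U$ it is a $\K[z^p]$-combination of the $u_\ell$. To show vanishing, I would use the scaling symmetry $z \mapsto \lambda z$: decomposing $w = \sum_d w_d$ into weight-homogeneous flat pieces, $(w_d, v_{\ell'})$ is homogeneous of weight $d + (n-\ell')p - n\tilde h$, and since $v_{\ell'}$ has weight $\not\equiv 0 \pmod p$ (using $p \nmid n\tilde h$), only $w_d$ with $d \equiv n\tilde h \pmod p$ can contribute. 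The main technical obstacle I anticipate is controlling possible $1/\Delta$-singularities of a general flat section $w \in V \otimes \K[z, 1/\Delta]$, so that the weight and degree constraints can conclusively force every $c_{\ell'}$ to vanish. I expect this requires either an explicit computation of the $p$-curvature of $\nabla^{\on{KZ},h}$, or a direct regularity analysis of flat sections that leverages the pole structure of the Gaudin Hamiltonians along the diagonals.
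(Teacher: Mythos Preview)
Your generating-function proof of the orthogonality $(Q^{(\ell p-1)}(z,h),Q^{(mp-1)}(z,-h))=0$ is correct and cleaner than the degree-counting argument in Appendix~\ref{App 1}. The identity $(u,v)=\tfrac{1}{\tilde h}(\partial_x+\partial_y)B$, with $B=(A(x,y)-A(x,x))/(y-x)$ and $A(x,x)=P(x)^p$, is a nice observation.

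However, the rest of the argument has two real gaps.

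\emph{Subbundle structure.} To conclude that $\mc U=\sum_\ell \cO_S\,u_\ell$ is a subbundle of rank $r=[n\tilde h/p]$ with locally free quotient, you need the $u_\ell$ to be linearly independent at every point of $S$, not just over $\K(z)$. Your appeal to Theorem~\ref{thm inde} already imposes the extra hypothesis $p>n$ (the statement you are proving only assumes $p\nmid n$), and even granting it, your differentiation trick with $\nabla^{\beta^*}$ only upgrades $\K[z^p]$-independence to $\K(z)$-independence. Passing to the saturation $\mc U^\perp$ does not help: over the base $S$ of dimension $n>1$, the quotient $\cV/\mc U^\perp$ is merely torsion-free, not locally free, so $\mc U^\perp$ need not be a subbundle. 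Pointwise independence is in fact Corollary~\ref{cor lin ind}, a \emph{consequence} of the theorem, so invoking it here is circular. The paper avoids this entirely: it identifies $\mc U$ with a term of the conjugate filtration on $E_{\bar h}=H^1_{dR,\log}(\bP^1_S/S,\cP^{\bar h})$, which via the Cartier isomorphism is the Frobenius pullback of an explicit line-bundle cohomology group on $\bP^1$ (Lemma~\ref{lem:localcohomologycomp}) and hence a vector bundle by construction; the exact sequence~\eqref{eq:conjugatetwistedlog} then gives the subbundle property and the triviality of both pieces simultaneously.

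\emph{All flat sections lie in $\mc U$.} As you acknowledge, the homogeneity argument is incomplete: it disposes only of weight components $w_d$ with $d\not\equiv n\tilde h\pmod p$, and says nothing about the remaining ones, which is exactly where the content lies. Controlling poles along the diagonals will not close this; one genuinely needs to know that the $p$-curvature operators hit the whole first piece of the conjugate filtration. The paper gives two arguments. In Theorem~\ref{th:main_result_for_rational_h} this comes down to the vanishing of the map $\alpha$ in diagram~\eqref{dia:relative_v_absolute}, proved by showing a concrete connecting homomorphism on $\bP^1$ is injective. In Theorem~\ref{th:flat_sections} it comes from Katz's identification of the $p$-curvature with $-F^*(\mathrm{KS}_\nabla)$ together with the key formula $\nabla_{\partial/\partial z_k}[\mu_1^{(r)}]=\tfrac{r}{q}[\omega_k^{(r)}]$ (Corollary~\ref{cor 5.3}), which shows that the classes $\Psi_k([\mu_1^{(r)}])$ already span the image of $F^*_{X/S}$. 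Either of these is the missing ingredient, and I do not see an elementary substitute.
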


We suggest two proofs of Theorem \ref{thm:main3_introduction}: see Theorems  \ref{th:main_result_for_rational_h} and  \ref{thm 7}.

Theorem \ref{thm:main3_introduction} implies the following corollaries.

\begin{cor} 
\label{cor lin ind}
Assume that $p$ does not divide $n$.
For $a\in S(\overline{\K})$,
the vectors $Q^{(\ell p-1)}(a, h) \in V\ox \K,$ 
 $\ell = 1, \dots, \big[\frac{n{\tilde h}}p\big]$,
 are linearly independent.

\end{cor}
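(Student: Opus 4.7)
The plan is to derive this corollary as an immediate consequence of Theorem \ref{thm:main3_introduction}. First, I would invoke that theorem to obtain the trivial flat subbundle $\mc U\subset \cV$ of rank $m:=\big[\frac{n\tilde h}{p}\big]$, together with the statement that the $p$-hypergeometric solutions $Q^{(\ell p-1)}(z,h)$, $\ell = 1,\dots,m$, form a flat basis of $\mc U$.

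Next, I would observe that a flat basis of a \emph{trivial} rank-$m$ bundle is nothing other than a trivialization: the sections $Q^{(\ell p-1)}(z,h)$ provide an isomorphism $\cO_S^{\oplus m}\iso \mc U$ of $\cO_S$-modules. Since such an isomorphism induces an isomorphism on every geometric fiber, evaluating at any $a\in S(\overline{\K})$ yields a basis of $\mc U_a\subset V\otimes \overline{\K}$; the vectors $Q^{(\ell p-1)}(a,h)$ are therefore linearly independent in $V\otimes \overline{\K}$.

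I do not anticipate any obstacle beyond correctly applying Theorem \ref{thm:main3_introduction}. The strengthening over Theorem \ref{thm inde} from generic linear independence (over $\K[z^p]$) to pointwise linear independence (at every $a\in S(\overline{\K})$) is powered entirely by the assertion that $\mc U$ is a genuine flat \emph{subbundle} rather than merely the coherent subsheaf generated by the $Q^{(\ell p-1)}$: were the rank of that subsheaf allowed to drop at special points, pointwise linear independence could well fail there despite holding over the function field. Thus the real content of the corollary is not an independent argument but the correct packaging of Theorem \ref{thm:main3_introduction}.
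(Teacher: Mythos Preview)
Your proposal is correct and is exactly the intended argument: the paper lists this statement as an immediate corollary of Theorem~\ref{thm:main3_introduction} without giving a separate proof, and your unpacking of ``flat basis of a trivial subbundle $\Rightarrow$ trivialization $\Rightarrow$ fiberwise linear independence'' is precisely what is meant.
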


\begin{cor} [Orthogonality]
\label{thm ort} 
Assume that $p$ does not divide $n$ and $h\in\F_p,$  $h\ne 0$. Then the  $p$-hypergeometric solutions of the equations $\kz(h)$ and $\kz(-h)$ are orthogonal
under the Shapovalov form. Namely,
for any $\ell \in \big\{1, \dots, \big[\frac{n\tilde h}p\big]\big\}$ and $
m \in \big\{1, \dots, \big[\frac{n(p-\tilde h)}p\big]\big\}$
we have
\bean
\label{ort}
\big(Q^{(\ell p-1)}(z,h), Q^{(m p-1)}(z,-h)\big) = 0.
\eean

\end{cor}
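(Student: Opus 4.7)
The plan is to deduce Corollary \ref{thm ort} directly from Theorem \ref{thm:main3_introduction}, which (as the excerpt notes) will already have been established twice in the paper. No independent computation is needed; the work is just to trace through how the orthogonality is encoded in the structure theorem.

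First, I would invoke Theorem \ref{thm:main3_introduction} to identify the subbundle $\mc U \subset \cV$ of rank $[n\tilde h/p]$ together with a flat basis: the $p$-hypergeometric solutions
\[
Q^{(\ell p - 1)}(z,h), \qquad \ell = 1, \dots, \Big[\tfrac{n\tilde h}{p}\Big],
\]
of the equations $\kz(h)$. In particular, for each such $\ell$, the vector $Q^{(\ell p - 1)}(z,h)$ is a flat section of $(\cV, \nabla^{\on{KZ},h})$ lying in $\mc U$.

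Next, I would use the Shapovalov duality \eqref{eq:dualitySha}: a flat section $y(z)$ of $(\cV, \nabla^{\on{KZ}, -h})$ corresponds, under the isomorphism induced by the Shapovalov form, to a flat section of the dual bundle $(\cV, \nabla^{\on{KZ}, h})^*$, namely the linear functional $x \mapsto (x, y(z))$. Applying this to the $p$-hypergeometric solutions $Q^{(mp-1)}(z,-h)$ of $\kz(-h)$ produces flat sections of $\cV^*$. The last assertion of Theorem \ref{thm:main3_introduction} states precisely that these functionals annihilate $\mc U$, equivalently, they factor through the quotient map $\cV \to \cV/\mc U$ and form a flat basis of $(\cV/\mc U)^*$.

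Finally, combining the two steps: for any $\ell \in \{1,\dots,[n\tilde h/p]\}$ and any $m \in \{1,\dots,[n(p-\tilde h)/p]\}$, the vector $Q^{(\ell p-1)}(z,h)$ lies in $\mc U$, while the functional $x\mapsto (x, Q^{(mp-1)}(z,-h))$ annihilates $\mc U$. Evaluating this functional on $Q^{(\ell p-1)}(z,h)$ therefore gives
\[
\big(Q^{(\ell p-1)}(z,h),\, Q^{(mp-1)}(z,-h)\big) \;=\; 0,
\]
which is the claimed identity \eqref{ort}. There is no serious obstacle; the whole content of the corollary is already packaged in the annihilation statement of Theorem \ref{thm:main3_introduction}, and the only subtlety is to match bookkeeping: the ranges for $\ell$ and $m$ coincide with the counts $[n\tilde h/p]$ and $[n(p-\tilde h)/p]$, whose sum equals $n-1 = \dim V$ by \eqref{nu s}, which is consistent with $\mc U$ and $(\cV/\mc U)^*$ having complementary ranks.
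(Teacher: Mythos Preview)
Your proposal is correct and follows essentially the same approach as the paper: Corollary~\ref{thm ort} is stated immediately after Theorem~\ref{thm:main3_introduction} as a direct consequence, and your argument---that the $Q^{(\ell p-1)}(z,h)$ lie in $\mc U$ while the Shapovalov functionals associated with the $Q^{(mp-1)}(z,-h)$ annihilate $\mc U$---is exactly how the implication goes. The paper also supplies a second, elementary proof in Appendix~\ref{App 1} (a degree-counting argument valid without the hypothesis that $p\nmid n$), but that is an alternative route, not the primary one you are asked to compare against.
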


An elementary proof  of Corollary \ref{thm ort}, valid for all $p$ and $n$, is given in Appendix 
\ref{App 1}.

\begin{cor}
\label{thm comp}

Assume that $p$ does not divide $n$ and that $\big[\frac{n \tilde h}{p}\big]>0$.
Then the space of global solutions of the equations 
$\kz(h)$ over $\K$ is a free module over the algebra 
$\K [z_i^p, (z_i-z_j)^{-p}, 1\leq i< j \leq n]$ of rank $\big[\frac{n\tilde h}{p}\big]$.
The $p$-hypergeometric solutions in \eqref{p hyp} 
form a basis of this module.

\end{cor}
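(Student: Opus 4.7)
The plan is to derive Corollary \ref{thm comp} as a direct consequence of Theorem \ref{thm:main3_introduction}, after identifying the correct ring of ``horizontal'' functions on $S$. Set $r = \big[\frac{n\tilde h}{p}\big]$ and let $R = \K[z_i^p,(z_i-z_j)^{-p},\ 1\le i<j\le n]$.

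\smallskip

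\emph{Step 1: identify $R$ as the ring of horizontal regular functions on $S$.} The derivations $\partial_i$ all vanish on $z_j^p$ and on $(z_i-z_j)^{-p}$, since $\partial_i(z_i-z_j)^p = p(z_i-z_j)^{p-1}=0$ in characteristic $p$ and consequently $\partial_i\bigl((z_i-z_j)^p\bigr)^{-1}=0$. Conversely, any $f$ in $\cO(S)=\K[z_1,\dots,z_n,(z_i-z_j)^{-1}]$ with $\partial_i f=0$ for all $i$ lies in $R$, because $\cO(S)$ is free over $R$ with basis the monomials $z^{\alpha}\prod(z_i-z_j)^{-\beta_{ij}}$ with $0\le\alpha_k<p$ and $0\le \beta_{ij}<p$, and derivatives detect such monomials separately. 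Thus the module of global flat sections of \emph{any} flat bundle on $S$ is naturally a module over $R$.

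\smallskip

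\emph{Step 2: extract the module structure from Theorem \ref{thm:main3_introduction}.} By Theorem \ref{thm:main3_introduction}, under the hypothesis that $p\nmid n$, the bundle $(\cV,\nabla^{\on{KZ},h})$ contains a trivial flat subbundle $\mc U$ of rank $r$, and the $p$-hypergeometric sections
\begin{equation*}
Q^{(\ell p-1)}(z,h),\qquad \ell=1,\dots,r,
\end{equation*}
form a flat basis of $\mc U$. In particular they are flat sections of $\cV$ that, at every point of $S$, form a basis of the fiber of $\mc U$. Therefore they trivialize $\mc U$ as a flat bundle on $S$, i.e., they give an isomorphism $R^{\,r}\iso \Gamma(S,\mc U)^{\nabla}$ of $R$-modules by $(c_1,\dots,c_r)\mapsto \sum_\ell c_\ell\, Q^{(\ell p-1)}(z,h)$. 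Hence the $R$-module of global flat sections of $\mc U$ is free of rank $r$ on the $p$-hypergeometric basis.

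\smallskip

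\emph{Step 3: every global flat section lies in $\mc U$.} The hypothesis $r>0$ places us in the regime where the last sentence of Theorem \ref{thm:main3_introduction} applies: every flat section of $\cV$ belongs to $\mc U$. Applied to global sections over $S$, this identifies the space of global solutions of $\kz(h)$ over $\K$ with the space of global flat sections of $\mc U$, completing the proof by Step~2.

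\smallskip

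There is essentially no hard step once Theorem \ref{thm:main3_introduction} is available; the only point requiring care is the algebraic lemma in Step~1 that $R$ is exactly the $\partial_i$-annihilator inside $\cO(S)$, and hence the ``natural'' ring of scalars for modules of global flat sections. Everything else is a direct translation of ``trivial flat subbundle with flat basis'' into ``free module of global flat sections.''
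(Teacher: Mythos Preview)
Your proof is correct and follows the same route as the paper, which treats the corollary as an immediate consequence of Theorem \ref{thm:main3_introduction} (cf.\ also Theorem \ref{thm 7}, where the statement is proven verbatim over $\bF_p$). One small correction in Step~1: the monomials $z^{\alpha}\prod(z_i-z_j)^{-\beta_{ij}}$ with $0\le\alpha_k<p$, $0\le\beta_{ij}<p$ are far too many to be a basis of $\cO(S)$ over $R$; the correct basis is simply $\{z^{\alpha}:0\le\alpha_k<p\}$, since $(z_i-z_j)^{-1}=(z_i-z_j)^{p-1}\cdot(z_i-z_j)^{-p}$ already lies in the $R$-span of these monomials. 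This does not affect your conclusion, which can also be seen directly by clearing denominators to reduce to the polynomial case.
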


\begin{cor}
Assume that $p$ does not divide $n$,  $\big[\frac{n\tilde h}{p}\big]>0$, and $a=(a_1,\dots,a_n)\in S(\K)$. 
Then the space of formal solutions of equations $\kz(h)$ at $a$
over $\K$ is a free module over the algebra 
$\K[[(z_1-a_1)^p, \dots, (z_n-a_n)^p]]$ of rank $\big[\frac{n \tilde h}{p}\big]$.
The $p$-hypergeometric solutions in \eqref{p hyp} 
form a basis of this module.

\end{cor}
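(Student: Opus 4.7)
The plan is to derive this corollary directly from Theorem~\ref{thm:main3_introduction}, reading it as the formal analogue at $a$ of the global Corollary~\ref{thm comp}. Set $R = \K[[(z_1-a_1)^p,\dots,(z_n-a_n)^p]]$. First I would observe that the $R$-module structure on the space of formal solutions is automatic: every $f \in R$ satisfies $\der_i f = 0$ in characteristic $p$, so if $I$ is a formal solution then
\bea
\der_i(fI) + hH_i(fI) \ =\ f\bigl(\der_i I + hH_i I\bigr)\ =\ 0,
\eea
whence $fI$ is again a formal solution.

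Next I would exploit the global flat trivialization of $\mc U$ furnished by Theorem~\ref{thm:main3_introduction}: the sections $Q^{(\ell p-1)}(z,h)$, $\ell = 1, \dots, \big[\frac{n\tilde h}{p}\big]$, form a flat frame of $\mc U$. Restricted to the formal neighborhood of $a$ they still form a flat frame, which identifies $\mc U$ there with the trivial flat bundle of rank $\big[\frac{n\tilde h}{p}\big]$; consequently its formal flat sections at $a$ are precisely the $R$-linear combinations of the $Q^{(\ell p-1)}$ and form a free $R$-module of rank $\big[\frac{n\tilde h}{p}\big]$ with this basis. Linear independence over $R$ of the $Q^{(\ell p-1)}$'s is built into this identification, but it can also be read off directly from the $\K$-linear independence of the values $Q^{(\ell p-1)}(a,h)$ provided by Corollary~\ref{cor lin ind}, via a short Nakayama-type argument reducing modulo the maximal ideal of $R$.

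The crux is the opposite inclusion: every formal solution of $\kz(h)$ at $a$ must already lie in $\mc U$. This is the formal-neighborhood version of the final assertion of Theorem~\ref{thm:main3_introduction} (``if $\mc U\ne 0$ then every flat section of $\cV$ belongs to $\mc U$''). I expect this to be the only genuine obstacle; I would verify that the arguments used in the proofs of Theorems~\ref{th:main_result_for_rational_h} and~\ref{thm 7} yield this assertion at the sheaf level, equivalently on every formal neighborhood, which is natural because ``belonging to a subbundle'' is a local condition. Granting this, the formal solutions of $\kz(h)$ at $a$ coincide with the formal flat sections of $\mc U$ at $a$, and the corollary follows from the previous paragraph.
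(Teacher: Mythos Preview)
Your outline is correct and you identify the crux precisely. The gap is that you do not actually justify the crux. The remark that ``belonging to a subbundle is a local condition'' is beside the point: the question is whether \emph{flatness} forces a formal section into $\mc U$, and a formal neighborhood is not a Zariski open, so even a sheaf-level version of Theorem~\ref{thm:main3_introduction} would not immediately give this.

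The paper's route is shorter and avoids re-opening the proofs. Lemma~\ref{lm:base_change_to_disk} shows that for any module $M$ with integrable connection over a smooth algebra $A$ over a perfect field, the natural map $M^{\nabla=0}\otimes_{A^p}\hat A^p\to(M\otimes_A\hat A)^{\nabla=0}$ is an isomorphism, the point being that $A\otimes_{A^p}\hat A^p\cong\hat A$. Combined with the description of global flat sections in Theorem~\ref{thm 7}, this yields the corollary in one line. Your approach can also be completed, but you should say what actually makes it work: the key step in Theorem~\ref{th:flat_sections} is that a flat $\cO$-linear functional $f$ satisfies $f(\Psi_k(s))=\der_k^p\,f(s)=0$, since $\der_k^p$ annihilates functions in characteristic~$p$; hence $f$ vanishes on the span of the images of the $p$-curvature operators, which by Lemma~\ref{lm:image_of_p_curvature_is_big} is the whole of the first term in~\eqref{ies}. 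That computation goes through verbatim over the formal completion. It is this, and not the locality of subbundles, that gives the crux.
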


See Corollary \ref{cor form}.

\subsubsection{$p$-curvature}
In characteristic $p$, the operators
\bea
\Psi_k(z) = \left(\nabla^{\on{KZ}, h}_k\right)^p, \qquad k=1,\dots, n,
\eea
 commute with multiplication by functions.
 The linear operators $\Psi_k(z)$ are called the  $p$-curvature operators
of the KZ connection $\nabla^{\on{KZ},h}$.

It is known that  the intersection of kernels  $\cap_{k=1}^n\on{ker} \Psi_k(z)$ of the $p$-curvature operators
coincides with the space generated by  flat sections of $\nabla^{\on{KZ},h}$,  see the Cartier descent theorem \cite[Theorem 5.1]{katz70}.

\begin{thm}
\label{thm Pcu}

Assume that $p$ does not divide $n$ and  $h\in\F_p$. Then
\bean
\label{CC=0}
\Psi_k(z) \Psi_\ell(z) = 0
\eean  
for  $k, \ell \in\{1,\dots, n\}$.
\begin{enumerate}
\item[$(\on{i})$]

If $h=0$, then all operators $\Psi_k(z)$ are equal to zero.

\item[$(\on{ii})$] If $h\in \F_p\setminus \{0\}$ and   $\big[\frac{n\tilde h}p\big]=0$
or $\big[\frac{n\tilde h}p\big]=n-1$, then all operators $\Psi_k(z)$ are equal to zero.

\item[$(\on{iii})$] If $ 0< \big[\frac{n\tilde h}p\big]<n-1$, then every
$p$-curvature operator $\Psi_k(z)$ is of rank 1.
Its kernel is defined by the following  linear equation:
\bean
\label{ker p}
\phantom{aaaaaa}
\on{ker} \Psi_k(z) = \left\{ 
(v_1,\dots,v_n)^\intercal \in V \ \Big\vert \  \sum_{i=1}^n v_i\sum_{m =1}
^{\big[\frac{n(p-\tilde h)}p\big]} 
z_k^{{p}(m-1)} Q^{(m p-1)}_i(z, -h) = 0 
\right\} ,
\eean
and its image is generated by the vector
\bean
\label{IM p}
\sum_{\phantom{aa}\ell =1}
^{\big[\frac{n \tilde h}p\big]} 
z_k^{{p}(\ell-1)} Q^{(\ell p-1)}(z, h),
\eean
see formula \eqref{5.7}.

\end{enumerate}

 \end{thm}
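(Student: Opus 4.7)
The plan is to deduce the theorem directly from the structural result Theorem \ref{thm:main3_introduction}. That theorem provides a horizontal short exact sequence $0\to \mc U\to \cV\to \cV/\mc U\to 0$ in which both $\mc U$ and $\cV/\mc U$ are trivial as flat bundles, hence have vanishing $p$-curvature. By functoriality of $p$-curvature under horizontal maps, $\Psi_k$ vanishes on $\mc U$ and maps $\cV$ into $\mc U$ for every $k$. Composing these two observations gives $\Psi_k\Psi_\ell=0$, proving \eqref{CC=0}. The vanishing statements (i) and (ii) are then immediate: for $h=0$ the connection itself is trivial; for $[n\tilde h/p]=0$ we have $\mc U=0$; for $[n\tilde h/p]=n-1$ we have $\mc U=\cV$; in each case $\Psi_k=0$.

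For (iii), with $r:=[n\tilde h/p]$ and $s:=[n(p-\tilde h)/p]=n-1-r$ both positive, $\Psi_k$ factors through a horizontal $\cO_S$-linear map $\bar\Psi_k:\cV/\mc U\to \mc U$ between trivial flat bundles. Using the flat frame $\{Q^{(\ell p-1)}(z,h)\}_{\ell=1}^{r}$ of $\mc U$ and, via the Shapovalov identification, the flat frame $\{Q^{(mp-1)}(z,-h)\}_{m=1}^{s}$ of $(\cV/\mc U)^*$, this map is represented by an $r\times s$ matrix with entries in the ring $\cO_S^{\nabla}$ of $\nabla$-flat functions on $S$. The content of (iii) is that this matrix has the rank-one factored form with $(\ell,m)$-entry a single scalar $c_k\in\cO_S^{\nabla}$ times $z_k^{p(\ell-1)}z_k^{p(m-1)}$. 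Given this, the kernel description \eqref{ker p} follows from Corollary \ref{thm ort}: the vector $v_k=\sum_m z_k^{p(m-1)}Q^{(mp-1)}(z,-h)$ lies in $\mc U^\perp$ by orthogonality, hence annihilates $u_k$ and the entire image line; the rank-one property then forces $\ker\Psi_k$ to coincide with the codimension-one hyperplane $v_k^\perp$.

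To prove the rank-one claim and identify the scalars, I would use the generating function $g(x,z)=P(x,z)^{\tilde h}(1/(x-z_1),\dots,1/(x-z_n))^{\intercal}=\sum_i Q^{(i)}(z,h)\,x^i$. A direct calculation exploiting the congruence $\tilde h\equiv h\pmod p$ yields the striking identity $\nabla^{\on{KZ},h}_k g(x,z)=-e_k\,\partial_x g_k(x,z)$, equivalent after extracting the coefficient of $x^i$ to the recursion
\begin{equation*}
\nabla^{\on{KZ},h}_k Q^{(i)}(z,h)=-(i+1)\,Q^{(i+1)}_k(z,h)\,e_k,
\end{equation*}
with $e_k$ the $k$-th standard basis vector of $\C^n$. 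Iterating this recursion $p$ times computes $\Psi_k Q^{(i)}$ explicitly for each $i$, and after reduction modulo $\mc U$ produces a vector proportional to $u_k$. The main obstacle is the bookkeeping of this iteration: after a single application the intermediate vector $Q^{(i+1)}_k\,e_k$ is no longer of the form $Q^{(j)}$ and $e_k$ is not $\nabla$-flat, so subsequent applications of $\nabla_k$ must expand $\nabla_k e_k=h H_k e_k$ and track the emerging terms modulo $\mc U$ across $p-1$ further iterations. The Frobenius-type heuristic $x^p\leftrightarrow z_k^p$ in $g$ explains the appearance of the factors $z_k^{p(\ell-1)}$ in $u_k$, while a duality argument exchanging $h\leftrightarrow -h$ — under which the $p$-curvature of the dual connection is the transpose of the original — pins down both flat sections $u_k$ and $v_k$ simultaneously.
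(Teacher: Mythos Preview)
Your treatment of \eqref{CC=0} and parts (i), (ii) is correct and is exactly the paper's argument: the factorization of $\Psi_k$ through the exact sequence $0\to\mc U\to\cV\to\cV/\mc U\to 0$ coming from Theorem~\ref{thm:main3_introduction} gives $\Psi_k\Psi_\ell=0$, and the vanishing cases are immediate.

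For part (iii) there is a genuine gap. You correctly reduce the problem to determining the $r\times s$ matrix of $\bar\Psi_k:\cV/\mc U\to\mc U$ in the flat bases, and you state what this matrix \emph{should} be, namely the rank-one matrix with $(\ell,m)$-entry $c_k\,z_k^{p(\ell-1)}z_k^{p(m-1)}$. But you do not prove this. Your recursion $\nabla_k Q^{(i)}=-(i+1)Q^{(i+1)}_k\,e_k$ is a valid first step (and is a nice identity), but as you yourself note, after one application the vector is a scalar multiple of $e_k$, which is not flat, and the subsequent $p-1$ iterations of $\nabla_k$ produce terms involving $H_k e_k=\sum_{j\ne k}(e_j-e_k)/(z_k-z_j)$ that you do not control. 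The ``Frobenius-type heuristic $x^p\leftrightarrow z_k^p$'' and the duality remark are suggestive but do not constitute a proof; in particular, nothing you have written forces the matrix to have rank~$1$, let alone the asserted factored form. Your deduction of the kernel and image descriptions from the rank-one form is fine \emph{once} that form is established.

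The paper avoids this direct iteration entirely. It realizes $\cV_{-h}\cong E_{h}\cong H^1_{dR}(X_{\bF_p}/S_{\bF_p})_{\bar 1}$ for a suitable family of curves $X\to S$, and then invokes Katz's $p$-curvature theorem, which gives the factorization $\Psi_k=F^*_{X/S}\circ\bar\Psi_k\circ C$ with $\bar\Psi_k=-F^*_{S}(\text{KS}_{\nabla,k})$ determined by the Kodaira--Spencer map. The rank-one property then falls out from a single explicit identity, namely $\nabla_{\partial/\partial z_k}[\mu_1^{(a)}]=\frac{a}{q}[\omega_k^{(a)}]$ (Corollary~\ref{cor 5.3}), which shows that the image of $\bar\Psi_k$ on the relevant isotypic component is spanned by a single vector. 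The precise formulas \eqref{ker p}, \eqref{IM p} come from computing $C$ (Lemma~\ref{lem:computing_C}) and the Poincar\'e pairing (Theorem~\ref{thm Poi}). The point is that Katz's theorem replaces your $p$-fold iteration of $\nabla_k$ by a single Kodaira--Spencer derivative, which is what makes the computation tractable.
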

Formula \eqref{CC=0} and parts (i), (ii) follow easily from Theorem \ref{thm:main3_introduction}; see Lemma \ref{lem CC} for details. 
Part (iii) is proved in Theorem \ref{thm pPk} and Corollary \ref{cor 55}, as a consequence of the Katz p-curvature theorem \cite{katz}.


\smallskip

Notice that  the linear equation in \eqref{ker p} is satisfied by all $p$-hypergeometric flat sections
$Q^{(mp-1)}(z,h)$, $m=1,\dots, \big[\frac{nh}p\big]$, by Orthogonality Corollary \ref{thm ort}.

\smallskip


\smallskip
Using the linear equation in  \eqref{ker p} and orthogonality relations \eqref{ort}, 
it is easy to check directly that the intersection $\cap_{k=1}^n\on{ker} \Psi_k(z)$ of kernels of the $p$-curvature operators 
 coincides with the span
of $p$-hypergeometric flat sections.

\subsection{Geometric interpretation of the KZ  connection}

 Our approach to the results stated above is based on the geometric interpretation
of the KZ  connection as the Gauss-Manin connection. To explain this interpretation, fix a ground field $\K$, $h\in \K$ and consider the polynomial $P=\prod_{i=1}^n (x-z_i)$ viewed as  a rational function on $\bP^1_S: =\bP^1 \times S$.
Define an integrable logarithmic connection  on the trivial bundle by the formula 
$$\cO_{\bP^1_S} \rar{\nabla}  \Omega^1_{\bP^1_S}(\log T), \quad  \nabla(f) = df + h f \frac{dP}{P},$$
where $T\mono \bP^1_S$ is the support of $\Div P$. We denote by $\cP^h$ the logarithmic local system $(\cO_{\bP^1_S}, \nabla)$ over $\bP^1_S$. Then $(\cV, \nabla^{\on{KZ}, -h})$ is identified with 
the relative  logarithmic de Rham cohomology $E_h:=H^1_{dR, \log}(\bP^1_S/S,\cP^{h})$, see Proposition \ref{pr:expliciteformulafortheKZconn}. This interpretation goes back to \cite{DF, CF, DJMM, SV1} at least when base field is the field of complex numbers.
For $\Char \K =p$ and $h  \not\in  \bF_p$, we prove the following version of the steepest descent result for $E_h$\,.\ Let  $\text{Crit}_P\mono \bP^1_S$ be
the critical locus  of the polynomial $P(x)$ and  $S^\circ \subset S$
 the open subset over which the projection $\pi: \text{Crit}_P \to S$ 
 is \'etale.  In Theorem \ref{th:main_char_p_generic_h} we show that $E_h$ over $S^\circ$ is isomorphic  
to $\cP^{h}|_{\pi^{-1}(S^\circ)}$ (viewed as an $\cO_{S^\circ}$-module with a connection). Then the irreducibility result, Theorem \ref{thm:intro_irr},  is derived from geometric properties of the finite map $\pi$,
 see Lemmas \ref{lem:irred_crit_locus} and \ref{lem:beta_anzatz}. 

 For $h \in  \bF_p$, we identify in Section \ref{sec 4} the local system  $E_h$ with  an isotypic component of the de Rham cohomology (with constant coefficients) of a smooth projective curve $X$ over $S$ acted upon by a finite group. The flat subbundle $\mc U$ from Theorem \ref{thm:main3_introduction} is constructed using the conjugate filtration on the de Rham cohomology and its Lagrangian property. 
 Finally, the description of the flat sections of $E_h$ is proven using a result of Katz relating the p-curvature of the Gauss-Manin connection and the Kodaira-Spencer operator and using the key non-degeneracy property of the latter,
 see Corollary \ref{cor 5.3} and Lemma \ref{lm:image_of_p_curvature_is_big}.

\subsection{Organization of the paper}

In Section \ref{sec 2} we study the cohomology of the logarithmic local system $\mc P^h$ over
the projective line  $\bP^1$. This local system is associated with the master function  $P^h$.   
In Section \ref{sec 3} we study  $\mc P^h$ in characteristic $p$. This section contains proofs of all our main results except for the $p$-curvature formula in 
Theorem \ref{thm Pcu}. 
In Section \ref{sec 4} we introduce the family of algebraic curves defined by the affine equation
$y^q = \prod_{i=1}^n (x-z_i)$
and depending on parameters $z$. We 
study the cohomology of these curves in characteristic $p$.
In Section \ref{sec 5.1}, we give a different proof of Theorem \ref{thm:main3_introduction},
 and in Section \ref{sec p-cu}  we prove our last main result -- Theorem \ref{thm pPk}, which described the $p$-curvature operators.

\subsection*{Acknowledgments}
 
The authors thank Pavel Etingof and Evgeny Mukhin  for discussions and Andrey Gabrielov for providing a proof of Lemma
\ref{lem G} over the field $\C$.
The first author would like to extend his gratitude to IHES for the hospitality provided during the development of this paper in July 2023.

\section{Logarithmic local system over $\bP^1$ and its cohomology}
\label{sec 2}
In this section we explain a geometric interpretation
of the KZ  connection as the Gauss-Manin connection and prove a few elementary results not specific to the positive characteristic case.
\subsection{Local system $\cP^h$ and its logarithmic de Rham cohomology}\label{ss:loc_system}
Fix an integer $n>1$. Let $S\subset \bA^n= \Spec\bZ[z_1, \dots, z_n]$
be the complement to the union of hyperplanes $z_i - z_j=0$, 
$1\leq i, j\leq n$, that is,
$$
S= \Spec \bZ [z_i, (z_i-z_j)^{-1},\; 1\leq i\ne j \leq n].
$$
Set $\bP^1_S=\bP^1 \times S$. We denote by $x$ the coordinate on $\bA^1 \subset \bP^1$. 

Define divisors $T_i \mono \bA^1_S\subset  \bP^1_S$, $1\leq i \leq n$, by equations $x=z_i$. Set 
\bea
T_\infty = \{\infty\} \times S \mono \bP^1_S, \quad T= T_\infty \cup \bigcup_{i=1}^n T_i.
\eea
Note that divisors $T_i$, $T_\infty$ are pairwise disjoint and each of them  projects isomorphically to $S$. 

We denote by $\Omega^m_{\bP^1_S}(\log T)$ the sheaf of logarithmic differential form on $(\bP^1_S, T)$. Thus,
$\Omega^m_{\bP^1_S}(\log T)$ is an $\cO_{\bP^1_S}$-submodule of
$\Omega^m_{\bP^1_S}\otimes \cO(\infty[T])$\footnote{ Where $\cO(\infty[T])$ stands for the sheaf of rational functions
regular outside of $T$,},
 whose sections on 
an open subset $U\subset \bP^1_S $  
are differential forms $\mu$ on $U\setminus (U\cap T)$ such that $f \mu$ and  $f d \mu$ are both
regular on $U$. Here $f=0$ is a local equation of the divisor $T$.

Define a closed logarithmic $1$-form on  $\bP^1_S$ by the formula 
\begin{equation}\label{log-1-form}
\eta=   \frac{dP}{P} \in \Gamma(\bP^1_S, \Omega^m_{\bP^1_S}(\log T)),
\qquad  P= \prod_{i=1}^n (x-z_i). 
\end{equation}
Define a logarithmic de Rham local system $\cP^h$ on  $\bP^1_S$:
$\cP^h = \cO_{\bP^1_S}[h, h^{-1}]$ as an $\cO_{\bP^1_S}$-module;
  the logarithmic connection is given by the formula:
\begin{equation}\label{eq:logconnectionh}
 \cO_{\bP^1_S}[h, h^{-1}] \rar{\nabla^{\cP^h}}  \Omega^1_{\bP^1_S}(\log T)[h, h^{-1}], \quad  \nabla^{\cP^h}(f) = df + h f \eta. 
\end{equation}
Since $\eta$ is closed  the connection  $\nabla^{\cP^h}$ is flat.
\begin{rem}
   Our notation for the local system is inspired by the formula:
$$
d(P^h f) =  P^h\Big(df +h f \frac{dP}{P}\Big).
$$ 
\end{rem}
Let $ H^\bullet_{dR, \log}(\bP^1_S/S,\cP^h) $ be the relative logarithmic de Rham cohomology of $\bP^1_S$  with coefficients in $\cP^h$.
Recall that the latter is defined to be the hypercohomology of $\bP^1_S$ with coefficients in the complex \eqref{eq:logconnectionh}

Since the higher cohomology groups of $\cO_{\bP^1_S}$ and $\Omega^1_{\bP^1_S/S}(\log T)$ 
vanish,
$ H^\bullet_{dR, \log}(\bP^1_S/S,\cP^h) $ is just the cohomology of the complex 
\begin{equation}\label{log_de_Rham_P^h_explicit}
\Gamma (\bP^1_S,  \cO_{\bP^1_S})[h, h^{-1}] 
\ \xrightarrow{\nabla^{\cP^h}}\  \ \Gamma(\bP^1_S, \Omega^1_{\bP^1_S/S}(\log T))[h, h^{-1}].
\end{equation}
Observe, that $ \Gamma(\bP^1_S, \Omega^1_{\bP^1_S/S}(\log T))$ is a free $\Gamma (S, \cO_S)$-module  on 
$$\eta_i =\frac{d(x-z_i)}{x-z_i}, \quad 1\leq i \leq n.$$
Thus, complex (\ref{log_de_Rham_P^h_explicit}) has the  form
\begin{equation}\label{log_de_Rham_P^h_more_explicit}
\fO \  \to \  
    \bigoplus_{i=1}^n  \fO \cdot \eta_i\,, \quad f\mapsto   h \sum_i f \eta_i\,, 
\end{equation}   
where  $$\fO= \bZ [h, h^{-1}, z_i, (z_i-z_j)^{-1},\; 1\leq i\ne j \leq n].$$
In particular, we infer the following.
\begin{pr}\label{pr:basis_for_E} The following statements are true.
\begin{enumerate}
    \item  $ H^0_{dR, \log}(\bP^1_S/S,\cP^h)=0 $.
    \item
$E:=H^1_{dR, \log}(\bP^1_S/S,\cP^h) $
is a free $\fO$-module generated by  classes $[\eta_i]$
with a single relation 
$$\sum_{i=1}^{n}\, [\eta_i]=0.$$
\end{enumerate}
\end{pr}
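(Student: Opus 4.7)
The plan is to work directly with the two-term complex \eqref{log_de_Rham_P^h_more_explicit} that has already been established in the text: it is a complex of free $\fO$-modules, concentrated in degrees $0$ and $1$, whose differential is
\[
\fO \;\longrightarrow\; \bigoplus_{i=1}^n \fO\cdot \eta_i, \qquad f \,\longmapsto\, hf\sum_{i=1}^n \eta_i,
\]
after noting the key identity $\eta = dP/P = \sum_{i=1}^n \eta_i$ in the relative logarithmic de Rham complex. With this complex in hand, everything becomes a bit of free module bookkeeping, so the only step requiring attention is justifying that invertibility of $h$ lets us ignore any torsion issues.

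First I would prove part (1), that $H^0=0$. In the basis $\{\eta_1,\dots,\eta_n\}$ of the target, the differential takes $f$ to the coordinate vector $(hf,hf,\dots,hf)$. Since $h$ is a unit in the ring $\fO = \bZ[h,h^{-1}, z_i,(z_i-z_j)^{-1}]$ and the target is a free $\fO$-module, the relation $hf=0$ forces $f=0$. Hence the differential is injective.

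Next, for part (2), I would compute the cokernel. Because $h\in\fO^\times$, as $f$ ranges over $\fO$ the element $hf$ also ranges over all of $\fO$, so the image of the differential is the cyclic $\fO$-submodule
\[
\fO\cdot\Bigl(\sum_{i=1}^n \eta_i\Bigr) \;\subset\; \bigoplus_{i=1}^n \fO\,\eta_i.
\]
Consequently
\[
E \;=\; \Bigl(\bigoplus_{i=1}^n \fO\,\eta_i\Bigr)\Big/\fO\cdot\Bigl(\sum_{i=1}^n \eta_i\Bigr),
\]
which is precisely the $\fO$-module with generators $[\eta_1],\ldots,[\eta_n]$ subject to the single relation $\sum_{i=1}^n[\eta_i]=0$. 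To exhibit freeness, one chooses $[\eta_1],\dots,[\eta_{n-1}]$ as an explicit $\fO$-basis and uses the relation to eliminate $[\eta_n]$.

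The main (very mild) obstacle is the transition from the abstract definition of $H^\bullet_{dR,\log}$ as hypercohomology to the elementary complex \eqref{log_de_Rham_P^h_more_explicit}; but this has already been dispatched in the surrounding text using the vanishing of $H^1(\bP^1_S, \cO_{\bP^1_S})$ and $H^1(\bP^1_S,\Omega^1_{\bP^1_S/S}(\log T))$ over $S$. Beyond that, the argument is mechanical and hinges only on the invertibility of $h$ in $\fO$.
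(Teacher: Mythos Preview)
Your proposal is correct and follows exactly the approach the paper intends: the proposition is stated immediately after writing down the explicit complex \eqref{log_de_Rham_P^h_more_explicit}, and your argument simply spells out the straightforward kernel and cokernel computation using that $h$ is a unit in $\fO$. There is nothing to add.
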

The relative logarithmic de Rham cohomology is equipped with the Gauss-Manin connection $\nabla^E$. We refer the reader to \cite[Section 1.4]{katz} for a definition of the Gauss-Manin connection on the logarithmic de Rham cohomology. 
An explicit formula for the connection on $E$ is
given in Section \ref{ss:explicitformula}.

\subsection{Calibration}\label{ss:calibration} 

For a Cartier divisor $D$ 
on a scheme  $X$ we denote by $\cO_X(D)$ the corresponding invertible sheaf.
For an $\cO_X$-module $\cF$, we shall write $\cF(D)$ for $\cF \otimes _{\cO_X}\cO_X(D)$.

Let $m_\infty$, $m_1, \dots ,m_n$  be integers. Denote by 
$$C_{m_\infty, m_1, \dots ,m_n}\ :\   C_{m_\infty, m_1, {\dots} ,
m_n}^0 \ \to\   C_{m_\infty, m_1, {\dots} ,m_n}^1 $$ the complex 
$$
\cO_{\bP^1_S}\Big((m _\infty-1) T_\infty  + \sum_{i=1}^n (m_i-1)T_i\,\Big)[h, h^{-1} ] \
 \xrightarrow{\nabla ^{\cP^h}}  \ \Omega^1_{\bP^1_S/S}\Big(\, m _\infty T_\infty  +  \sum_{i=1}^n  m_iT_i \,\Big)[h, h^{-1}]$$
of sheaves on $\bP^1_S$. The differential  is well-defined because $\nabla ^{\cP^h}= d +h  \frac{dP}{P}$ and  $\frac{dP}{P}$ is a section of 
  $\Omega^1_{\bP^1_S/S}(T_\infty  +  \sum_{i=1}^n  T_i \,)$.
  Note that $C_{1, 1, {\dots} , 1}$ is the relative logarithmic de Rham complex \eqref{eq:logconnectionh}.
  \begin{lem}\label{lm:divisage}    

${}$

\begin{itemize}
\item[(i)]  Fix an integer  $1\leq i \leq n$. 
Then the embedding of complexes $C_{m_\infty, m_1, \dots ,m_n}  \mono C_{m_\infty, m_1, \dots ,m_i+1, \dots, m_n}$ induces a quasi-isomorphism after inverting $m_i- h \in \bZ[h]$:
$$C_{m_\infty, m_1, \dots ,m_n}[(m_i- h)^{-1}]   \iso C_{m_\infty, m_1, \dots ,m_i+1, \dots, m_n}[(m_i- h)^{-1}].$$
\item[(ii)] 
The embedding of complexes $C_{m_\infty,  m_1, \dots ,m_n}  \mono C_{m_\infty +1,  m_1, \dots ,m_n}$ 
induces a quasi-isomorphism after inverting $m_{\infty}+ n h \in \bZ[h]$.
\end{itemize}
\end{lem}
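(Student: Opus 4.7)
The plan is to use standard d\'evissage. For~(i), the embedding fits into a short exact sequence of two-term complexes
\[
0 \to C_{m_\infty, m_1, \dots, m_n} \to C_{m_\infty, m_1, \dots, m_i+1, \dots, m_n} \to Q_i \to 0,
\]
whose quotient $Q_i$ is a two-term complex of sheaves on $\bP^1_S$ set-theoretically supported on $T_i$. Since localization is exact, the embedding is a quasi-isomorphism after inverting $m_i - h$ if and only if $Q_i[(m_i - h)^{-1}]$ is acyclic, so the entire problem reduces to computing the induced differential on $Q_i$.

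For this I would work locally near $T_i$ in the coordinate $u = x - z_i$. Both terms of $Q_i$ are then free $\cO_{T_i}$-modules of rank one, generated respectively by the classes $[u^{-m_i}]$ in degree zero and $[du/u^{m_i+1}]$ in degree one. The key input is the local decomposition
\[
\frac{dP}{P} \;=\; \frac{du}{u} + \omega_i, \qquad \omega_i := \sum_{j \ne i} \frac{dx}{x - z_j},
\]
with $\omega_i$ regular along $T_i$. A direct calculation then gives
\[
\nabla^{\cP^h}(u^{-m_i}) \;=\; (h - m_i)\, u^{-m_i - 1}\, du \;+\; h\, u^{-m_i}\, \omega_i.
\]
The second summand lies in $\Omega^1(m_i T_i)$ and hence vanishes in $Q_i$, so the induced differential on $Q_i$ is multiplication by the scalar $h - m_i$. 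This is invertible after inverting $m_i - h$, proving~(i).

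Part~(ii) is entirely analogous, now carried out near $T_\infty$ in the coordinate $v = 1/x$. Using the factorization $P = v^{-n}\prod_{i=1}^n (1 - z_i v)$ yields the local expansion
\[
\frac{dP}{P} \;=\; -n\, \frac{dv}{v} + \omega_\infty
\]
with $\omega_\infty$ regular at $v = 0$, and the same computation shows that the induced differential on the corresponding quotient complex is multiplication by $-(m_\infty + n h)$.

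The only delicate point is the regularity of the correction terms $\omega_i$ and $\omega_\infty$, since it is precisely this regularity that makes them drop out in the quotient and leaves a pure scalar action. This regularity is immediate from the pairwise disjointness of the divisors $T_1, \dots, T_n, T_\infty$: no component of $T$ other than the one being isolated contributes a pole to $dP/P$. Consequently the argument is essentially a one-line eigenvalue calculation, and the main bookkeeping "obstacle" is simply the correct identification of the quotient complex and of the terms that survive in it.
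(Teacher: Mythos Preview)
Your proof is correct and follows essentially the same approach as the paper: both reduce to the quotient complex supported on the divisor, identify it as a rank-one $\cO_{T_i}[h,h^{-1}]$-module in each degree, and compute the induced differential as multiplication by $h-m_i$ (resp.\ $-(m_\infty+nh)$ at $T_\infty$). Your explicit local expansion of $\frac{dP}{P}$ is slightly more detailed than the paper's version, which simply notes that the residue of $\eta$ at $T_\infty$ is $-n$, but the argument is the same.
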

\begin{proof} For part  (i) we have to show that the complex
\begin{equation}
\label{eq:proofquot}
 C_{m_\infty, m_1, \dots ,m_i+1, \dots, m_n}^0/C_{m_\infty,  m_1, \dots ,m_n}^0
 \ \rar{\nabla ^{\cP^h}} 
\ C_{m_\infty, m_1, \dots ,m_i+1, \dots, m_n}^1/C_{m_\infty,  m_1, \dots ,m_n}^1
 \end{equation}
 becomes acyclic after inverting $m_i- h$.
 Note that the quotient sheaves in (\ref{eq:proofquot}) are supported on $T_i$. 
 The  restriction of 
$C_{m_\infty, m_1, \dots ,m_i+1, \dots, m_n}^0/C_{m_\infty,  m_1, \dots ,m_n}^0 $
to $T_i$ 
 is a free $\cO_{T_i}[h, h^{-1} ]$-module on 
$\frac{1}{(x-z_i)^{m_i}}$. The  restriction of
 $C_{m_\infty, m_1, \dots ,m_i+1, \dots, m_n}^1/C_{m_\infty,  m_1, \dots ,m_n}^1$
 is a free $\cO_{T_i}[h, h^{-1} ] $-module on 
$\frac{d(x-z_i)}{(x-z_i)^{m_i+1}}$.

Note that 
$$
\nabla ^{\cP^h}\left(\frac{1}{(x-z_i)^{m_i}}\right)= 
d\left(\frac{1}{(x-z_i)^{m_i}}\right) + \frac{h}{(x-z_i)^{m_i}} \eta =
(h -m_i) \frac{d(x-z_i)}{(x-z_i)^{m_i+1}} 
$$ 
in the 
quotient $C_{m_\infty, m_1, \dots ,m_i+1, \dots, m_n}^1/C_{m_\infty,  m_1, \dots ,m_n}^1$.
Thus, complex (\ref{eq:proofquot}) is isomorphic to the complex
$$
\cO_{T_i}[h, h^{-1} ]  \ \xrightarrow{ (h -m_i) \Id}\  \cO_{T_i}[h, h^{-1} ] \,. 
$$
This proves part (i). For part (ii) the 
argument is similar using that the residue of $\eta$ at $T_\infty$ is  $-n$. 
\end{proof}

\subsection{Log de Rham cohomology vs. ordinary}
\label{ss:basechange}

Let $R$ be a commutative ring equipped with a homomorphism $$s: \bZ[h, h^{-1}] \to R.$$ 
We denote by $\bar h \in R^\times $ the image of $h$. 
Since the cohomology  $ H^\bullet_{dR, \log}(\bP^1_S/S,\cP^h) $ is flat over $\bZ[h, h^{-1}]$
the base change morphism is an isomorphism:
$$H^\bullet_{dR, \log}(\bP^1_S/S,\cP^h) \otimes _{\bZ[h]} R \ 
\iso \
H^\bullet_{dR, \log}(\bP^1_{S},\cP^h \otimes _{\bZ[h]} R).$$
Let $U$ be the complement  to $T$ in $\bP^1_{S}$. Consider the restriction morphism
\begin{equation}
\label{mor:log_to_open}
 H^\bullet_{dR, \log}(\bP^1_{S}/S,\cP^h \otimes _{\bZ[h]} R) \ 
  \to \
  H^\bullet_{dR}(U/S, (\cP^h \otimes _{\bZ[h]} R)|_{U}).
    \end{equation}
    Recall that (\ref{mor:log_to_open}) commutes with the Gauss-Manin connection.
\begin{pr}\label{pr:log_to_open}
 Assume that for every  positive integer $m$  the 
 elements $\bar h-m$, $n\bar h+m$ are invertible in $R$.
 Then morphism (\ref{mor:log_to_open}) is an isomorphism.
\end{pr}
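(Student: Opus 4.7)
The plan is to present both sides of \eqref{mor:log_to_open} as cohomologies of explicit two-term complexes of global sections, and then apply Lemma \ref{lm:divisage} iteratively along a filtered colimit.

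First, since $U = \bP^1_S \setminus T$ is affine over $S$ (the complement of an ample divisor in $\bP^1_S$) and $(\cP^h \otimes R)|_U$ is the trivial line bundle with connection $f \mapsto df + h f \, dP/P$, the right-hand side of \eqref{mor:log_to_open} is computed by the two-term complex
$$
\Gamma(U, \cO_U) \otimes R \ \xrightarrow{\ \nabla^{\cP^h}\ }\ \Gamma(U, \Omega^1_{U/S}) \otimes R.
$$
Every section of $\cO_U$ (resp.\ $\Omega^1_{U/S}$) has bounded pole order along $T$, so this complex is the filtered colimit over tuples of positive integers $(m_\infty, m_1, \dots, m_n)$ of $\Gamma(\bP^1_S, C_{m_\infty, m_1, \dots, m_n}) \otimes R$. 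On the left-hand side, the logarithmic complex \eqref{log_de_Rham_P^h_explicit} is exactly $\Gamma(\bP^1_S, C_{1,1,\dots,1}) \otimes R$, and the restriction morphism \eqref{mor:log_to_open} is the natural map from this initial term of the colimit into the colimit itself. (Both terms of $C_{m_\infty, m_1, \dots, m_n}$ are line bundles on $\bP^1$ of fiberwise degree $\geq -1$ for $m_i, m_\infty \geq 1$, so the higher direct image under $\bP^1_S \to S$ vanishes and hypercohomology agrees with the cohomology of the complex of global sections.)

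The main step is iterated application of Lemma \ref{lm:divisage}. Any tuple $(m_\infty, m_1, \dots, m_n)$ with all entries $\geq 1$ can be reached from $(1, 1, \dots, 1)$ by a finite chain of inclusions in which a single coordinate is incremented by one at each step. By Lemma \ref{lm:divisage}(i), incrementing $m_i$ from its current value $k \geq 1$ induces a quasi-isomorphism of complexes after inverting $k - h$; by Lemma \ref{lm:divisage}(ii), incrementing $m_\infty$ from $k \geq 1$ gives a quasi-isomorphism after inverting $k + nh$. Under our hypothesis, the images $k - \bar h$ and $nk \bar h + k$\,---\,more precisely $\bar h - m$ and $n\bar h + m$ for every positive integer $m$\,---\,are all units in $R$. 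Hence after base change to $R$ every transition in the tower is a quasi-isomorphism, and since filtered colimits commute with cohomology, the induced map on the colimit is a quasi-isomorphism. This is exactly \eqref{mor:log_to_open}.

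Compatibility with the Gauss-Manin connection is automatic from its functoriality under the open restriction $U \hookrightarrow \bP^1_S$, so the induced isomorphism on cohomology is $\cO_S$-linear and horizontal. I do not anticipate a real obstacle: the only delicate point is the bookkeeping of which shifts $m_i - \bar h$ and $m_\infty + n\bar h$ actually occur in the tower, and the hypothesis on $R$ is calibrated to be exactly those with $m \geq 1$.
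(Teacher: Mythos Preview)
Your proof is correct and follows essentially the same route as the paper: both express the de Rham cohomology of $U$ as the filtered colimit of the hypercohomology of the complexes $C_{m_\infty, m_1, \dots, m_n}$ and invoke Lemma~\ref{lm:divisage} to see that every transition map becomes a quasi-isomorphism after base change to $R$. Your version spells out two points the paper leaves implicit---the vanishing of higher cohomology of the terms of $C_{m_\infty, m_1, \dots, m_n}$ for $m_i, m_\infty \geq 1$, and the precise list of elements that must be inverted at each step---which is helpful. (One tiny slip: you write ``$nk\bar h + k$'' where you mean ``$n\bar h + k$'', but you immediately correct yourself.)
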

\begin{proof} 
Using Lemma \ref{lm:divisage} the morphism 
$$ 
C_{1, 1, {\dots},1} 
 \otimes _{\bZ[h]} R \  \to\  \rightlimit m (C_{m, m, {\dots} ,m}  \otimes _{\bZ[h]} R)
 $$ 
is a quasi-isomorphism. 
To complete the proof it remains to observe that
$$
R^\bullet \Gamma (\bP^1_{S}, \rightlimit m (C_{m, m, \dots,
m}  \otimes _{\bZ[h]} R))\iso
H^\bullet_{dR}(U/S, (\cP^h \otimes _{\bZ[h]} R)|_{U}).
$$
\end{proof}

\subsection{Local system $\cP^{h+1}$}

Define a logarithmic local system $\cP^{h+1}$: $\cP^{h+1} = \cO_{\bP^1_S}[h, h^{-1}]$ as an $\cO_{\bP^1_S}$-module. The logarithmic connection is given by the formula:
\begin{equation}\label{eq:logconnectionh+1}
 \cO_{\bP^1_S}[h, h^{-1}] \ \xrightarrow{\nabla^{\cP^{h+1}}}\ \ \Omega^1_{\bP^1_S}(\log T)[h, h^{-1}],
 \quad  \nabla^{\cP^{h+1}}(f) = df + 
 (h+1) f \eta. 
\end{equation}
Note that the restrictions of $\cP^{h}$, $\cP^{h+1}$ to $U= \bP^1_S\setminus T $ are isomorphic:
\begin{equation}\label{eq:h+1_to_h}
\cP^{h+1}\big|_{U}\ \iso\ \cP^{h}\Big|_{U},
 \quad f\mapsto f P. 
\end{equation}
Observe that multiplication by $P$ yields a commutative diagram 
\begin{equation}\label{dia:h+1_to_h}
\def\normalbaselines{\baselineskip20pt
\lineskip3pt  \lineskiplimit3pt}
\def\mapright#1{\smash{
\mathop{\to}\limits^{#1}}}
\def\mapdown#1{\Big\downarrow\rlap
{$\vcenter{\hbox{$\scriptstyle#1$}}$}}
\begin{matrix}
\cO_{\bP^1_S} [h, h^{-1}]& \xrightarrow{\nabla ^{\cP^{h+1}}}  &\Omega^1_{\bP^1_S/S}(\log T)[h, h^{-1}]  \cr
 \mapdown{P}  &  &\mapdown{P}  \cr
\cO_{\bP^1_S}(-\Div P) [h, h^{-1}] &  \xrightarrow{\nabla ^{\cP^{h}}} &  \Omega^1_{\bP^1_S/S}(T -\Div P)[h, h^{-1}],
\end{matrix}
\end{equation}
where the vertical arrows are isomorphisms. 
This defines an isomorphism 
\begin{equation}\label{dia:h+1_to_hbis} H^\bullet_{dR, \log}(\bP^1_S/S,\cP^{h+1}) \iso 
R^\bullet \Gamma ( \bP^1_S, C_{n+1, 0, \dots ,0}),
\end{equation}
where the complex  $C_{n+1, 0, \dots ,0}$ is defined in Section \ref{ss:calibration}.
\begin{thm}\label{Th:h_to_h+1}  Denote $a= \prod _{i=1}^{n}(i +nh) \in \bZ[h]$. Then
morphism (\ref{dia:h+1_to_hbis}) induces an isomorphism 
\begin{equation}
\label{formula:h+1_to_h}
H^\bullet_{dR, \log}(\bP^1_S/S,\cP^{h+1})[a^{-1}] 
\ \iso\  
H^\bullet_{dR, \log}(\bP^1_S/S,\cP^{h}) [a^{-1} ]
\end{equation}
of local systems on $S$.
\end{thm}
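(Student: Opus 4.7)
\medskip
\noindent
\textbf{Proof plan.}
The idea is to build a zig-zag of quasi-isomorphisms between the complexes computing the two sides by passing through a common intermediate complex $C_{n+1, 1, \dots, 1}$, using Lemma \ref{lm:divisage} at each step. By definition of the logarithmic de Rham cohomology, the left-hand side is computed by
\[
H^\bullet_{dR, \log}(\bP^1_S/S,\cP^{h}) \;=\; R^\bullet\Gamma(\bP^1_S, C_{1, 1, \dots, 1}).
\]
The isomorphism (\ref{dia:h+1_to_hbis}), whose vertical arrows in diagram (\ref{dia:h+1_to_h}) are multiplication by $P$, already identifies the right-hand side with $R^\bullet\Gamma(\bP^1_S, C_{n+1, 0, \dots, 0})$. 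Thus it suffices to produce a quasi-isomorphism $C_{1,1,\dots,1}[a^{-1}] \simeq C_{n+1, 0, \dots, 0}[a^{-1}]$ of complexes of sheaves on $\bP^1_S$.

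\smallskip

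To this end I would proceed in two steps. First, apply Lemma \ref{lm:divisage}(i) $n$ times, increasing each $m_i$ from $0$ to $1$ starting from $C_{n+1, 0, \dots, 0}$. Each step requires inverting $0-h=-h$, which is already a unit in $\bZ[h, h^{-1}]$; this yields
\[
C_{n+1, 0, \dots, 0} \;\iso\; C_{n+1, 1, \dots, 1}
\]
with no further localization. Second, apply Lemma \ref{lm:divisage}(ii) $n$ times to raise $m_\infty$ from $1$ to $n+1$ starting from $C_{1, 1, \dots, 1}$. The $k$-th step ($k=1,\dots,n$) requires inverting $k + nh$, so after all $n$ steps we need to invert exactly the product $a = \prod_{k=1}^n (k + nh)$, giving
\[
C_{1,1,\dots,1}[a^{-1}] \;\iso\; C_{n+1, 1, \dots, 1}[a^{-1}].
\]
Composing these with (\ref{dia:h+1_to_hbis}) produces the isomorphism (\ref{formula:h+1_to_h}) on the level of cohomology.

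\smallskip

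It remains to verify that the resulting isomorphism is compatible with the Gauss-Manin connection, so that (\ref{formula:h+1_to_h}) is an isomorphism of local systems on $S$. For the inclusions of Lemma \ref{lm:divisage} this is automatic, since they extend to morphisms of the corresponding absolute logarithmic de Rham complexes (the absolute differential differs from the relative one only by $dz_i$-terms that respect these inclusions). For the multiplication-by-$P$ isomorphism in diagram (\ref{dia:h+1_to_h}), the key identity $dP = P\eta$ in the absolute de Rham algebra shows that $P \circ \nabla^{\cP^{h+1}} = \nabla^{\cP^{h}} \circ P$ also for the \emph{absolute} connections, so the square (\ref{dia:h+1_to_h}) lifts to a square of absolute de Rham complexes, and hence induces a morphism commuting with Gauss-Manin. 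There is no genuine obstacle in this argument; the only bookkeeping to be careful with is that the invertibility requirements at each application of Lemma \ref{lm:divisage}(ii) accumulate to exactly $a$, and no spurious factors arise from part (i) because the relevant factor $-h$ is already invertible.
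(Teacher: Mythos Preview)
Your argument is correct and essentially matches the paper's for the cohomology isomorphism: the paper routes the zig-zag through $C_{1,0,\dots,0}$ rather than $C_{n+1,1,\dots,1}$, but this is cosmetic --- the same applications of Lemma~\ref{lm:divisage} are made, with the same invertibility requirements.

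For the Gauss--Manin compatibility the two proofs genuinely diverge. The paper does not lift to absolute complexes; instead it chooses an embedding $\bZ[h]\hookrightarrow\bC$, observes that both sides of \eqref{formula:h+1_to_h} are free over $\bZ[h,(ah)^{-1}]$ so that it suffices to check flatness after base change to $\bC$, and then invokes Proposition~\ref{pr:log_to_open} to replace the logarithmic cohomology by $H^\bullet_{dR}(U/S,\cP^h|_U)$, where the map in question is visibly induced by the flat isomorphism \eqref{eq:h+1_to_h} of local systems on $U$. Your direct argument via absolute logarithmic de Rham complexes is cleaner and avoids the transcendental detour; the paper's route has the advantage of sidestepping any bookkeeping about how the calibrated complexes $C_{m_\infty,m_1,\dots,m_n}$ interact with the Koszul filtration, reducing instead to a situation (the open $U$) where no calibration is present.
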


\begin{proof}
Consider the morphisms  of  complexes
$$ 
C_{n+1, 0, {\dots},0}   \ \leftarrow \ C_{1, 0, {\dots} , 0}\ \to\ C_{1, 1, {\dots} , 1}\,. 
 $$
The right arrow is a quasi-isomorphism by  part (i) of Lemma \ref{lm:divisage}. 
The second morphism becomes a 
quasi-isomorphism after inverting $a$ by 
 part  (ii) of Lemma \ref{lm:divisage}. Since   $C_{1, 1, {\dots}
  , 1}$ is the relative logarithmic de Rham complex of $\cP^h$, we infer
 $$
 R^\bullet \Gamma ( \bP^1_S, C_{n+1, 0, {\dots} ,0})[a^{-1}] 
 \iso H^\bullet_{dR, \log}(\bP^1_S/S,\cP^{h})[a^{-1}].
 $$
 Combining this with (\ref{dia:h+1_to_hbis}) we obtain (\ref{formula:h+1_to_h}).
 To complete the proof it remains to check that (\ref{formula:h+1_to_h}) is compatible with
 the Gauss-Manin connection. Pick an embedding  $\bZ[h] \mono \bC$.  Since both sides of (\ref{formula:h+1_to_h}) are free modules over $\bZ[h, (ah)^{-1}]$ is suffices to verify 
 the compatibility with the Gauss-Manin connection after the base change to $\bC$. 
 By Proposition \ref{pr:log_to_open} the restriction morphism
 $$
 H^\bullet_{dR, \log}(\bP^1_{S\otimes \bC}/ S\otimes \bC,\cP^h \otimes _{\bZ[h]} \bC)   \to 
  H^\bullet_{dR}(U\otimes \bC/ S\otimes \bC, (\cP^h \otimes _{\bZ[h]} \bC)|_{U\otimes \bC}) 
    $$
    is an isomorphism compatible with the Gauss-Manin connection. 
    Under this identification  (and the analogous isomorphism for $\cP^{h+1}$) isomorphism 
    (\ref{formula:h+1_to_h}) is induced by (\ref{eq:h+1_to_h}). The compatibility of 
     (\ref{formula:h+1_to_h}) with the Gauss-Manin connection follows.
 \end{proof}

\subsection{Explicit formula for Gauss-Manin connection on $E$}
\label{ss:explicitformula}

For $1\leq i,j \leq n$,  let $\Omega_{ij}$ be the $n\times n$ matrix defined in \eqref{Omij}. 
Set
 $$
 A\,=\, \sum_{1\leq i\ne j \leq n} \frac{\Omega_{ij}}{z_i - z_j}dz_i\, \in \,\Mat_{n\times n}\big(\,\Gamma(S, \Omega^1_S)\,\big).
 $$
 Consider  the connection on the free   
 $\Gamma(S, \cO_S)$-module  
 $$\fO^{\oplus n} =\bigoplus_{i=1}^{i=n} \fO \cdot e_i, \quad  \fO:= \Gamma(S, \cO_S)[h, h^{-1}]$$
 given by  the formula 
\begin{equation}\label{eq:explicit_formula_for_connection}
    \nabla= d - h A.
\end{equation}
One verifies that the connection is flat:
$$
d(A) - h A\wedge A =0.\,
\footnote{This is equivalent to $d(A)=A\wedge A=0$.}
$$

\begin{pr}\label{pr:expliciteformulafortheKZconn}
The morphism 
$$
(\fO^{\oplus n}, \nabla) \to (E, \nabla^E), \quad e_i\mapsto [\eta_i], \quad 1\leq i\leq n , 
$$
is flat.
\end{pr}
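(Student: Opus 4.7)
My plan is to unwind the definition of the Gauss--Manin connection on the relative logarithmic de Rham cohomology and match the result term-by-term with the explicit formula $\nabla = d - hA$ on $\fO^{\oplus n}$. By Proposition \ref{pr:basis_for_E}, the assignment $\phi\colon e_i \mapsto [\eta_i]$ is a surjection of $\fO$-modules $\fO^{\oplus n} \to E$ with kernel generated by $\sum_i e_i$. So flatness amounts to two checks: (a) $\phi(\nabla e_i) = \nabla^E [\eta_i]$ for each $i$, and (b) $\nabla(\sum_i e_i) = 0$. Check (b) is immediate: each matrix $\Omega_{kj}$ annihilates $\sum_i e_i$ (its only nonzero columns, in positions $k$ and $j$, are $e_j - e_k$ and $e_k - e_j$), so $A\cdot(\sum_i e_i) = 0$.

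For (a), the plan is to lift $\eta_i$ to the absolute logarithmic $1$-form $\widetilde\eta_i = d(x-z_i)/(x-z_i) \in \Gamma(\bP^1_S, \Omega^1_{\bP^1_S}(\log T))$ given by the same formula, apply $\nabla^{\cP^h} = d + h\,\eta\wedge(\cdot)$, and project the resulting absolute logarithmic $2$-form onto $\pi^*\Omega^1_S \otimes \Omega^1_{\bP^1_S/S}(\log T)$, which is the only piece that contributes to $\nabla^E$ on $H^1$. Since $\widetilde\eta_i$ is absolutely closed, $\nabla^{\cP^h}(\widetilde\eta_i) = h\,\eta \wedge \widetilde\eta_i$; crucially, the $j = i$ summand of $\eta = \sum_j d(x-z_j)/(x-z_j)$ wedges to zero against $\widetilde\eta_i$, so no double-pole contribution ever appears. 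Using $d(x-z_j)\wedge d(x-z_i) \equiv (dz_i - dz_j)\wedge dx$ modulo $\pi^*\Omega^2_S$, I obtain
\begin{equation*}
\nabla^{\cP^h}(\widetilde\eta_i) \ \equiv \ h\!\sum_{j \ne i}\frac{(dz_i - dz_j)\wedge dx}{(x-z_j)(x-z_i)} \pmod{\pi^*\Omega^2_S}.
\end{equation*}

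The final step is the partial fraction identity $dx/\big((x-z_i)(x-z_j)\big) = (z_i - z_j)^{-1}(\eta_i - \eta_j)$, which rewrites the above as $\sum_k dz_k \otimes \alpha_k$ with each $\alpha_k$ an explicit $\fO$-linear combination of the $\eta_j$'s; splitting into the cases $k = i$ and $k \ne i$ yields the Gauss--Manin connection in the basis $\{[\eta_j]\}$. On the other side, $-h A_k e_i = -h\sum_{j \ne k}\Omega_{kj} e_i /(z_k - z_j)$, and since $\Omega_{kj}e_i$ vanishes unless $i \in \{k, j\}$, each case reduces to a single sum that agrees with the previous formula under $\phi$. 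The only real obstacle is bookkeeping of indices and signs, particularly in the $k = i$ case where contributions from every $j \ne i$ must be reassembled into a single sum over $j$; no conceptual ingredient beyond closedness of $\widetilde\eta_i$ and the partial-fraction identity enters the proof.
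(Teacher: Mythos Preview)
Your proof is correct, but it takes a different route from the paper's. The paper sidesteps the direct computation of $\nabla^E$ on the logarithmic cohomology, remarking that the vector field $\partial/\partial z_i$ cannot be lifted to $\bP^1_S$ preserving the divisor $T$; instead it composes with the restriction map $E \to H^1_{dR}(U/S,\cP^h)$, observes this is an isomorphism for generic $h$ (so flatness can be checked after restriction, since both sides are free over $\fO$), and then computes $\nabla^{GM}_{\partial/\partial z_i}[\eta_j]$ on $U$ via the naive Lie-derivative formula, obtaining $[-h(\eta_i-\eta_j)/(z_i-z_j)]$ for $i\ne j$.

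Your approach instead applies the Katz--Oda recipe directly in the absolute logarithmic complex: you lift $\eta_i$ to the global section $\widetilde\eta_i\in\Gamma(\bP^1_S,\Omega^1_{\bP^1_S}(\log T))$, use that it is \emph{closed} so $\nabla^{\cP^h}\widetilde\eta_i=h\,\eta\wedge\widetilde\eta_i$ stays logarithmic, and project. The paper's concern about lifting vector fields simply doesn't arise in your form-lifting computation, because the potentially dangerous $j=i$ term $\widetilde\eta_i\wedge\widetilde\eta_i$ vanishes identically --- exactly the point you flag. What your route buys is that it stays entirely within the logarithmic complex over all of $\bP^1_S$ and needs no appeal to the generic isomorphism with $H^1_{dR}(U/S,\cP^h)$; what the paper's route buys is a one-line Lie-derivative computation on $U$ without any wedge-product bookkeeping. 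Both land on the same formula for $\nabla^E_{\partial/\partial z_k}[\eta_i]$.
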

\begin{proof}
A direct computation of the Gauss-Manin connection on $E$ is involved: in general, the vector field $\frac{\partial}{\partial z_i}$ can not  be lifted to a vector field on $\bP^1_Z$ preserving the divisor $V$. However,  
arguing as in the proof of Theorem \ref{Th:h_to_h+1} it suffices to check
that the composition 
$$
(\fO^{\oplus n}, \nabla) \to (E, \nabla^E) \to (H^1_{dR}(U/S, \,\cP^h), \nabla^{GM}),
$$
is flat. 
Here the second arrow depicts  the restriction map (which is an isomorphism for generic $h$) and $\nabla^{GM}$ stands for the Gauss-Manin connection on $H^1_{dR}(U/S, \cP^h)$. Computing the latter is easy:
for every $i\ne j$, we have 
$$
\nabla^{GM}_{\!\!\frac{\partial}{\partial z_i}}([\eta_j])= \left[\left(\left(\frac{1}{x - z_j}\right)'_{z_i} +h \frac{P'_{z_i}}{(x-z_j)P}\right)dx\right]= \left[- h \frac{\eta_i - \eta_j}{z_i - z_j}dx \right].$$
Since $\sum_i [\eta_i] =0 $ the formula above determines the connection. 
\end{proof}

\subsection{Duality}
\label{ss:duality}

Denote by $\cP^{-h}$  (resp.  $\cP^{-h}(-T)$)  the logarithmic local system  $\cO_{\bP^1_S}[h, h^{-1}]$ (resp. $\cO_{\bP^1_S}(-T)[h, h^{-1}]$) endowed with 
a logarithmic connection 
\begin{equation}
\label{eq:logconnection-h-T}
  \nabla^{\cP^{-h}}(f) = df - h f \eta.
\end{equation}
The cup product induces a morphism of complexes 
\begin{equation}
    \begin{split}
  (\cO_{\bP^1_S}[h, h^{-1}]\rar{\nabla^{\cP^h}}  \Omega^1_{\bP^1_S/S}(\log T) [h, h^{-1}] )\otimes & (\cO_{\bP^1_S}(-T)[h, h^{-1}] \rar{\nabla^{\cP^{-h}}}  \Omega^1_{\bP^1_S/S}[h, h^{-1}])\\
\rar{} (\cO_{\bP^1_S}[h, h^{-1}]\rar{d}  \Omega^1_{\bP^1_S/S}[h, h^{-1}]) & 
\end{split}
\end{equation}
and, consequently, a pairing  
\begin{equation}\label{eq:logpoincare}
 H^1_{dR, \log}(\bP^1_S/S,\cP^h) \otimes 
 H^1_{dR, \log}(\bP^1_S/S,\cP^{-h}(-T))\to  H^2_{dR}(\bP^1_S/S) [h, h^{-1}]) \iso \fO
\end{equation}
compatible with the Gauss-Manin connection. 
Since
 $$H^0(\bP^1_S, \cO_{\bP^1_S}(-T))= H^0(\bP^1_S,  \Omega^1_{\bP^1_S/S}) =0$$
  the logarithmic de Rham cohomology $H^\bullet_{dR, \log}(\bP^1_S/S,\cP^{-h}(-T))$ is computed by the complex 
\begin{equation}\label{log_de_Rham_P^{-h}_explicit}
H^1(\bP^1_S, \cO_{\bP^1_S}(-T) [h, h^{-1}] )  \rar{\nabla^{\cP^{-h}}} H^1(\bP^1_S,  \Omega^1_{\bP^1_S/S}[h, h^{-1}]),
\end{equation}
supported in cohomological degrees $1$ and $2$.  
The Serre duality induces a  $\fO$-linear duality between complexes (\ref{log_de_Rham_P^h_explicit}) and  (\ref{log_de_Rham_P^{-h}_explicit}) which reduces to (\ref{eq:logpoincare}) in degree $1$. In the other words,  if $\eta_i^*$, $1\leq i \leq n$,  is a basis for 
$H^1(\bP^1_S, \cO_{\bP^1_S}(-T))$ dual to $\eta_i \in  \Gamma(\bP^1_S, \Omega^1_{\bP^1_S/S}(\log T)) $ and  $H^1(\bP^1_S,  \Omega^1_{\bP^1_S/S})$ is identified with $\Gamma(S, \cO_S)$ via the trace map, then complex (\ref{log_de_Rham_P^{-h}_explicit}) has the form
\begin{equation}
\label{log_de_Rham_P^_{-h}_more_explicit}
 \bigoplus_{i=1}^n \fO \cdot \eta_i^*  \ \to \   \fO,
 \quad
 \sum f_i \eta_i^* \mapsto  - h \sum_i f_i\,.
\end{equation}  
In particular, we see that  $H^1_{dR, \log}(\bP^1_S/S,\cP^{-h}(-T))$ is a free  $\fO$-module and (\ref{eq:logpoincare}) is a perfect pairing.

\smallskip

Using Lemma \ref{lm:divisage} the morphism 
\begin{equation}\label{eq:another_calibration}
\gamma: \ H^1_{dR, \log}(\bP^1_S/S,\cP^{-h}(-T))\ \to\ H^1_{dR, \log}(\bP^1_S/S,\cP^{-h}),
\end{equation}
induced by $\cP^{-h}(-T) \to \cP^{-h}$,  is an isomorphism after inverting $n$ and $h$.
\begin{lem} For every $1\leq i, j \leq n$, we have that 
$$\gamma(\eta_i^* - \eta_j^*) = -h([\eta_i] - [\eta_j]).$$
\end{lem}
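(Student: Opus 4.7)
The plan is to compute $\gamma$ by explicit Čech hypercohomology using the two-element affine cover $\{U_0, U_\infty\}$ of $\bP^1_S$, where $U_0 = \bP^1_S \setminus T_\infty$ and $U_\infty = \bP^1_S \setminus \bigcup_{i=1}^n T_i$. The first step is to identify $\eta_i^*$ at the Čech level. Since $T \cap (U_0 \cap U_\infty) = \emptyset$, the restriction $\cO(-T)|_{U_0\cap U_\infty}$ coincides with $\cO|_{U_0\cap U_\infty}$, so any regular function on $U_0\cap U_\infty$ is a Čech $1$-cochain for $\cO(-T)$. Using the Serre duality trace $[\omega] \mapsto \sum_k \mathrm{Res}_{z_k}(\omega)$, the Lagrange interpolation polynomial $L_i(x) = \prod_{k\ne i}(x-z_k)/(z_i-z_k)$ satisfies $(L_i, \eta_j) = L_i(z_j) = \delta_{ij}$, so $\eta_i^*$ is represented by $L_i$ and $\eta_i^* - \eta_j^*$ by $L_i - L_j$.

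Next, I would make the cocycle condition explicit. Applying $\nabla^{\cP^{-h}}\!:f\mapsto df - hf\eta$ to $L_i - L_j$ and performing a partial fraction decomposition, one gets $(L_i - L_j)\eta = (\eta_i - \eta_j) + R(x)\,dx$ for some polynomial $R\in\cO_S[x]$, since $(L_i - L_j)(z_k) = \delta_{ik} - \delta_{jk}$. Hence
$$\nabla^{\cP^{-h}}(L_i - L_j) = \bigl[(L_i - L_j)' - h R(x)\bigr] dx \;-\; h(\eta_i - \eta_j).$$
The first summand is a polynomial $1$-form, hence lies in $\Gamma(U_0,\Omega^1)$. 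The second summand equals $h(z_i-z_j)\,dx/((x-z_i)(x-z_j))$, which in the coordinate $y = 1/x$ becomes $-h(z_i-z_j)\,dy/((1-z_iy)(1-z_jy))$, regular at $y=0$; thus $h(\eta_i - \eta_j) \in \Gamma(U_\infty,\Omega^1)$. This regularity-at-infinity is the key geometric point, and the Čech splitting it provides shows that $L_i - L_j$ indeed represents a class in $H^1_{dR,\log}(\bP^1_S/S,\cP^{-h}(-T))$.

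Finally, I would transport this class to $H^1_{dR,\log}(\bP^1_S/S,\cP^{-h})$ via the inclusion of complexes inducing $\gamma$. Viewed as a $1$-cochain for $\cO \supset \cO(-T)$, the function $L_i - L_j$ is already Čech-exact (write $L_i - L_j = (L_i - L_j) - 0$ with $L_i - L_j \in \Gamma(U_0, \cO)$). Changing the Čech–de Rham representative accordingly, the data $(L_i - L_j,\, \bigl[(L_i - L_j)' - h R\bigr]dx,\, h(\eta_i - \eta_j))$ becomes equivalent to $(0, -h(\eta_i - \eta_j), -h(\eta_i - \eta_j))$, which glues to the global logarithmic $1$-form $-h(\eta_i - \eta_j) \in \Gamma(\bP^1_S, \Omega^1_{\bP^1_S/S}(\log T))$. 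Its class modulo $\nabla^{\cP^{-h}}(\Gamma(\cO))$ is $-h([\eta_i] - [\eta_j])$, establishing the claim.

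The computation itself is elementary; the main obstacle is careful sign-bookkeeping between the Čech boundary, the total differential of the Čech–hypercohomology bicomplex, and the Serre duality conventions used to define $\eta_i^*$. The substantive content is the Lagrange identification of the dual basis together with the regularity of $\eta_i - \eta_j$ at infinity, which enables the Čech splitting that feeds directly into the definition of $\gamma$.
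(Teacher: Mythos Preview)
The paper leaves this proof to the reader, so there is nothing to compare against; your \v{C}ech--de Rham computation with the cover $\{U_0,U_\infty\}$ and the Lagrange-interpolation representatives for the $\eta_i^*$ is a correct and natural way to fill the gap. The two substantive points---that $L_i$ represents $\eta_i^*$ under the residue pairing and that $\eta_i-\eta_j$ extends holomorphically across $\infty$---are both handled correctly, and the coboundary manipulation in the larger complex $(\cO\to\Omega^1(\log T))$ indeed yields the global logarithmic form $-h(\eta_i-\eta_j)$.

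One small remark: your sign comes out right precisely because of a consistent choice of \v{C}ech differential $\delta(s_0,s_\infty)=s_\infty-s_0$ together with the total differential $D=\delta+(-1)^p\nabla$; with other equally common conventions the intermediate cocycle reads $(+h(\eta_i-\eta_j),+h(\eta_i-\eta_j))$ and the minus sign is then absorbed into the edge map. Since you already flag sign-bookkeeping as the main hazard, it would be worth stating your conventions explicitly at the outset rather than only at the end.
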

Proof is left to the reader.
\begin{cor}\label{cor:duality_section2}
The logarithmic Poincare duality  (\ref{eq:logpoincare}) and $\gamma$ induce a perfect pairing 
\begin{equation}\label{eq:paringP^handP^{-h}}
H^1_{dR, \log}(\bP^1_S/S,\cP^h)[n^{-1}] \otimes H^1_{dR, \log}(\bP^1_S/S,\cP^{-h})[n^{-1}]\to \fO[n^{-1}]
\end{equation}
compatible with the Gauss-Manin connection and given by the formula:
\begin{gather*}
[\eta_i]  \otimes [\eta_j] \mapsto -h^{-1}(\delta_{ij} - \frac{1}{n}), \quad \;
1\leq i, j \leq n.
\end{gather*}
\end{cor}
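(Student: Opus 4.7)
The plan is to define the desired pairing as the composition of the logarithmic Poincar\'e pairing~\eqref{eq:logpoincare} with $\gamma^{-1}$, where $\gamma$ is as in~\eqref{eq:another_calibration}. Since $h$ is already invertible in $\fO$, Lemma~\ref{lm:divisage} guarantees that $\gamma$ becomes an isomorphism after inverting $n$, so the composition is well defined as a map to $\fO[n^{-1}]$. Everything then reduces to evaluating this composition on the basis elements $[\eta_i]\otimes[\eta_j]$.

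The main computation goes as follows. The analogue of Proposition~\ref{pr:basis_for_E} for $-h$ shows that $H^1_{dR,\log}(\bP^1_S/S,\cP^{-h})$ is freely generated over $\fO$ by $[\eta_1],\dots,[\eta_n]$ modulo the single relation $\sum_k[\eta_k]=0$. Using this relation,
\begin{equation*}
[\eta_j] \,=\, [\eta_j] - \tfrac{1}{n}\sum_{k=1}^{n}[\eta_k] \,=\, \tfrac{1}{n}\sum_{k=1}^{n}([\eta_j] - [\eta_k]),
\end{equation*}
and inverting $\gamma$ via the preceding lemma, which gives $\gamma(\eta_j^* - \eta_k^*) = -h([\eta_j] - [\eta_k])$, I obtain
\begin{equation*}
\gamma^{-1}([\eta_j]) \,=\, -\tfrac{1}{nh}\sum_{k=1}^{n}(\eta_j^* - \eta_k^*) \,=\, -h^{-1}\eta_j^* + \tfrac{h^{-1}}{n}\sum_{k=1}^{n}\eta_k^*.
\end{equation*}
Pairing with $[\eta_i]$ via the Serre-duality description of~\eqref{eq:logpoincare}, under which $\langle [\eta_i], \eta_k^*\rangle = \delta_{ik}$ after the trace identification $H^1(\bP^1_S,\Omega^1_{\bP^1_S/S})\cong \fO$, then produces $-h^{-1}(\delta_{ij}-1/n)$, exactly as claimed.

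Perfectness and Gauss--Manin compatibility follow essentially for free. The Serre pairing~\eqref{eq:logpoincare} is a perfect pairing of free $\fO$-modules (by the explicit duality between~\eqref{log_de_Rham_P^h_more_explicit} and~\eqref{log_de_Rham_P^_{-h}_more_explicit}) and $\gamma$ is an isomorphism over $\fO[n^{-1}]$, so the composition is perfect over $\fO[n^{-1}]$. Both~\eqref{eq:logpoincare} and $\gamma$ are induced by morphisms of logarithmic de Rham complexes of $\cO$-modules and hence commute with the Gauss--Manin connection (for~\eqref{eq:logpoincare} this is already asserted in the text). The main risk in this argument is arithmetic bookkeeping of signs and factors of $h$ when inverting $\gamma$; as a sanity check I would also compute $\langle [\eta_i]-[\eta_{i'}], [\eta_j]-[\eta_{j'}]\rangle$ directly from the preceding lemma, obtaining $-h^{-1}(\delta_{ij}-\delta_{i'j}-\delta_{ij'}+\delta_{i'j'})$, and verify agreement with what the claimed formula predicts.
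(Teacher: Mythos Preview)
Your proof is correct and is precisely the intended derivation: the paper states this corollary immediately after the lemma computing $\gamma(\eta_i^*-\eta_j^*)$ without further proof, and your computation of $\gamma^{-1}([\eta_j])$ via the relation $\sum_k[\eta_k]=0$ followed by the Serre pairing is exactly how one is meant to unpack it. The remarks on perfectness and Gauss--Manin compatibility are also accurate.
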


\section{Study of $\cP^h$ in characteristic $p>0$}
\label{sec 3}

For the duration of this section, let $p$ be a prime integer that does not divide $n$. We denote by 
$\cP^h \otimes \bF_p$ 
 (resp. $E \otimes \bF_p$) the restriction of  $\cP^h$ along the embedding $ \bP^1_{S_{\bF_p}}: = \bP^1_{S}\times \Spec \bF_p \mono \bP^1_{S}$  (resp. $S_{\bF_p}: = S\times \Spec \bF_p \mono S$). 
Note that $E\otimes  \bF_p$ can be also interpreted  as
the relative logarithmic de Rham cohomology of $\bP^1_{S}$
with coefficients in  $\cP^h\otimes \bF_p$.

Recall that $E\otimes  \bF_p$  is a module over $\bF_p[h, h^{-1}]$. In Sections
\ref{ss:cohomology_of_E}, \ref{ss:critical_locus}, \ref{ss:irreducibility} we study the local system
\begin{equation}
    E^\flat :=  E\otimes_{ \bZ[h]}  \bF_p[h, (h^p-h)^{-1}]
\end{equation}
obtained from $E\otimes  \bF_p$ by inverting $h^{p-1} -1$.
In Section \ref{ss:cohomology_of_E} we show that $E^\flat$ has trivial de Rham cohomology in all degrees. In particular, $E^\flat$ has no non-zero flat sections.
In Section 
\ref{ss:critical_locus} we consider the critical locus $ \text{Crit}_P \subset \bA^1_{S}$ of the function $P(x)=\prod_i(x-z_i)$ and the open subset $S^\circ \subset S$ over which the  projection
$\nu: \text{Crit}_P \to S$ is a (finite) \'etale map.
We show that, for $p>2$, the restriction of $E^\flat$ to  the non-empty open subset $S^\circ_{\bF_p}  \subset S_{\bF_p}$ 
is isomorphic to the restriction of $\cP^h$ to $\nu^{-1}(S^\circ _{\bF_p}) \subset \text{Crit}_P \times \Spec \bF_p$, that is the relative de Rham cohomology of the $1$-dimensional family 
$\bP^1 \times S^\circ _{\bF_p} \to S^\circ _{\bF_p}$ is isomorphic to the de Rham cohomology of the finite \'etale map $\nu^{-1}(S^\circ _{\bF_p})\to S^\circ _{\bF_p}$. 
In Section
\ref{ss:irreducibility} we apply the above result to prove the irreducibility of $E^\flat$ for $n<p$.

In Section
\ref{sec 3.6} we study the specialization of $E\otimes \bF_p$ to $h\in \bF_p$. In particular, we prove Theorem \ref{thm:main3_introduction}.

The notion of the $p$-curvature and (generalized)  Cartier isomorphism play a crucial role in our study. We review this material in Section
\ref{ss:review}.

\subsection{Review of $p$-curvature and Cartier isomorphism}
\label{ss:review}

 Recall from \cite[Section 3.0]{katz}  the notion of $p$-curvature of an $\cO_X$-module $M$ over a smooth scheme $X$ over a base $Y$ of characteristic $p$ equipped with an integrable connection $\nabla$ relative to $Y$. For any vector field $\theta\in \text{Der}(U/Y)$ on an open subset $U\subset X$ the $p$-curvature operator 
  $$\Psi_\theta: M|_{U} \to M|_{U}$$
  is defined by the formula 
  $$\Psi_\theta = \nabla_{\theta}^p - \nabla_{\theta^{[p]}},$$
  where $\theta^{[p]}$ is the  $p$-th iterate of the derivation $\theta$
  (which is again a derivation) and 
  $ \nabla_{\theta}$ is  the covariant derivative along $\theta$. One verifies that $\Psi_\theta$  
  is $\cO_U$-linear and flat with respect to $\nabla$. Moreover, one has
  $$\Psi_{\theta +\theta'} = \Psi_{\theta}+\Psi_{\theta'}, \quad \Psi_{\theta}\Psi_{\theta'}= \Psi_{\theta'}\Psi_{\theta},\quad  \Psi_{f\theta} = f^p \Psi_\theta\,. $$
  The $p$-curvature operators can be organized into a single $\cO_X$-linear map 
  \begin{equation}\label{def:p-curvature}
  \Psi: M \to F_{\text{abs}}^* \Omega^1_{X/Y}\otimes  M,  
   \end{equation}
  where $ F_{\text{abs}}: X\to X$ is the absolute Frobenius. Homomorphism (\ref{def:p-curvature}) is uniquely characterised by the property that, for every vector field $\theta \in  \text{Der}(U/Y)$ on an open subset $U\subset X$, the composition of 
$\Psi$  with the interior product $\iota_\theta$ is $\Psi_\theta$.

  We will use a generalization of the Cartier isomorphism for the de Rham cohomology with coefficients in a module  equipped with an integrable connection due to Ogus \cite{ogus}. Let us recall the statement. Given $(M, \nabla)$ as above consider the $F$-Higgs complex
  \begin{equation}\label{F-Higgs}
      M \to F_{\text{abs}}^* \Omega^1_{X/Y}\otimes  M  \to F_{\text{abs}}^* \Omega^2_{X/Y}\otimes  M \to \cdots 
  \end{equation}
  where the differential carries a germ $\omega \otimes v $ of 
  $F_{\text{abs}}^* \Omega^m_{X/Y}\otimes  M$ to the image of 
  $\omega \otimes \Psi (v)$ under the wedge product map
  $$F_{\text{abs}}^* \Omega^m_{X/Y}\otimes F_{\text{abs}}^* \Omega^1_{X/Y}\otimes  M
  \to F_{\text{abs}}^* \Omega^{m+1}_{X/Y}\otimes  M. $$
  Recall that, for every quasi-coherent $\cO_X$-module $\Omega$ on $X$, its Frobenius pullback
  $F_{\text{abs}}^* \Omega$ has a canonical integrable connection, called the Frobenius descent connection, which is uniquely characterized by the property that sections of $F_{\text{abs}}^{-1} \Omega \subset F_{\text{abs}}^* \Omega$  are flat.
  Define an integrable connection on  $F_{\text{abs}}^* \Omega^m_{X/Y} \otimes M$ to be 
  the tensor product of  the Frobenius descent connection on $F_{\text{abs}}^* \Omega^m_{X/Y}$ and $\nabla$ on $M$. Then
   the differential in (\ref{F-Higgs}) is flat.
 In particular, cohomology sheaves $\cH^m_\psi(X/Y, M)$ of
  (\ref{F-Higgs}) are endowed with a connection (that we also denote by $\nabla$). 
   Ogus proved in \cite[Theorem 1.2.1]{ogus} that the $p$-curvature of $\cH^m_\psi(M, \nabla)$ is zero and there is a canonical isomorphism:
  \begin{equation}\label{ogus_p_higgs}
C:  \cH^m_{dR}(X/Y, M) \iso \cH^m_\psi(X/Y, M)^{\nabla=0}.
\end{equation}
Here the superscript  $\nabla=0$ stands for the subsheaf of flat sections and  $\cH^m_{dR}(X/Y, M)$ is the de Rham cohomology sheaf (on $X$). 

Let us explain an explicit construction of the Cartier isomorphism (\ref{ogus_p_higgs}) for $X$ being an open subscheme of $ \bA_Y^1= Y \times \Spec \bF_p[x].$ For $m=0$ the Cartier operator carries a flat local section of $M$ to itself. For $m=1$ and a cohomology class represented by a local section $dx\otimes v$ of 
$\Omega^1_{X/Y}\otimes  M$, we have that
 \begin{equation}\label{ogus_explicit_formula}
C([dx\otimes v]) = - [dx \otimes  \Big(\nabla
\Big(\frac{\partial}{\partial x}\Big)\Big)^{p-1}(v)]. 
\end{equation}

  In the logarithmic context the $p$-curvature is defined similarly yielding  a complex  
  \begin{equation}\label{logHiggscomplex}
   M \to F_{\text{abs}}^* \Omega^1_{X/Y}(\log T)  \otimes M \to  F_{\text{abs}}^* \Omega^2_{X/Y}(\log T)  \otimes M \to \cdots
   \end{equation}
  see \cite[Section 3.1]{ogus}. There is also a logarithmic version of the generalized Cartier isomorphism:
  \begin{equation}\label{ogusmain}
C: \cH^m_{dR, \log}(X/Y, M) \iso \cH^m_{\psi, \log}(X/Y, M)^{\nabla=0};
\end{equation}
see \cite[Theorem 3.1.1]{ogus}.

\subsection{De Rham cohomology of $E \otimes \bF_p$}
\label{ss:cohomology_of_E}
Recall that $E\otimes  \bF_p$  is a module over $\bF_p[h, h^{-1}]$. Set
\begin{equation}
    E^\flat :=  E\otimes_{ \bZ[h]}  \bF_p[h, (h^p-h)^{-1}].
\end{equation}

\begin{thm}
 Assume that   $p$ does not divide $n$. Then, for every integer $m\geq 0$, we have 
 $$ H^m_{dR}(S, E^\flat)=0.$$
 \end{thm}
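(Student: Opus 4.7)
The plan is to combine the Leray spectral sequence with the affinisation result of Proposition~\ref{pr:log_to_open}, then invoke Ogus's generalized Cartier isomorphism together with a $p$-curvature calculation. I would organise the proof as three reductions followed by a divisibility argument for the $p$-curvature.

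First I would apply the Leray spectral sequence for the smooth proper projection $\pi\colon \bP^1_{S_{\bF_p}}\to S_{\bF_p}$ of relative dimension one to the logarithmic de~Rham complex of $\cP^h$. Since $R^q\pi_*$ vanishes for $q\geq 2$, and since the fibrewise $H^0$ is zero by Proposition~\ref{pr:basis_for_E}(1), only the $q=1$ row of the $E_2$-page survives and gives
\[
 H^m_{dR}(S,E^\flat)\;\iso\; H^{m+1}_{dR,\log}(\bP^1_{S_{\bF_p}},\cP^h)\otimes_{\bF_p[h,h^{-1}]} R,\qquad R:=\bF_p[h,(h^p-h)^{-1}].
\]
It therefore suffices to prove that the total logarithmic de~Rham cohomology on the right vanishes in degrees $\geq 1$.

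Second, set $U:=\bP^1_S\setminus T$, an affine open. The hypotheses of Proposition~\ref{pr:log_to_open} are satisfied over $R$: the identity $h^p-h=\prod_{j\in \bF_p}(h-j)$ makes every $h-m$ (for $m\in \bZ_{>0}$) a unit after inverting $h^p-h$, and because $p\nmid n$ we have $nh+m=n(h+m/n)$ is also a unit. Hence the log-to-open restriction is an isomorphism and it remains to show $H^k_{dR}(U_{\bF_p},(\cP^h\otimes R)|_U)=0$ for $k\geq 1$.

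Third, since $U$ is affine and $\cP^h|_U=(\cO_U,d+h\eta)$ has rank one, Ogus's generalized Cartier isomorphism \eqref{ogusmain} identifies this cohomology with the flat sections of the $p$-Higgs complex whose differential is wedge product with the $p$-curvature $\Psi \in F^*\Omega^1_{U}$. The key observation is that for every $a\in \bF_p$ the rational function $P^{-a}$ is a nowhere-vanishing flat section of $\cP^a$ on $U$, which trivialises the connection and forces $\Psi$ to vanish identically at $h=a$. But each coefficient of $\Psi$, expressed via the formulae $\nabla_{\partial_x}=\partial_x+h\,P'/P$ and $\nabla_{\partial_{z_i}}=\partial_{z_i}-h/(x-z_i)$, is a polynomial in $h$ of degree $\leq p$, so vanishing at every $a\in \bF_p$ yields a factorisation
\[
 \Psi = (h^p-h)\,\omega,\qquad \omega \in F^*\Omega^1_{U_{\bF_p}}.
\]
Since $h^p-h$ is a unit in $R$, after inverting it the Higgs complex becomes the Koszul complex of $\omega$, and the vanishing reduces to the statement that $\omega$ is non-degenerate enough for this Koszul complex to be acyclic in positive degrees.

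The main obstacle is the final non-degeneracy step: while the factorisation $\Psi=(h^p-h)\omega$ follows cleanly from the flatness of $P^{-a}$ for $a\in \bF_p$, checking that $\omega$ has no zeroes and that its wedge powers up to top degree produce an injective Koszul differential on $U_{\bF_p}$ requires an explicit description of $\omega$ in the coordinates $x,z_1,\dots,z_n$ and an analysis tailored to the hyperplane-arrangement geometry of $U$. One natural cross-check is provided by the steepest-descent identification of $E^\flat$ with the pullback of $\cP^h$ along the critical-locus map $\nu\colon \mathrm{Crit}_P\to S$ mentioned in the introduction, under which $\omega$ corresponds to the logarithmic form defining $\cP^h$ on the étale cover $\nu^{-1}(S^\circ)$.
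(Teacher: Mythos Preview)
Your architecture coincides with the paper's: reduce via the Leray spectral sequence and Proposition~\ref{pr:log_to_open} to the vanishing of $H^\bullet_{dR}(U_{\bF_p},\cP^{h,\flat})$, then invoke Ogus's Cartier isomorphism to reduce to acyclicity of the $F$-Higgs complex, which becomes the Koszul complex for $\wedge\omega$ once you factor $\Psi=(h^p-h)\,\omega$. Your derivation of that factorisation via the flat sections $P^{-a}$, $a\in\bF_p$, is a pleasant alternative to citing Katz's formula directly, as the paper does.

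The gap you flag is real but far smaller than you suggest, and your proposed cross-check via the steepest-descent identification is both circular (that theorem is proved later in the paper and uses the same $p$-curvature input) and unnecessary. You already hold the explicit description of $\omega$: since $\Psi$ has degree $\le p$ in $h$ and is divisible by $h^p-h$, the form $\omega$ is the $h^p$-coefficient of $\Psi$, and from your own formula $\nabla_{\partial_{z_i}}=\partial_{z_i}-h/(x-z_i)$ one reads off $\iota_{\partial_{z_i}}\omega=\big(1/(z_i-x)\big)^p$, a unit on $U$. Hence $\omega$ is nowhere vanishing and the Koszul complex $\cO_U\rar{\wedge\omega}F^*_{\text{abs}}\Omega^1_U\rar{\wedge\omega}\cdots$ is exact in every degree; no special hyperplane-arrangement analysis is needed. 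This is exactly the paper's endgame: there $\omega$ is identified as $1\otimes\eta\in F^*_{\text{abs}}\Omega^1_U$ via \cite[Prop.~7.2.2]{katz}, and $\iota_{\partial_{z_i}}\eta=1/(z_i-x)$ is observed to be invertible on $U$. One minor point: Proposition~\ref{pr:log_to_open} is stated for \emph{relative} cohomology, whereas your Step~2 uses it for absolute cohomology; this is harmless because its proof (Lemma~\ref{lm:divisage}) is a quasi-isomorphism of complexes of sheaves. The paper avoids the issue by reversing the order of your first two steps: first restrict to $U$ relatively, then apply Leray for $U\to S$.
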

\begin{proof}
Set
\begin{equation}
   \cP^{h, \flat}: = \cP^h \otimes_{ \bZ[h]}  \bF_p[h, (h^p-h)^{-1}].
\end{equation}
 By Proposition \ref{pr:log_to_open} the restriction morphism
$$ H^\bullet_{dR, \log}(\bP^1_{S}/S, \cP^{h, \flat})\to 
H^\bullet_{dR}(U/S,  \cP^{h, \flat}|_{U}) $$
is an isomorphism. In particular,
$$   E^\flat \iso H^1_{dR}(U/S,  \cP^{h, \flat}|_{U}).$$
Since the relative de Rham cohomology groups  are trivial in all other degrees using
the Leray spectral sequence we conclude that
$$ H^m_{dR}(S, E^\flat) \iso H^{m+1}_{dR}(U,  \cP^{h, \flat}|_{U}).$$
Thus, it is enough to show that 
 \begin{equation}\label{eq:globalcohvan}
H^{\bullet}_{dR}(U,  \cP^{h, \flat}|_{U})=0
\end{equation}
in all degrees. We will do this using the generalized Cartier isomorphism.

  The $p$-curvature of the local system $\cP^h\otimes \bF_p $ can be described explicitly. Since $\cP^h\otimes \bF_p $ is trivial rank $1$ module over  $\cO_{\bP^1_S} \otimes \bF_p[h, h^{-1}]$,
  the morphism $\Psi$ is necessarily a multiplication by a section of 
  $$F_{\text{abs}}^* \Omega^1_{\bP^1_{S_{\bF_p}}}(\,\log (T_{\bF_p})\,) \otimes \bF_p[h, h^{-1}] =\cO_{\bP^1_{S_{\bF_p}}}[h, h^{-1}] \otimes_{F_{\text{abs}}}
  \Omega^1_{\bP^1_{S_{\bF_p}}}(\,\log (T_{\bF_p})\,). $$ 
  According to \cite[Prop. 7.2.2]{katz} this section is equal to
  \begin{equation}\label{eq:p-curvatureoflinebundle}
     (h^p-h) \otimes \eta. 
\end{equation}
  In local coordinates $x, z_1, {\dots,} z_n$ on $\bP^1_{Z\otimes \bF_p}$ the above formula reads as follows:
   \begin{equation}\label{eq:p-curvatureoflinebundle_expl}
  \Psi_{\!\!\frac{\partial}{\partial z_i}}= (h^p-h) \Big(P^{-1}\frac{\partial P}{\partial z_i} \Big)^p, 
  \quad \Psi_{\!\!\frac{\partial}{\partial x}}= (h^p-h)\Big( P^{-1}\frac{\partial P}{\partial x}\Big)^p.
  \end{equation}
  The key property of  $\eta$ that we need
  is that it
does not vanish on $U$. Indeed, the interior product of $\eta$
with the vector field  $\frac{\partial }{\partial z_i}$, $1\leq i\leq n$, is equal to 
$$P^{-1}\frac{\partial P}{\partial z_i} = \frac{1}{z_i-x},$$
which is an invertible function on $U$. It follows that the $F$-Higgs complex (\ref{F-Higgs}) of $\cP^{h, \flat}|_{U}$ is acyclic in all degrees. Indeed, the latter is obtained 
from the acyclic complex of $\cO$-modules
$$\cO_{U_{\bF_p}} \rar{\wedge \eta} \Omega^1 _{U_{\bF_p}}  \rar{\wedge \eta} \cdots $$
by pulling it back along $F_{\text{abs}}$ and then tensoring with  $\bF_p[h, (h^p-h)^{-1}]$.
Applying the result of Ogus (\ref{ogus_p_higgs}) 
we conclude that the de Rham cohomology sheaves  $\cH^\bullet_{dR}(U,  \cP^{h, \flat}|_{U})$ are trivial in all degrees.  
This proves the cohomology vanishing (\ref{eq:globalcohvan}) and, thus, completes the proof of the theorem. 
\end{proof}
\begin{cor}\label{cor:no_solutions} Let $p$ be a prime number that does not divide $n$, and let $R$ be a ring equipped with a homomorphism $\bF_p[h, (h^p-h)^{-1}] \to R.$
Then $E \otimes _{\bZ[h]} R$ has no non-zero flat sections.
    \end{cor}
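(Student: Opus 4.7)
The space of flat sections of $E \otimes_{\bZ[h]} R$ (with respect to the Gauss--Manin connection on $S_R := S \times \Spec R$) is by definition $H^0_{dR}(S_R, E \otimes_{\bZ[h]} R)$, so the plan is to show that this group vanishes. The strategy is to repeat the proof of the preceding theorem, working over the base $R$ in place of $\bF_p[h,(h^p-h)^{-1}]$; the hypothesis that $R$ is an $\bF_p$-algebra in which $h^p - h$ is invertible is precisely what makes the whole argument extend.

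First I would verify that Proposition \ref{pr:log_to_open} applies with the given homomorphism $\bZ[h,h^{-1}] \to R$. The only point to check is the invertibility in $R$ of $h - m$ and $nh + m$ for every positive integer $m$. Since $R$ is an $\bF_p$-algebra, both quantities depend only on $m \bmod p$. The identities
$$\prod_{j=0}^{p-1}(h - j) \,=\, h^p - h, \qquad \prod_{j=0}^{p-1}(nh - j) \,=\, n^p h^p - nh \,=\, n(h^p - h)$$
(the second using $n \in \bF_p$) together with the invertibility of $h^p - h$ and of $n$ (since $p \nmid n$) give the required invertibilities. Consequently,
$$E \otimes_{\bZ[h]} R \;\iso\; H^1_{dR}(U_R/S_R, \cP^h|_U \otimes_{\bZ[h]} R),$$
where $U = \bP^1_S \setminus T$.

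Next, I would extend the $F$-Higgs argument from the theorem. The $p$-curvature of $\cP^h$ on $U$ is multiplication by $(h^p - h) \otimes \eta$, and $\eta$ does not vanish on $U$ (its contraction with $\partial/\partial z_i$ is the invertible function $1/(z_i - x)$). Therefore the $F$-Higgs complex of $\cP^h|_U \otimes_{\bZ[h]} R$, obtained by pulling back the Koszul-type complex $(\Omega^\bullet_{U_{\bF_p}}, \wedge \eta)$ along $F_{\text{abs}}$ and tensoring with $R$, is acyclic (here we use that $h^p - h$ is invertible in $R$). By Ogus's generalized Cartier isomorphism \eqref{ogus_p_higgs} the de Rham cohomology sheaves $\cH^m_{dR}(U_R/\Spec R, \cP^h|_U \otimes R)$ vanish for all $m$, and hence so do the global de Rham cohomology groups $H^m_{dR}(U_R, \cP^h|_U \otimes R)$.

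Finally, I would invoke the Leray spectral sequence for the relatively affine map $U_R \to S_R$. Since the fibers of $U \to S$ are affine and $H^0_{dR}$ of $\cP^h$ on a fiber vanishes (by Proposition \ref{pr:basis_for_E}), the spectral sequence degenerates to isomorphisms $H^{m-1}_{dR}(S_R, E \otimes_{\bZ[h]} R) \iso H^m_{dR}(U_R, \cP^h|_U \otimes R)$. Taking $m = 1$ gives the desired vanishing of $H^0_{dR}(S_R, E \otimes_{\bZ[h]} R)$. There is no serious obstacle in this argument; the one place to be careful is the invertibility verification in the first step, since the hypothesis of Proposition \ref{pr:log_to_open} is stated for all positive integers rather than for $h^p - h$ directly.
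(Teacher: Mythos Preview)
Your proof is correct and follows the same strategy as the paper: the corollary is stated without a separate proof, and the intended argument is precisely to observe that the proof of the preceding theorem goes through verbatim after base change to any $\bF_p[h,(h^p-h)^{-1}]$-algebra $R$, which is exactly what you do. Your verification that Proposition~\ref{pr:log_to_open} applies (via the factorizations $h^p-h=\prod_j(h-j)$ and $(nh)^p-nh=n(h^p-h)$) is the one nontrivial check, and it is handled correctly.
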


This statement also can be deduced from \cite[Theorem 3.3]{EV}.

We remark that Corollary implies seemingly stronger statement:  $E \otimes _{\bZ[h]} R$ has non-zero flat sections over the formal neighborhood of any closed point of $S_{\bF_p}$.
Indeed, we have the following general result.
\begin{lem}\label{lm:base_change_to_disk}
    Let $T=\Spec A$ be a smooth algebra over a perfect field $\K$ of characteristic $p$, $a\in T(\K)$ a $\K$-point of $T$, $\mathtt{m}$ the corresponding maximal ideal of $A$,
    and let $\hat{A}$ be the $\mathtt{m}$-adic completion $A$, that is $\hat{A}= \limfrom \, A/\mathtt{m}^n$. Let $M$ be a $A$-module with an integrable connection $\nabla: M \to \Omega^1_{T} \otimes M$. Define a connection on $M \otimes _A \hat{A}$ by the Leibniz formula. Then the map\footnote{For a ring $B$ of characteristic $p$, we denote by $B^p$ the subring of $p$-th powers.}
    $$ M^{\nabla=0} \otimes_{A^p} \hat{A}^p  \to  (M \otimes _A \hat{A})^{\nabla =0}, \quad  v\otimes f \mapsto v\otimes f$$
    is an isomorphism.
\end{lem}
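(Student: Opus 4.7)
The plan is to reduce the lemma to two ingredients: the flatness of $\hat{A}^p$ over $A^p$, and the Frobenius-compatible base change identity $\hat{A}\cong A\otimes_{A^p}\hat{A}^p$. Before anything else, one must observe that $\nabla$ is automatically $A^p$-linear: for any $f\in A$ and $m\in M$, the Leibniz rule gives $\nabla(f^p m)=pf^{p-1}df\otimes m+f^p\nabla(m)=f^p\nabla(m)$, since $p=0$ in $A$. Hence $M^{\nabla=0}$ is a submodule over $A^p$ and the map in question is well-defined; the goal becomes to recognize $(M\otimes_A\hat{A})^{\nabla=0}$ as the base change of $M^{\nabla=0}$ from $A^p$ to $\hat{A}^p$.

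The central step is the identification $\hat{A}\cong A\otimes_{A^p}\hat{A}^p$, and this is where I expect the main work. Since $\K$ is perfect and $T$ is smooth at $a$, by Cohen's structure theorem $\hat{A}\cong \K[[x_1,\dots,x_d]]$ for suitable uniformizers $x_i$; hence $\hat{A}^p=\K[[x_1^p,\dots,x_d^p]]$ and $\hat{A}$ is free over $\hat{A}^p$ of rank $p^d$ with basis $\{x^\alpha:0\le\alpha_i<p\}$. Lifting the $x_i$ to \'etale local coordinates on $\Spec A$ at $a$, smoothness together with perfectness of $\K$ ensures the same monomials form a basis of $A$ over $A^p$ after localization near $a$, so the multiplication map $A\otimes_{A^p}\hat{A}^p\to\hat{A}$ becomes an isomorphism of $\hat{A}^p$-algebras. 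The subtlety is to make the coordinate choices on $A$ and on $\hat{A}$ compatible; this is precisely what the perfectness hypothesis buys us, and it would fail without smoothness. As a byproduct, $\widehat{\Omega}^1_{\hat{A}/\K}\cong\Omega^1_{A/\K}\otimes_A\hat{A}$ canonically, so the Leibniz-extended connection on $M\otimes_A\hat{A}=M\otimes_{A^p}\hat{A}^p$ is exactly $\nabla\otimes_{A^p}\hat{A}^p$.

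The remaining step is to apply flatness. Since $A^p$ is Noetherian and $\hat{A}^p$ is (as one checks in coordinates) the completion of $A^p$ at the maximal ideal $\mathtt{m}\cap A^p$, the map $A^p\to\hat{A}^p$ is flat. Tensoring the exact sequence $0\to M^{\nabla=0}\to M\xrightarrow{\nabla}\Omega^1_{A/\K}\otimes_A M$ of $A^p$-modules with $\hat{A}^p$ therefore yields an exact sequence whose leftmost term is $M^{\nabla=0}\otimes_{A^p}\hat{A}^p$ and whose next two terms, after the identifications above, are $M\otimes_A\hat{A}$ and $\Omega^1_{A/\K}\otimes_A(M\otimes_A\hat{A})$ with differential the extended $\nabla$. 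Taking kernels on the right and left gives the desired isomorphism.
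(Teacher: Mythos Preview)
Your proof is correct and follows essentially the same approach as the paper: the paper's proof consists of the same three ingredients---the left exact sequence $0\to M^{\nabla=0}\to M\to \Omega^1_T\otimes M$ of $A^p$-modules, flatness of $\hat{A}^p$ over $A^p$, and the isomorphism $A\otimes_{A^p}\hat{A}^p\iso\hat{A}$---stated more tersely. Your added detail (the $A^p$-linearity of $\nabla$, the coordinate verification of $A\otimes_{A^p}\hat{A}^p\cong\hat{A}$, and the identification of $\hat{A}^p$ as a completion of $A^p$) is all correct and makes explicit what the paper leaves to the reader.
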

\begin{proof}
    Consider the left exact sequence  of $A^p$-modules $0\to M^{\nabla=0} \to M \to  \Omega^1_{T} \otimes M$. Since $\hat{A}^p$ is flat over $A^p$ the sequence remains exact after tensoring over $A^p$ with $\hat{A}^p$. It remains to observe that the product map $A\otimes_{A^p} \hat{A}^p \to \hat{A}$ is an isomorphism. 
\end{proof}
Using the above lemma and Corollary \ref{cor:no_solutions} we obtain the following.
\begin{cor}\label{cor:no_formal_solutions} Let $p$ be a prime number that does not divide $n$, and let $R$ be a ring equipped with a homomorphism $\bF_p[h, (h^p-h)^{-1}] \to R.$
Let $\K$ be a field of characteristic $p$, $\hat{A}$ be the formal completion of the local ring of $S_\K$ at a $\K$- point (thus, $\hat {A}$ is the ring of formal power series in $n$ variables with coefficients in $\K$).
Then $E\otimes _{\cO(S)} \hat {A} \otimes _{\bZ[h]} R$ has no non-zero flat sections.
    \end{cor}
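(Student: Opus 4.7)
The plan is to deduce Corollary \ref{cor:no_formal_solutions} formally from the two immediately preceding results: Corollary \ref{cor:no_solutions} and Lemma \ref{lm:base_change_to_disk}. No new geometric input is needed; the argument is essentially a matter of organising tensor products so that the hypotheses of the lemma apply on the nose.

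First, without loss of generality, I would replace $\K$ by its perfect closure (which is faithfully flat over $\K$, so cannot create flat sections out of nothing). Set $A := \cO(S_\K) = \K[z_i,(z_i-z_j)^{-1}]$, a smooth $\K$-algebra whose $\mathtt{m}$-adic completion at the maximal ideal of the chosen $\K$-point is exactly $\hat{A}$. Introduce the auxiliary ring $R' := \K\otimes_{\bF_p}R$, equipped with the $\bF_p[h,(h^p-h)^{-1}]$-algebra structure inherited from $R$, and form $M := E\otimes_{\bZ[h]}R'$, viewed as an $A$-module with integrable Gauss-Manin connection over $\K$. A routine unwinding of tensor products identifies
\[
M\otimes_A\hat{A}\ \cong\ E\otimes_{\cO(S)}\hat{A}\otimes_{\bZ[h]}R
\]
as modules with connection. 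Corollary \ref{cor:no_solutions} applied to $R'$ then gives $M^{\nabla=0}=0$, and Lemma \ref{lm:base_change_to_disk} forces
\[
\bigl(M\otimes_A\hat{A}\bigr)^{\nabla=0}\ \cong\ M^{\nabla=0}\otimes_{A^p}\hat{A}^p\ =\ 0,
\]
which is exactly the desired statement.

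The only genuinely delicate step is checking that the Gauss-Manin connection of Corollary \ref{cor:no_solutions}, which is relative to $\Spec R'$, agrees as an integrable $\K$-linear connection on the $A$-module $M$ with the connection required in Lemma \ref{lm:base_change_to_disk}. This amounts to a compatibility of relative differentials under the obvious projection $\Omega^1_{A\otimes_\K R'/\bF_p}\twoheadrightarrow \Omega^1_{A/\K}$ and its commutation with the completion $A\to\hat A$. Apart from this routine verification and the reduction to the perfect case, the proof is entirely formal, and the main conceptual content resides in the two cited results rather than in this deduction.
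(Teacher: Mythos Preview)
Your proposal is correct and follows exactly the approach the paper indicates: the paper's own proof is the single sentence ``Using the above lemma and Corollary \ref{cor:no_solutions} we obtain the following,'' and you have supplied the bookkeeping that makes this invocation precise. Your reduction to perfect $\K$ is a useful addition, since Lemma \ref{lm:base_change_to_disk} is stated only over a perfect base field and the corollary does not assume this.
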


\subsection{Geometric description of $ E^\flat$.}
\label{ss:critical_locus}

 Consider a closed subscheme $\text{Crit}_P$ of $\bA^1_{S}\subset \bP^1_{S}$ given by equation
$$P'_x(x, z_1, \dots,  z_n)=0.$$
Here $$P'_x= nx^{n-1} - (n-1) (\sum_{i=1}^n z_i) \, x^{n-2} +\dots $$ stands for the partial derivative of the polynomial 
$P= \prod_{i=1}^n(x-z_i)$ with respect to the  $x$-variable. 
The projection $$\nu: \text{Crit}_P \times \Spec \bZ[n^{-1}] \to S\times \Spec \bZ[n^{-1}]$$
is a finite flat morphism of degree $n-1$.
Let $S^\circ \subset S  \times \Spec \bZ[n^{-1}] $ be the maximal open subset of 
$S\times \Spec \bZ[n^{-1}] $ over which $\nu$ is \'etale. 
Explicitly, $S^\circ $  is given inside $ S  \times \Spec \bZ[n^{-1}]$ by the inequality 
$$\text{Disc}(P'_x)\ne 0,$$
where $\text{Disc}(P'_x)$ is the discriminant of the polynomial $P'_x$ in the $x$-variable.
In particular, $S^\circ $ is an affine scheme. 
Set $\text{Crit}_P^\circ = \nu^{-1}(S^\circ)$. Thus, we have a finite \'etale map
$$\nu: \text{Crit}_P^\circ \to S^\circ .$$
Set
 $$\cL = \nu_* \big( \cP^h\big|_{\text{Crit}_P^\circ}\big).$$ 
This is a locally free 
$\cO_{S^\circ}[h, h^{-1}]$-module of rank $n-1$ and, since  $\nu$ is 
 \'etale,  $\cL$ inherits an integrable connection $\nabla^\cL$.
For brevity we shall not distinguish between the category of local systems on a smooth  affine scheme and the category of modules over its ring of functions equipped with an integrable connection. 
 Under this equivalence, we have that 
 $$\cL = (\Gamma(\text{Crit}_P^\circ,\cO_{\text{Crit}_P^\circ})\otimes \bZ[h, h^{-1}], \nabla^\cL),$$
 where the connection $\nabla^\cL$ is given by the formula
\begin{equation}\label{eq:connection_on_L}
\nabla^\cL_{\!\!\frac{\partial}{\partial z_i}}(f)= \text{Lie}_{\theta_i} f +h
 \frac{\text{Lie}_{\theta_i} P}{P}f. 
 \end{equation}
 Here $\theta_i$ is the vector field on $\text{Crit}_P^\circ$ lifting $\frac{\partial}{\partial z_i}$ on $Z^\circ$. Explicitly, we have that 
$$\theta_i= \frac{\partial}{\partial z_i} -
\frac{P^{\prime \prime }_{xz_{i}}}{P^{\prime \prime }_{xx}} \frac{\partial}{\partial x}.$$

Let $p$ be an odd prime integer that does not divide $n$\footnote{Note that these assumptions guarantee that $S^\circ \otimes \bF_p$ is not empty: there exists a polynomial
$f(x)\in \bar{\bF}_p[x]$ such that neither $f(x)$ nor $f'(x)$ have multiple roots. However, if $p=2$ the scheme $S^\circ \otimes \bF_p$ is empty: every root of $f'(x)$ is a multiple root.   }.
Define a homomorphism of local systems
\begin{equation}\label{eq:critical_locus_and_E}
   (E|_{S^\circ} \otimes \bF_p, \nabla^E) \to (\cL\otimes \bF_p, \nabla^\cL)
\end{equation}
as follows. Let $W \subset \bA^1_{S^\circ} $
be any affine open neighborhood $\text{Crit}_P^\circ $ not intersecting the divisors $(x-z_i)$, $1\leq i \leq n$.
For any $1$-form 
$$fdx \in \Gamma (W, \Omega^1_{W/{S^\circ}} \otimes \bF_p)$$
set
\begin{equation}\label{eq:anotherdef}
C(fdx) = - \left(\left(\nabla^{\cP^h}_{\!\!\frac{\partial}{\partial x}}
\right)^{p-1}(f)\right)|_{\text{Crit}_P^\circ \times \Spec \bF_p}.
\end{equation}
Using the $p$-curvature formula (\ref{eq:p-curvatureoflinebundle_expl}), we have 
$$C(\nabla^{\cP^h}_{\!\!\frac{\partial}{\partial x}}(f) dx)
= -(h^p -h) \big( ( P^{-1}P'_x)^p f \big)|_{\text{Crit}_P^\circ \times \Spec \bF_p}=0$$
since $P_x'$ is equal to $0$ on $\text{Crit}_P$.
Thus, the Cartier morphism $C$ factors through the  relative de Rham cohomology:
\begin{equation}\label{eq:critical_locus_and_E_bis}
C\colon  H^1_{dR}(W /S^\circ,  \,\cP^h|_{W}\otimes \bF_p) \to \cL\otimes \bF_p.
\end{equation}
Morphism (\ref{eq:critical_locus_and_E}) is defined to be the composition of the restriction map on the de Rham cohomology ({\it cf.} (\ref{mor:log_to_open})) and the Cartier map 
 (\ref{eq:critical_locus_and_E_bis}).
The flatness of morphisms (\ref{eq:critical_locus_and_E_bis}) and (\ref{eq:critical_locus_and_E}) is immediate: the Gauss-Manin connection on the relative de Rham cohomology is given by the formula
$$
\nabla_{\!\!\frac{\partial}{\partial z_i}} [fdx]= \left[\nabla_{\!\!\frac{\partial}{\partial z_i}}^{\cP^h}(f)dx\right].
$$
Since $\nabla_{\!\!\frac{\partial}{\partial z_i}}^{\cP^h}$ commutes with $\nabla^{\cP^h}_{\!\!\frac{\partial}{\partial x}}$ it also commutes with $C$.
\begin{thm}\label{th:main_char_p_generic_h} Morphism (\ref{eq:critical_locus_and_E}) is an isomorphism after inverting  $h^p-h$:
$$ E^\flat |_{S^\circ}  \iso  \cL \otimes_{ \bZ[h]}  \bF_p[h, (h^p-h)^{-1}].$$
\end{thm}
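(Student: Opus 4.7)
Both sides of the asserted isomorphism are locally free $\cO_{S^\circ}[h,(h^p-h)^{-1}]$-modules of rank $n-1$: for $E^\flat|_{S^\circ}$ this is Proposition \ref{pr:basis_for_E}, and for $\cL$ it follows from $\nu\colon \text{Crit}_P^\circ\to S^\circ$ being finite \'etale of degree $n-1$. Since any surjection between locally free modules of the same finite rank is an isomorphism, the plan is to prove that the map (\ref{eq:critical_locus_and_E}) is surjective, and by Nakayama this can be verified fibrewise over the closed points of $S^\circ\times\Spec\bF_p[h,(h^p-h)^{-1}]$.

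The main tool will be the generalized Cartier isomorphism of Ogus (\ref{ogus_p_higgs}). By Proposition \ref{pr:log_to_open}, after inverting $h^p-h$ we may replace $E^\flat$ with the ordinary relative de Rham cohomology of $\cP^{h,\flat}|_U$, where $U=\bP^1_S\setminus T$. Ogus's theorem then produces an isomorphism of sheaves on $U_{\bF_p}$ between $\cH^1_{dR}(U/S,\cP^{h,\flat}|_U)$ and the $\nabla^{FD}$-flat sections of the F-Higgs cohomology $\cH^1_\psi(U/S,\cP^{h,\flat}|_U)$, and formula (\ref{ogus_explicit_formula}) for the Cartier operator agrees with our formula (\ref{eq:anotherdef}) on representatives $f\,dx$.

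The key computation is to identify $\cH^1_\psi$. The relative $p$-curvature is, by (\ref{eq:p-curvatureoflinebundle_expl}), multiplication by $(h^p-h)(P_x'/P)^p\,dx$. On $U$ both $P$ and $h^p-h$ are invertible, so the cokernel of the multiplication map is canonically identified with $\cO_U/\bigl((P_x')^p\bigr)\cdot dx$. Over $S^\circ$ the subscheme $\{P_x'=0\}$ is $\nu$-\'etale of degree $n-1$, so $\cO_U/\bigl((P_x')^p\bigr)$ is the absolute Frobenius pullback $F_{\text{abs}}^*\cO_{\text{Crit}_P^\circ}$. By Cartier descent the $\nabla^{FD}$-flat subsheaf is $\cO_{\text{Crit}_P^\circ}$, and pushing forward along $\nu$ gives $\cL$ by definition. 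Chasing the identifications, the resulting map is exactly (\ref{eq:critical_locus_and_E}), and it is an isomorphism term by term.

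The main obstacle is bookkeeping of the Frobenius-descent structure so that the $\nabla^{FD}$-flat sections reproduce $\cL$ (and not a Frobenius twist of $\cL$), together with checking that the Gauss--Manin connection on the left transports to the connection $\nabla^\cL$ from (\ref{eq:connection_on_L}) on the right. This last point is a direct local computation near a point of $\text{Crit}_P^\circ$, using the vector field lift $\theta_i=\tfrac{\partial}{\partial z_i}-\tfrac{P''_{xz_i}}{P''_{xx}}\tfrac{\partial}{\partial x}$ and the fact that $(\nabla^{\cP^h}_{\partial/\partial x})^{p-1}$ commutes with $\nabla^{\cP^h}_{\partial/\partial z_i}$; everything else in the argument is formal given the Cartier isomorphism.
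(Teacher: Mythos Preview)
Your approach is the paper's approach: pass from logarithmic to ordinary de Rham cohomology via Proposition~\ref{pr:log_to_open}, apply Ogus's generalized Cartier isomorphism, compute $\cH^1_\psi \cong \cO_W/\bigl((P'_x)^p\bigr)$ using the $p$-curvature formula, and identify the sheaf of flat sections with $\cO_{\text{Crit}_P^\circ}$. The opening rank--surjectivity--Nakayama reduction is unnecessary (the paper shows the map is an isomorphism directly) but harmless.

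There is one genuine gap. The connection on $\cH^1_\psi$ appearing in (\ref{ogus_p_higgs}) is the tensor product of the Frobenius descent connection on $F^*_{\text{abs}}\Omega^1_{W/S^\circ}$ with $\nabla^{\cP^h}$ on $M$; after trivializing $\Omega^1_{W/S^\circ}$ by $dx$, this is the connection induced by $\nabla^{\cP^h}_{\!\partial/\partial x}$ on $\cO_W/\bigl((P'_x)^p\bigr)$, \emph{not} the Frobenius descent connection attached to the identification $\cO_W/\bigl((P'_x)^p\bigr)\cong F^*_{\text{abs}}\cO_{\text{Crit}_P^\circ}$. These are different connections: for instance $\nabla^{\cP^h}_{\!\partial/\partial x}(1)=hP'_x/P$ is nonzero in $\cO_W/\bigl((P'_x)^p\bigr)$, so $1$ is not flat, whereas it is flat for the Frobenius descent connection. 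Consequently your sentence ``By Cartier descent the $\nabla^{FD}$-flat subsheaf is $\cO_{\text{Crit}_P^\circ}$'' does not follow as written, and the passage ``Chasing the identifications, the resulting map is exactly (\ref{eq:critical_locus_and_E})'' hides precisely the missing step.

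The paper fills this gap as follows. Since $(h^p-h)(P'_x/P)^p\equiv 0$ in $\cO_W/\bigl((P'_x)^p\bigr)$, the relative $p$-curvature of $\nabla^{\cP^h}$ vanishes on this quotient; together with the fact that $\text{Crit}_P^\circ\to S^\circ$ is \'etale, Cartier descent applied to $\nabla^{\cP^h}$ (not to the Frobenius descent connection) shows that the \emph{projection}
\[
\Bigl(\cO_W/\bigl((P'_x)^p\bigr)\Bigr)^{\nabla^{\cP^h}=0}\ \longrightarrow\ \cO_W/(P'_x)=\cO_{\text{Crit}_P^\circ}
\]
is an isomorphism. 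The paper also records this as an elementary standalone lemma (with an explicit inverse given by $f\mapsto -(\nabla_{\partial/\partial x})^{p-1}(x^{p-1}f)$ in suitable local coordinates). Comparing with (\ref{ogus_explicit_formula}) one sees that this projection is exactly the map $C$ of (\ref{eq:anotherdef}), which is why (\ref{eq:critical_locus_and_E}) is an isomorphism. Once you make this correction your proof coincides with the paper's.
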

\begin{proof} 
Let $W\subset U$ be the preimage of $S^\circ$.  Using Proposition (\ref{pr:log_to_open}) it is suffices to check that morphism 
(\ref{eq:critical_locus_and_E_bis}) is an isomorphism.
  We apply the generalized Cartier isomorphism   (\ref{ogus_p_higgs}), (\ref{ogus_explicit_formula}) to $X=W\times \Spec \bF_p $, $Y= S^\circ \times \Spec \bF_p$, $M=\cP^h|_{W\times \Spec \bF_p}$. The sheaves $\cH^\bullet_\psi(X/Y,M)$ are the cohomology sheaves of the complex
  $$
  \cO_{W}\otimes  \bF_p[h, (h^p-h)^{-1}] \,  \xrightarrow{(h^p-h) ( P^{-1}P'_x)^p} \,\cO_{W}\otimes  \bF_p[h, (h^p-h)^{-1}].
  $$
   Since $P^{-p}$ is invertible on  $W$ we conclude that 
  $\cH^0_\psi(X/Y,M)=0$ and $$\cH^1_\psi(X/Y,M)\iso \cO_{W}/\big((P'_x)^p\big) \otimes  \bF_p[h, (h^p-h)^{-1}] .$$ 
  Comparing formulae (\ref{ogus_explicit_formula}) and (\ref{eq:anotherdef}) we reduce the theorem 
  to the following assertion: the projection
 $$ \cO_{W}/\big((P'_x)^p\big)  \to \cO_{W}/\big(P'_x\big) $$
  induces an isomorphism
  \begin{equation}\label{eq:isoresflat}
      \Big(\cO_{W}/\big((P'_x)^p\big) \otimes  \bF_p[h, (h^p-h)^{-1}]\Big)^{\nabla^{\cP^h}=0}\
       \iso\
        \cO_{W}/\big(P'_x\big) \otimes  \bF_p[h, (h^p-h)^{-1}].  
  \end{equation}
  The expression at the left-hand side of (\ref{eq:isoresflat}) stands for the sheaf of sections of
  $\cO_{W}/\big((P'_x)^p\big) \otimes  \bF_p[h, (h^p-h)^{-1}]$ which are flat with respect to vector fields tangent to the fibers of the
  projection $W \to  S^\circ,$ { i.e.},  sections annihilated by $\nabla^{\cP^h}_{\!\!\frac{\partial}{\partial x}}$. As a connection relative to $S^\circ \times \Spec \bF_p$, the 
  $\nabla^{\cP^h}$ has  zero $p$-curvature on $\cO_{W}/\big((P'_x)^p\big)  \otimes  \bF_p[h] $: the image of $(h^p-h)( P^{-1}P'_x)^p$ in the quotient 
  $\cO_{W}/\big((P'_x)^p\big) \otimes  \bF_p[h]$ is equal to $0$. In addition, the relative spectrum of $\cO_{W}/\big(P'_x\big) \otimes  \bF_p$,
that is $ \text{Crit}_P^\circ \times \Spec \bF_p$, is \'etale over $S^\circ \times \Spec \bF_p.$ 
These two properties combined with the Cartier descent theorem (\cite{katz70}) imply   (\ref{eq:isoresflat}). Alternatively, one can complete the proof of Theorem \ref{th:main_char_p_generic_h}
 using the following lemma.

\begin{lem}
  Let $R$ be an $\bF_p$-algebra, $M$ a module over the polynomial algebra $R[x]$ with a connection
  $$\nabla_{\!\!\frac{\partial}{\partial x}}: M \to M.$$
  Assume that 
$$  \Big(\nabla_{\!\!\frac{\partial}{\partial x}}\Big)^p M = x^p M=0.$$
 Then the projection
 $$M^{\nabla =0} \to M/xM$$
 is an isomorphism.
\end{lem}

 \begin{proof} 
 
 We construct explicitly the inverse map. 
    The map $$-\Big( \nabla_{\!\!\frac{\partial}{\partial x}}\Big)^{p-1} x^{p-1}: M \to M$$
    factors though $M/xM$ and lands in $M^{\nabla =0}$. The induced morphism $ M/xM \to M^{\nabla =0}$ is the inverse to the projection. 
\end{proof}
\end{proof}

\subsection{Irreducibility of $E$}\label{ss:irreducibility}

As an application of Theorem \ref{th:main_char_p_generic_h} we shall prove  
the  irreducibility of $E$. Fix an algebraically closed field $\K$ of characteristic $p$  and 
an element $\bar{h}\in \K$. Consider the specialization $E_{\bar {h}}= E \otimes_{\bZ[h, h^{-1}]} \K$ viewed as a module over the ring $D_S$ of differential operators on $S$: for each $i$, the operator $\frac{\partial}{\partial z_i}$ acts by $\nabla^{E}_{\!\! \frac{\partial}{\partial z_i}}$. Let 
$$E_{\bar {h}, \eta}= E_{\bar {h}} \otimes_{\cO(S)\otimes \bK} \K(z)$$  
be the restriction of  $E_{\bar {h}}$ to the generic point of $S_\K$. By construction, $E_{\bar {h}}$ is an $(n-1)$-dimensional vector space over the field of rational function 
$\K(z)$ on $S_\K$ equipped with a $\K$-linear action of $D_S$. 
\begin{thm}\label{thm:irreducibility} Let  $p>n$ be an odd prime number.
  Assume that $\bar{h}\in \K\backslash \bF_p$. Then   $E_{\bar {h}, \eta}$ is an irreducible  $D_S\otimes \K$-module. 
\end{thm}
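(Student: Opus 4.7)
The plan is to exploit Theorem \ref{th:main_char_p_generic_h} to translate the irreducibility question into a statement about the finite \'etale cover $\nu : \text{Crit}_P^\circ \to S^\circ$. Since $\bar h \in \K \setminus \bF_p$ we have $\bar h^p - \bar h \ne 0$ in $\K$, so that theorem identifies the restriction of $E_{\bar h}$ to $S^\circ_\K$ with $\cL_{\bar h} := \nu_*(\cP^{\bar h}|_{\text{Crit}_P^\circ_\K})$ equipped with the connection \eqref{eq:connection_on_L} as a $D_{S^\circ}$-module. Any $D_S$-submodule $F \subset E_{\bar h, \eta}$ restricts to a $D$-submodule of $\cL_{\bar h}$ at the generic point $\eta$ of $S^\circ_\K$, so it suffices to show that $\cL_{\bar h}$ is generically irreducible.

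The first step is to compute the $p$-curvature of $\cL_{\bar h}$. Because $\nu$ is \'etale, the $p$-curvature $\Psi_i$ of $\cL_{\bar h}$ along $\partial / \partial z_i$ is the pushforward of the $p$-curvature of $\cP^{\bar h}$ along $\partial / \partial z_i$. By \eqref{eq:p-curvatureoflinebundle_expl} the latter acts on the stalk at a critical point $t \in \text{Crit}_P^\circ$ by multiplication by the scalar
\[
(\bar h^p - \bar h)\bigl(z_i - x(t)\bigr)^{-p}.
\]
At $\eta$, the $x$-coordinates $x(t_1), \dots, x(t_{n-1})$ of the critical points are pairwise distinct elements of a degree $(n-1)$ extension of $\K(z)$ (this is precisely the definition of $S^\circ$), so each $\Psi_i$ has simple spectrum in $\overline{\K(z)}$. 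Note that $p > n$ is more than enough to guarantee that raising to the $p$-th power preserves pairwise distinctness of these elements.

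The operators $\Psi_1, \dots, \Psi_n$ pairwise commute and are simultaneously diagonalizable over $\overline{\K(z)}$, with common one-dimensional eigenspaces $L_{t_\alpha}$ indexed by the critical points $t_\alpha$. Any $D$-submodule $F$ is $\Psi_i$-stable for every $i$, so $F \otimes_{\K(z)} \overline{\K(z)} = \bigoplus_{t \in T} L_t$ for some subset $T \subset \{t_1, \dots, t_{n-1}\}$. For $F$ to be defined over $\K(z)$, the set $T$ must be stable under the Galois action of $\mathrm{Gal}(\overline{\K(z)} / \K(z))$ on the geometric fiber $\nu^{-1}(\eta)$, i.e.\ under the monodromy of the \'etale cover $\nu$.

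The final step is to invoke transitivity of this Galois action on the critical points, which forces $T = \emptyset$ or $T$ to be everything, yielding $F = 0$ or $F = E_{\bar h, \eta}$. This transitivity is exactly a geometric irreducibility statement for $\text{Crit}_P^\circ \to S^\circ$ over the generic point, and I expect this to be the principal obstacle: the $p$-curvature analysis and Cartier descent are formal once Theorem \ref{th:main_char_p_generic_h} is in hand, but proving that $\mathrm{Gal}$ acts transitively on the critical points in characteristic $p$ requires a separate geometric input, presumably furnished by Lemmas \ref{lem:irred_crit_locus} and \ref{lem:beta_anzatz} alluded to in the introduction, and this is where the hypothesis $p > n$ should play a decisive role by ruling out degenerations of the cover in characteristic $p$.
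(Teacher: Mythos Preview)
Your approach is correct and closely parallels the paper's, with one pleasant simplification.  Both arguments pass through Theorem~\ref{th:main_char_p_generic_h} to identify $E_{\bar h,\eta}$ with the \'etale algebra $\L=\cO(\text{Crit}_P^\circ)\otimes_{\cO(S^\circ)}\K(z)$ and compute the $p$-curvature as multiplication by $(\bar h^p-\bar h)(z_i-x)^{-p}$.  The paper then shows (Lemma~\ref{lem:beta_anzatz}) that these elements generate $\L$ as a $\K(z)$-algebra, so that any $A$-submodule is an ideal of $\L$, and concludes via Lemma~\ref{lem:irred_crit_locus} that $\L$ is a field.  You instead diagonalize over $\overline{\K(z)}$ and appeal to Galois transitivity on the critical points.

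Your route in fact bypasses Lemma~\ref{lem:beta_anzatz}: once you observe that $\Psi_1$ alone has simple spectrum (the critical $x$-values are distinct on $S^\circ$ by definition, and Frobenius is injective on any field of characteristic $p$), the common eigenspaces are already one-dimensional and you only need Lemma~\ref{lem:irred_crit_locus} for the transitivity.  So your attribution of the transitivity step to \emph{both} lemmas is slightly off---only Lemma~\ref{lem:irred_crit_locus} is doing work in your argument, and $p>n$ enters precisely in its proof, not (as your aside suggests) to make the $p$-th power map injective.  Equivalently, your simple-spectrum observation is itself a proof that $(z_1-x)^{-p}$ generates $\L$, which is a sharper form of Lemma~\ref{lem:beta_anzatz}.
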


\begin{proof}
    We shall prove a stronger assertion. Let $A\subset D_S \otimes \K$ be the (commutative) $\K$-subalgebra generated by $z_i$ and  
    $\big( \frac{\partial}{\partial z_i}\big)^p$, $1\leq i\leq n$.  We will show that $E_{\bar {h}, \eta}$ is an irreducible $A$-module. Using 
    Theorem \ref{th:main_char_p_generic_h}, we identify the $\K(z)$-vector space  $E_{\bar {h}, \eta}$ with the space of functions $\L:=\cO( \text{Crit}_P^\circ)\otimes_{\cO(S^\circ)} \K(z)$ 
    on the generic fiber of the etale map 
    \begin{equation}\label{eq:etale_map}
      \nu: \text{Crit}_P^\circ \times \Spec \K \to S^\circ \times \Spec \K.  
    \end{equation}
     Under this identification, the action of $\big( \frac{\partial}{\partial z_i}\big)^p$
     is the multiplication by 
    \begin{equation}
    \label{eq:p-curvature_of_L}
    (\bar{h}^p-\bar{h}) \Big(P^{-1}\frac{\partial P}{\partial z_i} \Big)^p =\frac{\bar{h}^p-\bar{h}}{(z_i -x)^p} \in \L; 
\end{equation}
  see  Katz's formula \eqref{eq:p-curvatureoflinebundle_expl}.  
In Lemma \ref{lem:irred_crit_locus} below, we prove that $\L$ is a field and in Lemma \ref{lem:beta_anzatz},
we  prove that $\L$  is generated by $\K(z)$ and elements of the form \eqref{eq:p-curvature_of_L} for $1\leq i\leq n$.
 This will imply the theorem. 
\begin{lem}\label{lem:irred_crit_locus}
\label{lem G}
 The scheme  $\text{Crit}_P^\circ \otimes \K$
    is irreducible. In particular, $\L$ is a field (of degree $n-1$ over $\K(z)$).
\end{lem}
\begin{proof} The proof below is an adaptation of a topological argument (for $\K=\bC$) communicated to us by Andrey Gabrielov. 
     Consider the $n$-dimensional affine space $\bA^n_{\K}$ parameterizing all monic polynomials $P(x)=
       b_0 + \dots  + b_{n-1} x^{n-1} + x^n$ of degree $n$. Let $ \overline{S}_\K$ be another copy of the same affine space
       and let $\pi: \overline{S}_\K \to \bA^n_{\K}$ be the map sending $(z_1, \dots  , z_n)$ to the polynomial $P(x)= \prod_i(x-z_i)$.
       This is a degree $n!$ cover ramified over an irreducible subscheme $\text{Disc}_P \subset \bA^n_{\K}$. Consider the 
       space $\rho: \Yb_{\K} \to \bA^n_{\K}$
       parameterizing monic  polynomials $P(x)$ together with a root $c$ of its derivative $P'(x)$. Explicitly, $\Yb_{\K}$ 
       is a closed subscheme of $\bA^{n+1}_{\K}$
       given by the equation $b_1 + 2 b_2 c  + \dots  + n c^{n-1}=0$. In particular, $\Yb_{\K}$
        is isomorphic to $\bA^n_{\K}$. Denote by $\text{Disc}_{P'}$ the ramification locus of 
       $\rho$. 
       For an irreducible variety $W$ we shall denote by $\K(W)$ its field of rational functions. 
       Consider the field extensions $\K(\bA^n_{\K}) \subset \K(\overline{S}_\K ), \K(\Yb_{\K})$ determined by $\pi$ and $\rho$. We have $\L= \K(\overline{S}_\K )\otimes _{\K(\bA^n_{\K})} \K(\Yb_{\K})$. Let us prove that $\L$ is a field. To this end, choose embeddings of $ \K(\overline{S}_\K ), \K(\Yb_{\K})$ an into an algebraic closure $\overline{\K(\bA^n_{\K})}$ of  
       $\K(\bA^n_{\K})$. Since $\K(\bA^n_{\K}) \subset \K(\overline{S}_\K )$ is a Galois extension it suffices to check that the intersection $\E := \K(\overline{S}_\K )\cap \K(\Yb_{\K})$ (inside $\overline{\K(\bA^n_{\K})}$) equals $\K(\bA^n_{\K})$. Look at the ramification locus of the extension $\K(\bA^n_{\K}) \subset \E$ regarded as a closed subscheme of $\bA^n_{\K}$.  Since the extension $\K(\bA^n_{\K}) \subset \K(\overline{S}_\K )$ is unramified outside of $\text{Disc}_P$ and the extension $\K(\bA^n_{\K}) \subset \K(\Yb_{\K})$ is unramified outside of $\text{Disc}_{P'}$ the ramification locus of $\E$ is contained in
       $\text{Disc}_P\cap \text{Disc}_{P'}$. Observe that $\text{Disc}_P$ is irreducible and is not contained in  $\text{Disc}_{P'}$ (for example, the polynomial $x^n +x^2$ has a double root but 
       its derivative has only simple roots). Therefore, the intersection $\text{Disc}_P\cap \text{Disc}_{P'}$ has codimension at least $2$ in $\bA^n_{\K}$. Using 
       the purity of the branch locus \cite{nagata},  we conclude that 
       $\K(\bA^n_{\K}) \subset \E$ extends to a finite  \'etale cover of  $\bA^n_{\K}$. Since the degree of the cover is at most $n-1<p$ the cover is tamely ramified and, hence, split.
       This proves that $\L$ is a field. For the first assertion of the lemma, we
        observe that since  $\text{Crit}_P^\circ \otimes \K$ is cut by a single equation 
       inside $\bA^{n+1}_{\K}$ its every irreducible component must be of dimension $n$ and, hence, dominant over $\bA^n_{\K}$.
       The irreducibility of $\text{Crit}_P^\circ \otimes \K$ follows.         
   \end{proof} 
   \begin{rem}
       The assertion of Lemma \ref{lem:irred_crit_locus} that $\L$ is field is equivalent to the following concrete statement: the polynomial $$\frac{\partial \prod(x-z_i)}{\partial x}= nx^{n-1} - (n-1) \Big (\sum_i z_i \Big)x^{n-2} + \dots  $$
       is irreducible over $\K(z)$.
       \end{rem}
   \begin{lem}
   \label{lem:beta_anzatz}
   The field extension $\L\supset \K(z)$ is generated by the elements  $\frac{1}{(z_i -x)^p}$, $1\leq i\leq n$.
   \end{lem}
  \begin{proof}
      Since the degree of $\cL$ over $\K(z)$ is $n-1<p$ the extension is separable. Thus, it is enough to check that $\L$ is generated by $\frac{1}{z_i -x}$, $1\leq i\leq n$. We shall
      do this by showing that the map $\text{Crit}_P^\circ \otimes \K \to \bA^n_{\K} \times S^\circ $, defined by functions $\frac{1}{z_i -x}$, $1\leq i\leq n$,  and \eqref{eq:etale_map},
     is a closed embedding over the generic point of  $ S^\circ \otimes \K$.
    Since the map $\text{Crit}_P^\circ \to S^\circ$ is finite \'etale, it suffices to construct a field $\E\supset \K$ and a point $a\in S^\circ (\E)$ such that the fiber $\nu^{-1}(a)$  of  $\nu: \text{Crit}_P^\circ \to S^{\circ}$ over this point splits into $n-1$  points
      $x_j \in  \text{Crit}_P^\circ(\E)$, $1\leq j \leq n-1$, and functions $\frac{1}{z_i -x}$, ($1\leq i\leq n$), separate these points. We take for $\E$ the field $\K((s))$ of Laurent series in one variable and for $a$ the point with coordinates 
      $z_i(a)= s^i$. Then the fiber $\nu^{-1}(a)$  is given in $\bA^1_\E$ by the equation $\frac{\partial P}{\partial x}=0$, where
      $P(x)= \prod_i(x- s^i)$. 
      Rewriting the equation in the form 
      $$\sum_i \frac{1}{x-s^i}=0,$$
      we find that this equation has $n-1$ solutions in $\E$ of the form
      $$x_j = \frac{n-j}{n-j+1} s^j + O(s^{k+1}), \quad   1\leq j \leq n-1.$$
      Then, for $j>j'$,  we  have $$\frac{1}{z_j -x_j} =  \frac{1}{s^j -x_j} =  (n-j +1) s^{-j} + O(s^{1-j})$$  
      and $$\frac{1}{z_j -x_{j'}} = \frac{1}{s^j -x_{j'}}=  \frac{n-j'+1}{n-j'} s^{-j'} + O(s^{1-j'}).$$ In particular, $\frac{1}{z_j -x}$ separates 
      $x_j$ and $x_{j'}$.
       \end{proof}
   
    \end{proof}

   \subsection{Specialization of $E$ to $h\in \bF_p$}
   \label{sec 3.6}
   
   Let $p$ be a prime number that does not divide $n$, let $\tilde h$ be an integer with
   $1\leq \tilde h \leq p-1$. Consider the homomorphism  $\bZ[h, h^{-1}] \to \bF_p$ sending $h$ to
   the class $\bar h$ of $\tilde h$ in $\bF_p$ and  let
   $$E_{\bar h}:= E \otimes_{\bZ[h]} \bF_p.$$
Recall from Section \ref{ss:basechange} that the local system $E_{ \bar h}$ on $S_{\bF_p}$
can be interpreted as the relative logarithmic de Rham cohomology 
$H^1_{dR, \log}(\bP^1_S/S, \cP^{\bar h}), $ where $\cP^{\bar h}$ is the specialization of $\cP^h$, that is, the derived global sections of the complex of sheaves on $\bP^1 \times S_{\bF_p}$:
 \begin{equation}\label{eq:deRhamsp}
 \cO_{\bP^1 \times S_{\bF_p}} \rar{\nabla^{\cP^{\bar h}}}  \Omega^1_{\bP^1 \times S_{\bF_p} /S_{\bF_p} }(\log T_{\bF_p}).
\end{equation}
Denote by $\cH_{dR, \log}^i= \cH_{dR, \log}^i(\bP^1_S/S, \cP^{\bar h})$, $i=0, 1$,  the cohomology sheaves of (\ref{eq:deRhamsp}).
Then the canonical filtration on the  de Rham complex (also referred to as the conjugate filtration) gives an exact sequence of modules with connections
\begin{equation}\label{eq:conjugatetwistedlog}
  0 \to H^1(\bP^1_S, \cH_{dR, \log}^0)  \xrightarrow{\phi_{\tilde h}} E_{ \bar h}  \xrightarrow{\psi_{\tilde h}} H^0(\bP^1_S, \cH_{dR, \log}^1) \to 0.
\end{equation}
The surjectivity of the last map follows from the vanishing of $ H^2(\bP^1_S, \cH_{dR, \log}^0) $
which follows because the sheaf $\cH_{dR, \log}^0$ is quasi-coherent and $S$ is affine. 

We will prove that the boundary terms in sequence (\ref{eq:conjugatetwistedlog}) are trivial local systems, { i.e.}, isomorphic to  direct sums of several copies of $(\cO_{S_{\bF_p}}, d)$. 
To do this we shall describe the sheaves  $\cH_{dR, \log}^i$ explicitly using the generalized Cartier transform \eqref{ogusmain}. Since the $p$-curvature of $\cP^{\bar h}$  is equal to $0$ by (\ref{eq:p-curvatureoflinebundle}), we have 
\begin{equation}\label{eq:logoguscartierP}
   \cH_{dR, \log}^i \iso \Big(F^*_{\text{abs}}\Omega ^i_{\bP^1 \times S_{\bF_p}/S_{\bF_p}}\big(\, \log T_{\bF_p} \,\big)\Big)^{\nabla^{\cP^{\bar h}}=0}. 
\end{equation}
The expression at the right-hand side of the above formula stands for the sheaf of sections 
  annihilated by $\nabla^{\cP^{\bar h}}_{\!\!\frac{\partial}{\partial x}}$.

Write the absolute Frobenius $F^*_{\text{abs}}$ as the composition of two morphisms
$$
\bP^1_{\bF_p} \times S_{\bF_p} \
  \xrightarrow{F_{\bP^1_{\bF_p}}\,:=\, F_{\text{abs}} \times \Id} 
  \bP^1_{\bF_p} \times S_{\bF_p} 
  \ \xrightarrow{F_{S_{\bF_p}}\,:= \, \Id \times F_{\text{abs}} }\bP^1_{\bF_p} \times S_{\bF_p}.
  $$
Recall that both morphisms $F_{\bP^1_{\bF_p}}$ and $F_{S_{\bF_p}}$ are homeomorphisms on the underlying topological spaces (in particular, the corresponding direct image functors on the categories of sheaves are exact). Note that the differential in (\ref{eq:deRhamsp}) becomes $\cO$-linear after applying  $F_{\bP^1_{\bF_p} *}$. In particular, 
the sheaves $\cH_{dR, \log}^i$ are $\cO_{\bP^1 \times S_{\bF_p}}$-modules. The next lemma shows that they are invertible $\cO$-modules and identifies them explicitly. 
\begin{lem}\label{lem:localcohomologycomp}
 We have isomorphisms of modules with connections
 \begin{gather*}\label{eq:compofflatsections1}
F_{S_{\bF_p}}^* \Big( 
\cO_{\bP^1\times S_{\bF_p}}\Big(\Big[\frac{n\tilde h}{p} \Big] T_\infty - \sum_{i=1}^n T_i\Big) \Big) \iso   (\cO_{\bP^1 \times S_{\bF_p}} )^{\nabla^{\cP^{\bar h}}=0},\\
\label{eq:compofflatsections2}
f\in \cO_{\bP^1\times S_{\bF_p}}\Big(\Big[\frac{n\tilde h}{p} \Big] T_\infty - \sum_{i=1}^n T_i\Big) \mapsto P^{-\tilde h}  f^p,\\
F_{S_{\bF_p}}^* \Big(\Omega ^1_{\bP^1 \times S_{\bF_p}/S_{\bF_p}}
\Big( \Big (\Big[\frac{n\tilde h}{p} \Big]+1 \Big) T_\infty \Big)\Big)
\iso   
 \left(F^*_{\text{abs}}\, \Omega ^1_{\bP^1 \times S_{\bF_p}/S_{\bF_p}}
 \left(\, \log T_{\bF_p}\,\right)\right)^{\nabla^{\cP^{\bar h}}=0},\\
\label{eq:compofflatsections4}
\omega \in \Omega ^1_{\bP^1 \times S_{\bF_p}/S_{\bF_p}}
\Big( \Big (\Big[\frac{n\tilde h}{p} \Big]+1 \Big) T_\infty \Big)
 \mapsto  P^{-\tilde h} \otimes \omega.
\end{gather*}
\end{lem}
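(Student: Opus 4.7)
My starting point is that the $p$-curvature of $(\cP^{\bar h}, \nabla^{\cP^{\bar h}})$ vanishes: specializing formula \eqref{eq:p-curvatureoflinebundle} at $h = \bar h \in \bF_p$ gives the coefficient $\bar h^p - \bar h = 0$. Hence $\nabla^{\cP^{\bar h}}$ admits Cartier/Frobenius descent in every coordinate direction, and by the logarithmic Ogus--Cartier isomorphism \eqref{ogusmain} the sheaves $\cH^i_{dR,\log}$ are identified with the sheaves of $\nabla^{\cP^{\bar h}}$-flat sections displayed in \eqref{eq:logoguscartierP}. What remains is to write down the inverse maps explicitly and to match divisors.

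For the first isomorphism, a local section $f$ of $\cO_{\bP^1 \times S_{\bF_p}}$ is $\nabla^{\cP^{\bar h}}_{\partial/\partial x}$-flat iff $\partial_x f + \tilde h\, P^{-1}(\partial_x P) f = 0$, equivalently $\partial_x(P^{\tilde h} f) = 0$. On opens where $P$ is invertible, Cartier/Frobenius descent in the $x$-variable forces $P^{\tilde h} f = g^p$ for some local function $g$, so $f = g^p P^{-\tilde h}$: this is exactly the claimed formula, and the $F_{S_{\bF_p}}^{*}$ twist on the source encodes the $p$-linearity of $g \mapsto g^p$ in the $S_{\bF_p}$-coefficients. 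The divisor of $g$ is then read off by pole-counting on $\bP^1$: at $T_i$, the pole of $P^{-\tilde h}$ of order $\tilde h \in \{1, \ldots, p-1\}$ is killed by $g^p$ only if $g$ vanishes to order $\lceil \tilde h/p \rceil = 1$; at $T_\infty$, the zero of $P^{-\tilde h}$ of order $n\tilde h$ allows $g$ a pole of order at most $\lfloor n\tilde h/p \rfloor = [n\tilde h/p]$. This yields exactly $g \in F_{S_{\bF_p}}^{*}\cO([n\tilde h/p]\, T_\infty - \sum_i T_i)$.

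For the second isomorphism, the same plan applies to one-forms. The algebraic input is the one-line computation
$$
\nabla^{\cP^{\bar h}}_{\partial/\partial x}\bigl(P^{-\tilde h} \otimes \omega\bigr) \,=\, (\bar h - \tilde h)\, P^{-\tilde h - 1}(\partial_x P) \otimes \omega \,=\, 0,
$$
valid because $\bar h = \tilde h$ in $\bF_p$ and $\omega$ is flat for the Frobenius-descent connection on $F^*_{\text{abs}}\Omega^1(\log T_{\bF_p})$. Cartier descent then shows $\omega \mapsto P^{-\tilde h} \otimes \omega$ is bijective onto the sheaf of flat sections. For the divisor bound: at each $T_i$, the target $F^*_{\text{abs}}\Omega^1(\log T_{\bF_p})$ admits pole order up to $p$ (the Frobenius pullback of the order-one log pole), which absorbs the factor $P^{-\tilde h}$ with $\tilde h < p$ without constraining $\omega$; at $T_\infty$, the analogous count, using that the residue of $\eta$ at $T_\infty$ equals $-n$, produces the sharp pole bound $([n\tilde h/p]+1)\,T_\infty$ for $\omega$.

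Compatibility of both isomorphisms with the connection in the $S_{\bF_p}$-direction follows routinely from the Leibniz rule applied to the explicit formulas $g \mapsto g^p P^{-\tilde h}$ and $\omega \mapsto P^{-\tilde h} \otimes \omega$, whose right-hand sides depend on $S$-coordinates only through $P$. I expect the main technical obstacle to be the divisor analysis at $T_\infty$ for the second isomorphism, where the logarithmic structure, the Frobenius pullback (which multiplies pole orders by $p$), and the vanishing order $n\tilde h$ of $P^{-\tilde h}$ must be balanced precisely to recover the bound $[n\tilde h/p]+1$; the other checks reduce to one-line computations or standard Cartier-descent arguments.
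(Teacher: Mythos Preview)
Your proposal is correct and follows the same route as the paper: rewrite flatness as $\partial_x(P^{\tilde h} f)=0$, invoke Frobenius/Cartier descent in the $x$-direction, and then read off the divisor of the descended section at $T_i$ and $T_\infty$. One small point of precision: $\partial_x=0$ alone only yields $P^{\tilde h} f = F^*_{\bP^1_{\bF_p}}(g)$ for the \emph{relative} Frobenius, not a genuine $p$-th power; it is the $F^*_{S_{\bF_p}}$-twist on the source (which you correctly flag) that promotes this to the absolute $p$-th power $g^p$ appearing in the displayed formula---the paper makes this step explicit, and otherwise your argument and the paper's coincide (indeed your sketch of the second isomorphism is more detailed than the paper's, which treats only the first).
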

\begin{proof}
  Let us just explain the first isomorphism. If $g \in  \big(\cO_{\bP^1 \times S_{\bF_p}}\big)^{\nabla^{\cP^{\bar h}}=0}$  then $(P^{\tilde h} g)'_x =0$. Thus, $P^{\tilde h} g = F^*_{\bP^1_{\bF_p }}(f)$
  for some rational function $f$. Applying  $F_{S_{\bF_p}}^*$ to the above identity, we find that 
  $$F_{S_{\bF_p}}^* (P^{\tilde h} g) = f^p.$$
  Since $g$ is regular, we have 
  $$p \Div f \geq \tilde h \Div (F_{S_{\bF_p}}^*(P)).$$
  Thus, $f$ is a section of $F_{S_{\bF_p}}^* \Big( 
\cO_{\bP^1\times S_{\bF_p}}\Big(\Big[\frac{n\tilde h}{p} \Big] T_\infty - \sum_{i=1}^n T_i\Big) \Big)$.
  \end{proof}
  As a consequence  of the second isomorphism in Lemma \ref{lem:localcohomologycomp} combined with \eqref{eq:logoguscartierP}, we conclude that   
  $H^0(\bP^1_S, \cH_{dR, \log}^1)$ is a trivial flat bundle with a flat basis given by 
differential forms 
\begin{equation}\label{eq:basis_of_hol_forms1}
\mu_l =x^{l-1} dx \in H^0\Big(\bP^1_{S},  \Omega ^1_{\bP^1 \times S_{\bF_p}/S_{\bF_p}}
\Big( \Big (\Big[\frac{n\tilde h}{p} \Big]+1 \Big) T_\infty \Big)\Big), \qquad  1\leq l\leq \left[\frac{n\tilde h}{p} \right],
\end{equation}
\begin{equation}\label{eq:trivialization_of_loc1}
H^0(\bP^1_S, \cH_{dR, \log}^1) \iso \bigoplus_{l=1}^{[\frac{n\tilde h}{p} ]}
H^0 (S, \cO_{S_{\bF_p}}) \, \, \mu_l.
\end{equation}
Similarly, we construct a flat basis for $H^1(\bP^1_S, \cH_{dR, \log}^0)$: 
using the Serre duality  the space 
\begin{equation}\label{eq:computation_of_boundary_terms}
    H^1\Big( \bP^1_S,
\cO_{\bP^1\times S_{\bF_p}}\Big(\Big[\frac{n\tilde h}{p} \Big] T_\infty - \sum_{i=1}^n T_i\Big) \Big)
\end{equation}
 is dual to
$$
H^0\left(\bP^1_S, \;  \Omega ^1_{\bP^1 \times S_{\bF_p}/S_{\bF_p}}
 \left( \sum_{i=1}^n T_i -\left[\frac{n\tilde h}{p} \right] T_\infty \right) \right).
 $$ 
 The latter is a free module of rank $n-1 - \big[\frac{n\tilde h}{p} \big]$ generated by 
 $$
 \nu_l = x^{l-1} \frac{dx}{P}, \quad 1\leq l\leq n-1 - \left[\frac{n\tilde h}{p} \right].
 $$
Letting $\nu_l^*$ be the dual basis for  \eqref{eq:computation_of_boundary_terms}
 we obtain a flat isomorphism  
  \begin{equation}
 \label{eq:trivialization_of_loc2}
H^1(\bP^1_S, \cH_{dR, \log}^0) \iso \bigoplus_{l=1}^{n-1 -[\frac{n\tilde h}{p} ]}
H^0 (S, \cO_{S_{\bF_p}}) \, \, \nu_l^*.
\end{equation}
Recall from Section \ref{ss:solutions_in_finite} $p$-hypergeometric functions  $Q^{(pl-1)}_j (z) = Q^{(pl-1)}_j (z_1, \dots, z_n) \in H^0(S, \cO_{S_{\bF_p}})$, $1\leq l\leq \big[\frac{n\tilde h}{p} \big]$, $1\leq j\leq n$, defined by the formula
 \begin{equation}\label{eq:p-hypergeometricsol}
- \Big(\frac{P^{\tilde h}} {x-z_j}\Big)^{(p-1)}_x= \sum_{l=1}^{[\frac{n\tilde h}{p} ]} Q^{(pl-1)}_j (z) x^{p(l-1)}.
\end{equation}
Here $g^{(m)}_x$ stands for the $m$-th derivative of $g$ with respect to the $x$-variable.
Let $A_{\tilde h}$ be the $\big[\frac{n\tilde h}{p} \big] \times n$ matrix whose entries are 
 $Q^{(pl-1)}_j (z)$. Recall that $E_{\bar h}$, viewed as a $H^0 (S, \cO_{S_{\bF_p}})$-module is the quotient of the free module on
 $$\eta_i = \frac{d(x-z_j)}{x-z_j}, \quad 1\leq j\leq n,$$
 by the submodule generated by $\eta =\sum \eta_i$. Since
 $$\sum_j \Big(\frac{P^{\tilde h}} {x-z_j}\Big)^{(p-1)}_x = \frac{1}{\bar h} \Big(P^{\tilde h -1}\Big)^{(p)}_x =0, $$
 matrix $A_{\tilde h}$ defines a homomorphism of $\Gamma (Z, \cO_Z)\otimes \bF_p$-modules:
 \begin{equation}\label{eq:56}
A_{\tilde h}\colon E_{\bar h} \to \bigoplus_{l=1}^{[\frac{n\tilde h}{p} ]}  
H^0 (S, \cO_{S_{\bF_p}})\, \, \mu_l, \quad \eta_j \mapsto \sum_{l=1}^{[\frac{n\tilde h}{p} ]} Q^{(pl-1)}_j (z) \mu_l.
 \end{equation}
 We shall also consider the homomorphism
 \begin{equation}\label{eq:57}
 A^t_{p-\tilde h}\colon\bigoplus_{l=1}^{n-1-[\frac{n\tilde h}{p} ]}  
H^0 (S, \cO_{S_{\bF_p}}) \, \, \nu_l^* \to E_{\bar h}.  
\end{equation}
 given by  the $n\times (n-1-\big[\frac{n\tilde h}{p} \big])$  matrix $A_{p-\tilde h}^t$. The following result (applied to $(\cV, \nabla^{\on{KZ}, h}) =  E_{-\bar h}$) implies Theorem \ref{thm:main3_introduction} from the Introduction.
\begin{thm}
\label{th:main_result_for_rational_h}

${}$

\begin{itemize}

\item[(i)]

Under the isomorphisms \eqref{eq:trivialization_of_loc1}, \eqref{eq:trivialization_of_loc2} the sequence \eqref{eq:conjugatetwistedlog} 
of flat bundles has the form
\begin{equation}\label{eq:conjugatetwistedlogbis}
  0 \to  \bigoplus_{l=1}^{n-1 -[\frac{n\tilde h}{p} ]}
H^0 (S, \cO_{S_{\bF_p}}) \, \, \nu_l^* \
\xrightarrow{-h A_{p-\tilde h}^t}\
 E_{\bar h} \xrightarrow{A_{\tilde h}} \
 \bigoplus_{l=1}^{[\frac{n\tilde h}{p} ]}
H^0 (S, \cO_{S_{\bF_p}}) \, \, \mu_l \to 0,
\end{equation}
where each $\nu_l^*$ and  $\mu_l$ is flat.
In particular, we have
\begin{equation}
  A_{\tilde h}  A_{p-\tilde h}^t=0.
\end{equation}
\item[(ii)] 
If  $\big[\frac{n\tilde h}{p} \big]  \ne n-1$ then morphism $A^t_{p-\tilde h}$ induces an isomorphism
\begin{equation}\label{eq:fltat_sections_of_E}
\bigoplus_{l=1}^{n-1-[\frac{n\tilde h}{p} ]} H^0 (S, \cO_{S_{\bF_p}}^p) \, \,\nu_l^* \iso E_{\bar h}^{\nabla =0}.
\end{equation}
 \item[(iii)] The duality  $E_{ \bar h} \otimes E_{- \bar h} \to  \cO_{S_{\bF_p}}$ (see Corollary \ref{cor:duality_section2}) restricts to the zero map 
 on $ H^1(\bP^1_S, \cH_{dR, \log}^0(\bP^1_S/S, \cP^{\bar h})) \otimes  H^1(\bP^1_S, \cH_{dR, \log}^0(\bP^1_S/S, \cP^{-\bar h}))$
 and induces a perfect pairing 
 \begin{equation}\label{eq:lagrangian_property}
     H^1(\bP^1_S, \cH_{dR, \log}^0(\bP^1_S/S, \cP^{\bar h})) \otimes  H^0(\bP^1_S, \cH_{dR, \log}^1(\bP^1_S/S, \cP^{-\bar h}))\to \cO_{S_{\bF_p}}
 \end{equation}
 \end{itemize}
\end{thm}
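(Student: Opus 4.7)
The plan is to analyze the conjugate filtration on the relative log de Rham cohomology of $\cP^{\bar h}$. Since $\bar h \in \bF_p$, the $p$-curvature of $\cP^{\bar h}$ vanishes by (\ref{eq:p-curvatureoflinebundle}), and so the generalized Cartier isomorphism of Lemma \ref{lem:localcohomologycomp} identifies the boundary terms in the exact sequence (\ref{eq:conjugatetwistedlog}) with trivial flat bundles whose flat bases are $\nu_l^*$ and $\mu_l$. Parts (i) and (iii) will follow from two explicit computations with this identification, while part (ii) is the main obstacle and requires additional input from the theory of $p$-curvature.

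For part (i), I would compute the edge map $\psi_{\tilde h}$ directly by applying the Cartier operator (\ref{ogus_explicit_formula}) to the class $[\eta_j] = [dx/(x-z_j)]$. The key observation is that multiplication by $P^{\tilde h}$ trivializes $\cP^{\bar h}|_U$ as a module with connection, so
$$\big(\nabla^{\cP^{\bar h}}_{\partial/\partial x}\big)^{p-1}(f)\ =\ P^{-\tilde h}\,(P^{\tilde h}f)^{(p-1)}_x,$$
and the defining identity (\ref{eq:p-hypergeometricsol}) yields $\psi_{\tilde h}([\eta_j]) = \sum_l Q^{(pl-1)}_j(z)\,\mu_l$, which is precisely $A_{\tilde h}$. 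The map $\phi_{\tilde h}$ is then obtained by dualization: Serre duality on the $\bP^1$-fibers identifies $H^1(\cH^0_{dR,\log}(\cP^{\bar h}))$ with $H^0(\cH^1_{dR,\log}(\cP^{-\bar h}))^*$, so $\phi_{\tilde h}$ becomes the transpose of the analogous Cartier map for $\cP^{-\bar h}$, and comparing with the explicit formula of Corollary \ref{cor:duality_section2} introduces the scalar $-h$, giving $\phi_{\tilde h} = -h\, A^t_{p-\tilde h}$. The compositional vanishing $A_{\tilde h}\,A^t_{p-\tilde h}=0$ is then just exactness. Part (iii) is a reformulation of the same Serre duality: since classical Serre duality on $\bP^1$ pairs $\cH^0$ of a coherent sheaf with the $\cH^1$ of its Serre dual — and not $\cH^0$ with $\cH^0$ — the restricted pairing on $H^1(\cH^0(\cP^{\bar h})) \otimes H^1(\cH^0(\cP^{-\bar h}))$ vanishes (the Lagrangian property), while the induced pairing between the sub of $E_{\bar h}$ and the quotient of $E_{-\bar h}$ is perfect.

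The main obstacle is part (ii): showing that every flat section of $E_{\bar h}$ in fact comes from the trivial sub $H^1(\cH^0)$. The inclusion $\bigoplus \cO_{S_{\bF_p}}^p\, \nu_l^* \hookrightarrow E_{\bar h}^{\nabla=0}$ is automatic since $H^1(\cH^0)$ is a trivial flat sub-bundle, and the injectivity of $A^t_{p-\tilde h}$ on $\cO_{S_{\bF_p}}^p$-sections follows from the linear independence of the $p$-hypergeometric solutions (Theorem \ref{thm inde}). For the reverse inclusion, I would apply Katz's formula expressing the $p$-curvature of the Gauss--Manin connection in terms of the Kodaira--Spencer class of the family $\bP^1_S \to S$: it factors each $\Psi_k$ as
$$E_{\bar h} \xrightarrow{\psi_{\tilde h}} H^0(\cH^1) \xrightarrow{\mathrm{KS}_k} F^*_{\mathrm{abs}}\Omega^1_{S_{\bF_p}} \otimes H^1(\cH^0) \xrightarrow{\phi_{\tilde h}} F^*_{\mathrm{abs}}\Omega^1_{S_{\bF_p}} \otimes E_{\bar h}.$$
Combined with a direct computation showing that the Kodaira--Spencer operators $\mathrm{KS}_k$, $1 \leq k \leq n$, are jointly non-degenerate on $H^0(\cH^1)^{\nabla=0}$ when $[n\tilde h/p] \ne n-1$, this forces any flat $s \in E_{\bar h}^{\nabla=0}$ to satisfy $\psi_{\tilde h}(s)=0$, hence $s \in \mathrm{image}(\phi_{\tilde h})$. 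The non-degeneracy of Kodaira--Spencer — the key analytic input, to be verified by explicit computation with the bases $\mu_l$ and $\nu_l^*$ and their dual pairing from (iii) — is the technical heart of the argument.
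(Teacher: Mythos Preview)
Your treatment of parts (i) and (iii) matches the paper's: compute $\psi_{\tilde h}$ via the Cartier operator after trivializing $\cP^{\bar h}|_U$ by $P^{\tilde h}$, then obtain $\phi_{\tilde h}$ and the Lagrangian property from the Serre/Poincar\'e duality adjunction $(\phi_{\tilde h}v,w)=(v,\psi_{p-\tilde h}w)$.

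For part (ii) you take a genuinely different route from the paper's first proof. The paper argues there without $p$-curvature: it identifies $E_{\bar h}^{\nabla=0}$ with the \emph{absolute} log de Rham cohomology $H^1_{dR,\log}(\bP^1_{S_{\bF_p}},\cP^{\bar h})$ (using $H^0=0$), then shows the composite $H^1_{dR,\log}(\bP^1_{S_{\bF_p}},\cP^{\bar h})\to H^0(\bP^1_S,\cH^1_{dR,\log}(\bP^1_S,\cP^{\bar h}))\to H^0(\bP^1_S,\cH^1_{dR,\log}(\bP^1_S/S,\cP^{\bar h}))$ is zero by proving that a connecting homomorphism in the fiberwise log Kodaira--Spencer sequence is injective whenever $m=[n\tilde h/p]<n-1$. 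This is a short, self-contained cohomological argument on $\bP^1$.

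Your approach --- factor the $p$-curvature through the conjugate filtration and show the middle operators are jointly non-degenerate --- is essentially the paper's \emph{second} proof (Section~\ref{sec 5}), transported from the curve $X$ back to the log-$\bP^1$ picture. One caution: Katz's theorem in \cite{katz} is stated for smooth proper families with constant coefficients, not for log de Rham cohomology with coefficients in $\cP^{\bar h}$, so you cannot invoke it verbatim. The factorization $\Psi_k=\phi_{\tilde h}\circ\overline{\Psi}_k\circ\psi_{\tilde h}$ is automatic (the outer terms have Frobenius-descent connections, hence zero $p$-curvature), but identifying $\overline{\Psi}_k$ as a Kodaira--Spencer map and establishing its non-degeneracy must be done by direct computation --- which is exactly what the paper does via the curve in Lemma~\ref{lm:image_of_p_curvature_is_big} and Corollary~\ref{cor 5.3}. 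Your plan is viable once you carry out that computation in the $\mu_l,\nu_l^*$ bases; the paper's first proof trades this for the simpler injectivity-of-$\delta$ argument, at the cost of introducing the absolute de Rham cohomology.
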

\begin{proof} Part (i). First,  observe that  the image of $\phi_{\tilde h}$ in \eqref{eq:conjugatetwistedlog} is precisely the kernel 
the restriction map 
\begin{equation}\label{eq:restrictionandconjfilt}
E_{\bar h}=H^1_{dR, \log}(\bP^1_S/S, \cP^{\bar h})\to  H^1_{dR}(U/S, \cP^{\bar h}), \quad U=\bP^1_S\,\setminus\, T.
\end{equation} 
This follows from the vanishing 
$H^1(U, \cH_{dR, \log}^0|_{U})=0$ and the injectivity of the restriction map $H^0(\bP^1_S, \cH_{dR, \log}^1) \to H^0(U, \cH_{dR, \log}^1|_{U})$.

Let us check that under trivialization  (\ref{eq:trivialization_of_loc1}) morphism $\psi_{\tilde h}$  is given by matrix $ A_{\tilde h}$. This amounts to verifying that, for every
$1\leq j\leq n$, we have that 
\begin{equation}\label{eq:-1}
    C(\eta_j)= - P^{-\tilde h} \Big(\frac{P^{\tilde h}} {x-z_j}\Big)^{(p-1)}_x \otimes dx,
\end{equation}
where $C$ is the generalized Cartier operator for $\cP^{\bar h}$. 
Since $ \cH_{dR, \log}^1$ is locally free it suffices to check that $\psi_{\tilde h}=A_{\tilde h}$ on $U_{\bF_p} \subset \bP^1_{S_{\bF_p}}$.
But the local system $\cP^{\bar h}$ restricted to $U_{\bF_p}$ is trivial:
 $$\cP^{\bar h}|_{U_{\bF_p}} \iso \cO_{U_{\bF_p}}, \quad f\mapsto P^{\tilde h} f.$$ 
Formula (\ref{eq:-1}) is now a consequence of (\ref{ogus_explicit_formula}) applied to the trivial local system.

Let us prove that under trivialization  (\ref{eq:trivialization_of_loc2}) morphism $\phi_{\tilde h}$  is given by matrix $-h A_{p- \tilde h}^t$. To do this we shall use the duality $E_{- \bar h}^* \iso  E_{\bar h}$ (see Corollary \ref{cor:duality_section2}).
Consider the sequence (cf. \eqref{eq:conjugatetwistedlog}) 
\begin{equation}\label{eq:conjugatetwistedlogdual}
  0 \to H^1(\bP^1_S, \cH_{dR, \log}^0( \cP^{-\bar h}))  \xrightarrow{\phi_{p- \tilde h}} E_{- \bar h}  \xrightarrow{\psi_{p- \tilde h}} H^0(\bP^1_S, \cH_{dR, \log}^1 ( \cP^{-\bar h})) \to 0.
\end{equation}
Using Lemma \ref{lem:localcohomologycomp} we find $\cH_{dR, \log}^0( \cP^{\bar h}) \iso \Hom_{\cO_{\bP^1 \times S_{\bF_p}}}(\cH_{dR, \log}^1( \cP^{-\bar h}),
\Omega^1_ {\bP^1 \times S_{\bF_p}/S_{\bF_p}})$. We have 
\begin{equation}\label{eq:lagrangian_proof}
 (\phi_{\tilde h} v, w)= (v,  \psi_{p- \tilde h} w), \quad \; v\in H^1(\bP^1_S, \cH_{dR, \log}^0( \cP^{\bar h})), \; w\in  E_{- \bar h},
\end{equation}
where the left-hand side is given by the pairing $E_{\bar h} \otimes E_{- \bar h}\to \bP^1_{S_{\bF_p}}$ and the right-hand side by the Serre duality.
The formula $\phi_{\tilde h}= -h A_{p- \tilde h}^t$ follows from $ \psi_{p- \tilde h} = A_{p- \tilde h}$ proven above and Corollary \ref{cor:duality_section2}.
The completes the proof of part (i). We note that \eqref{eq:lagrangian_proof} also proves  part (iii).

 Part (ii). Applying   the
left exact functor $M\mapsto M^{\nabla=0}$ to  sequence \eqref{eq:conjugatetwistedlog}, we find
$$ 0 \to H^1(\bP^1_S, \cH_{dR, \log}^0)^{\nabla=0}  \xrightarrow{\phi_{\tilde h}} E_{\bar h}^{\nabla^E=0}  \xrightarrow{\psi_{\tilde h}} H^0(\bP^1_S, \cH_{dR, \log}^1)^{\nabla=0}.$$
 We have to show that the last morphism is equal to $0$.  
 Since $H^0_{dR, \log}(\bP^1_S/S, \cP^{\bar h})=0$
 the canonical morphism $$H^1_{dR, \log}(\bP^1_S, \cP^{\bar h}) \to H^1_{dR, \log}(\bP^1_S/S, \cP^{\bar h})^{\nabla=0}=E_{\bar h}^{\nabla^E=0} $$
 is an isomorphism. We have a commutative diagram
 \begin{equation}\label{dia:relative_v_absolute}
\def\normalbaselines{\baselineskip20pt
\lineskip3pt  \lineskiplimit3pt}
\def\mapright#1{\smash{
\mathop{\to}\limits^{#1}}}
\def\mapdown#1{\Big\downarrow\rlap
{$\vcenter{\hbox{$\scriptstyle#1$}}$}}
\begin{matrix}
H^1_{dR, \log}(\bP^1_S, \cP^{\bar h}) & \xrightarrow{}  &  H^0(\bP^1_S, \cH_{dR, \log}^1(\bP^1_S, \cP^{\bar h} ))  \cr
 \mapdown{}  &  &\mapdown{\alpha}  \cr
H^1_{dR, \log}(\bP^1_S/S, \cP^{\bar h}) & \xrightarrow{}  &  H^0(\bP^1_S, \cH_{dR, \log}^1(\bP^1_S/S, \cP^{\bar h} )).
\end{matrix}
\end{equation}
We will complete the proof by showing that morphism $\alpha$ in diagram (\ref{dia:relative_v_absolute}) is equal to $0$.

Using  the generalized Cartier transform, we have that
$$\cH_{dR, \log}^1(\bP^1_S, \cP^{\bar h} )\iso \Big(F^*_{\text{abs}}\Omega ^i_{\bP^1 \times S_{\bF_p}}\big(\, \log T_{\bF_p}\,\big)\Big)^{\nabla^{\cP^{\bar h}}=0}.$$
The expression at the right-hand side of the above formula stands for the sheaf of sections 
  annihilated by all germs of vector fields on $\bP^1_{S_{\bF_p}}$. This sheaf can be described
  explicitly:
  \begin{gather*}
 \Omega ^1_{\bP^1 \times S_{\bF_p}}\big(\, \log T_{\bF_p} \,\big) \otimes \cO_{\bP^1 \times S_{\bF_p}}\Big(\Big[\frac{n\tilde h}{p} \Big] T_\infty - \sum_{i=1}^n T_i\Big)\\
\iso F_{\text{abs}*}\Big(F^*_{\text{abs}}\Omega ^1_{\bP^1 \times S_{\bF_p}}\big(\, \log T_{\bF_p}\,\big)\Big)^{\nabla^{\cP^{\bar h}}=0},\\
\omega 
 \mapsto  P^{-\tilde h} \otimes \omega.
\end{gather*}
The desired  vanishing of $\alpha$  follows from the next result.
\begin{lem}
    Let $K$ be a field and $a=(a_1, \dots, a_n)$ be a $K$-valued point of $S$.  Consider the 
  exact sequence of sheaves
\begin{equation}\label{eq:exactsequencelogks}
    0\to \bigoplus_{i=1}^n \cO_{\bP^1_S} dz_i \to \Omega ^1_{\bP^1_S}(\log T) \to 
    \Omega ^1_{\bP^1_S/S}(\log T) \to 0,
\end{equation}
 and let    
    \begin{equation}\label{eq:exactsequencelogksfiber}
    0\to \bigoplus_{i=1}^n \cO_{\bP^1_a} dz_i \to (\Omega ^1_{\bP^1_S}(\log T))|_{\bP^1_a} \rar{\gamma} 
    \Omega ^1_{\bP^1_a}(\log T) \to 0
    \end{equation}
    be the restriction of (\ref{eq:exactsequencelogks}) to the fiber $\bP^1_a$ of the projection $\bP_S \to S$ over $a$. Fix an integer $m$ and denote by $D$ the divisor $m T_\infty - \sum_{i=1}^n T_i$ restricted to the projective line $\bP^1_a$.
    Then, if $m< n-1$, the morphism 
    $$H^0(\bP^1_a, (\Omega ^1_{\bP^1_S}(\log T))|_{\bP^1_a} \otimes  \cO_{\bP^1_a}(D))\to 
  H^0(\bP^1_a,  \Omega ^1_{\bP^1_a}(\log T) \otimes  \cO_{\bP^1_a}(D) ) $$
  induced by $\gamma$ is equal to $0$.
\end{lem}

\begin{proof}
Consider the tensor product of  sequence (\ref{eq:exactsequencelogksfiber}) with 
$\cO_{\bP^1_a}(D)$ and the associated long exact sequence of cohomology groups. We need to show that the connecting homomorphism 
$$  H^0(\bP^1_a,  \Omega ^1_{\bP^1_a}(\log T) \otimes  \cO_{\bP^1_a}(D) ) \rar{\delta}
\bigoplus_{i=1}^n H^1(\bP^1_a,\cO_{\bP^1_a}(D)) dz_i
$$
is injective. 
We give an explicit formula for $\delta$. Identify 
$$H^1(\bP^1_a,\cO_{\bP^1_a}(D))\iso \coker(H^0(\bP^1_a,\cO_{\bP^1_a}(m[\infty])  ) \to 
H^0(\bP^1_a,\cO_{\bP^1_a}(m[\infty])/\cO_{\bP^1_a}(D)  ))$$
with cokernel of the restriction map
 \begin{equation}\label{eq:rest_and_coh_comp}
\text{Pol}^{\leq m} \to \text{Fun}(\{a_1, \dots, a_n\}),
\end{equation}
where $\text{Pol}^{\leq m}\subset K[x]$ is the space of polynomials of degree $\leq m$ and 
$\text{Fun}(\{a_1, \dots, a_n\})$ is the space of $K$-valued functions on the set $\{a_1, \dots, a_n\}$.
We also identify the space of global sections of the invertible sheaf 
$ \Omega ^1_{\bP^1_a}(\log T) \otimes  \cO_{\bP^1_a}(D)  =  \Omega ^1_{\bP^1_a}\otimes  \cO_{\bP^1_a}(m[\infty])$ with $\text{Pol}^{\leq m-2}dx$.
Note that the latter space is $0$ if $m<2$ (and, thus, the conclusion of the Lemma trivially holds). 
Hence, we shall assume, for the remaining part of the proof that $m\geq 2$. 
For every  $f(x)dx\in \text{Pol}^{\leq m-2}dx$ the formula
$$f(x)dx = f(x)(x-z_i) \frac{d(x-z_i)}{x-z_i} + f(x)dz_i$$
shows that the coefficients $c_i$ in
$$\delta(f(x)dx)= \sum_{i=1}^{n}c_i dz_i \in \bigoplus_{i=1}^n H^1(\bP^1_a,\cO_{\bP^1_a}(D)) dz_i$$ are, under our identification,  given by
$$c_i(j)= \delta_{ij} f(a_i), \quad 1\leq i,j \leq n.$$
Note that,  for every non-zero $f\in \text{Pol}^{\leq m-2}$ there exists $i$ such that $f(a_i)\ne 0$. But then the class of the function $c_i$ in the cokernel of the restriction map 
(\ref{eq:rest_and_coh_comp}) is also non-zero: the function $c_i$ has precisely $n-1$ zeroes so it cannot be a restriction of a polynomial of degree $\leq m < n-1$.
\end{proof}
This completes the proof of part (ii). Part (iii) has been already proven.
\end{proof}

\section{Cohomology of a certain family of curves}
\label{sec 4} For the duration of this section we fix coprime integers $q>0$, $n>1$, and set $\Lambda = \bZ[\frac{1}{nq}]\subset \bQ$, $S= \Spec \Lambda [z_i, (z_i-z_j)^{-1}, 1\leq i< j \leq n]$. We shall construct a smooth projective curve $X$ over $S$ equipped with an action the group scheme $\mu_q$ of $q$-th roots of unity. We show  that
the isotypic components of $H^1_{dR}(X/S)$ are closely related to the specialization of
the local system $E$ to $h\in \bQ$. We prove a certain non-degeneracy property of the Kodaira-Spencer map associated with 
the Hodge filtration on $H^1_{dR}(X/S)$ and give an application to the study of the $p$-curvature of the Gauss-Manin connection on $H^1_{dR}(X_{\bF_p}/S_{\bF_p})$. 

\subsection{Family of curves}\label{ss:curve} 
Let  $W\mono \bA^2 _S=  \bA^2 \times S $ be the closed subscheme  defined by equation
$$y^q= \prod_{i=1}^{n}(x -z_i).$$
The scheme $W$ is smooth over $S$. The projection $\bA^2 _S \to \bA^1_S$, $(x,y) \mapsto x$,   yields a quasi-finite morphism
\begin{equation}\label{eq:proj}
W \mono \bA^2 _S \to  \bA^1_S\mono  \bP^1_S.
\end{equation}
Let $$\rho: X \to \bP^1_S$$ be the normalization of $\bP^1_S$ in $W$ (\cite[Tag 0BAK]{stacks}). Thus, $X$ is a normal integral scheme equipped with a finite morphism $\rho$ such that $\rho^{-1}( \bP^1_S\setminus \{\infty \times S\})=
W$.
\begin{lem}\label{lm:curve} The composition $\pi: X\rar{\rho}\bP^1_S \to  S$ is proper and smooth. The morphism $\rho$ is flat. Fibers of $\pi$ are  geometrically connected fibers curves of genus 
$$g= \frac{qn -q -n +1}{2}.$$  
\end{lem}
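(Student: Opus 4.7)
The plan has three steps: an explicit local description of $X$ over a neighborhood of $\infty\times S\subset\bP^1_S$, verification of smoothness, properness, and flatness, and finally the genus computation via Riemann--Hurwitz. The main obstacle is the analysis at infinity, where the hypothesis $\gcd(q,n)=1$ enters essentially; on the affine part $W$ and in the subsequent steps the arguments are formal.

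On $\rho^{-1}(\bA^1_S)=W$ smoothness over $S$ follows from the Jacobian criterion applied to $f=y^q-P(x)$: the partials $\partial_y f = qy^{q-1}$ and $\partial_x f = -P'(x)$ cannot vanish simultaneously on a geometric fiber, since $y=0$ forces $x=z_i$ for some $i$, at which $P'(z_i)=\prod_{j\ne i}(z_i-z_j)$ is a unit, while $y\ne 0$ makes $\partial_y f$ a unit because $q\in\Lambda^{\times}$. For a chart near $\infty\times S$, I would put $u=1/x$ and rewrite the equation as $y^q=u^{-n}\tilde P(u)$ with $\tilde P(u)=\prod_i(1-z_i u)$ a unit at $u=0$. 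For any valuation $v'$ of $K(W)$ extending the $\infty$-adic valuation of $K(\bP^1_S)$ one has $q\,v'(y)=-n\,v'(u)$, and since $\gcd(q,n)=1$ the ramification index $e$ must satisfy $q\mid e$; combined with $\sum e_i f_i = q$ this forces $e=q$, so there is exactly one place above the generic point of $\infty\times S$ and it is totally ramified. Choosing positive integers $a,b$ with $bq-na=1$ (B\'ezout), the element $w=y^a u^b$ has valuation $1$ at this place and satisfies
\[
w^q \ =\ u^{bq-na}\,\tilde P(u)^a \ =\ u\,\tilde P(u)^a.
\]
Since $\tilde P(0)=1$, this equation determines $u$ as a power series in $w$ with coefficients in $\Lambda[z_i,(z_i-z_j)^{-1}]$ by the implicit function theorem, so $X$ is smooth over $S$ in this chart with local coordinate $w$, and each point of $\infty\times S$ has a unique preimage in $X$.

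It then follows that $\pi:X\to S$ is smooth; it is proper because $\rho$ is finite and $\bP^1_S\to S$ is proper. Flatness of $\rho$ is miracle flatness: $\rho$ is finite, $X$ is smooth over $\Lambda$ and hence Cohen--Macaulay, $\bP^1_S$ is regular, and both have pure dimension $n+1$. For geometric connectedness of fibers, I would fix a geometric point $\bar s\in S$; the $z_i(\bar s)$ are distinct, so $P(x)$ has simple roots over $\overline{\kappa(\bar s)}$ and is not a nontrivial $d$-th power for any $d\mid q$, whence $y^q-P(x)$ is irreducible over $\overline{\kappa(\bar s)}(x)$. Thus the function field of $X_{\bar s}$ is a field, $X_{\bar s}$ is integral, and being smooth it is geometrically connected. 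Finally, Riemann--Hurwitz applied to the degree-$q$ cover $X_{\bar s}\to\bP^1$ (tamely ramified because $q\in\Lambda^{\times}$), totally ramified at the $n$ points $z_i$ and at the unique point above $\infty$, gives
\[
2g-2 \ =\ -2q + (n+1)(q-1) \ =\ (n-1)q - n - 1,
\]
and therefore $g=(nq-q-n+1)/2$.
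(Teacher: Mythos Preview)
Your proof is correct and follows essentially the same route as the paper: an explicit second chart at infinity obtained via a B\'ezout relation $bq-na=1$ (the paper writes the gluing maps out fully whereas you motivate the chart through the valuation argument), the Jacobian criterion for smoothness, miracle flatness for $\rho$, and Riemann--Hurwitz for the genus. One small slip: both $X$ and $\bP^1_S$ have pure dimension $n+2$ over $\Spec\bZ$, not $n+1$, though this does not affect the miracle-flatness argument; also, your explicit verification that $y^q-P(x)$ is irreducible over $\overline{\kappa(\bar s)}(x)$ fills in a point the paper leaves implicit.
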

\begin{proof}
 We shall construct $\rho: X \to \bP^1_S$ by gluing two 
affine charts $W$, $W'$.
 The one, $W$, is already defined. To describe the second one, choose positive integers $a, b$ such that
 $an - bq =-1$. Define  $W'$  to be the closed subscheme of the affine space  $\Spec \cO(S)[v,u]\iso \bA^2 _S $  given by the equation
$$u^q= v   \prod_{i=1}^{n}(1 -z_i v)^a.$$
Scheme $X$ is obtained by gluing $W$ and $W'$.
The glueing maps between the two charts are given by
$$v=x^{-1}, u = y^a x^{-b}; \quad  x= v^{-1},  y = u^{-n} \prod_{i=1}^{n}(1 -z_i v)^b.$$
Cover $\bP^1_S$ by two copies  of the affine line, $\Spec \cO(S)[x] $,  $\Spec \cO(S)[v] $, glued via $x \mapsto x^{-1}$. Then  the map $\rho: X \to \bP^1_S$ carries $W$ to $\Spec \cO(S)[x] $ by (\ref{eq:proj}) and $W'$ to 
 $\Spec \cO(S)[v] $ by $(v,u)\mapsto v$.
 Using the Jacobian criterion $X$ is smooth over $S$. It is also connected. The finite morphism $\rho: X \to \bP^1_S$ coincides with (\ref{eq:proj})  over $\bA^1_S\mono \bP^1_S$. Thus, $\rho: X \to \bP^1_S$  
is the normalization of  (\ref{eq:proj}). Flatness of $\rho$ follows from the Miracle Flatness Theorem asserting that a finite surjective morphism of connected regular schemes is flat
 (\cite[Tag 00R4]{stacks}).

\smallskip

The formula for the genus of the fiber follows from the Riemann-Hurwitz formula: if $z=(z_1, 
{\dots}, z_n)$ is a $k$-point of $S$, 
where $k$ is a field, then  the fiber $X_z$ of $\pi$ over $z$
admits a finite degree $q$ map $\rho_z: X_z\to \bP^1_k$ ramified exactly over points $z_i$, $1\leq i\leq n$, and over $\infty$. Moreover, the   ramification index $e$ over each of these points is $q$ (which is coprime to $\Char k$). Thus,
$$2-2g = 2 \deg \rho_z - (n+1)(e-1)=  2 q - (n+1)(q-1),$$
and we win.  
\end{proof}
Using the affine charts introduced in the proof of Lemma \ref{lm:curve} we define certain divisors on $X$.  
For $1\leq i \leq n$, let  $D_i \mono W \mono X$ (resp. $T_i\mono \bA^1_S \subset \bP^1_S$) be the divisor given by equations $x=z_i$, $y=0$ (resp. $x=z_i$). Also, let 
$D_\infty\mono W' \mono X$ be the divisor given by equations
 $v=0$, $u=0$. Finally, put
$T_\infty = \{\infty \} \times S \mono
\bP^1 \times S =\bP^1_S$.  
We have that
 \begin{equation}\label{eq:ramification}
\rho^*[T_i]= q [D_i], \quad \rho^*[T_\infty]= q [D_\infty].
\end{equation}

The finite group scheme $\mu_q$ of roots of unity acts on $W$ 
 by the formula $\sigma_\epsilon(x,y)= (x, \epsilon y)$, $\epsilon \in \mu_q$. 
 This action extends uniquely to an action of $\mu_q$ on $X$.
 
 \subsection{De Rham cohomology of $X$: the set up.}
 Consider the relative de Rham cohomology
$H^1_{dR}(X/S)= R^1\pi_* (\Omega_{X/S}^\bullet, d) $. This is a locally free $\cO_S$-module  of rank $2g$
 equipped with the Gauss-Manin  connection $\nabla$, the Hodge 
  filtration $F^1\subset H^1_{dR}(X/S)$,
  \begin{equation}\label{eq:Hodgefilt}
F^1\iso  \pi_* \Omega_{X/S}^1, \qquad H^1_{dR}(X/S)/F^1 \iso R^1\pi_*\cO_X.
\end{equation}
Both $F^1$ and $H^1_{dR}(X/S)/F^1$ are locally free $\cO_S$-modules of rank $g$.
  The Poincare duality induces a perfect skew symmetric bilinear form 
 \begin{equation}\label{eq:Poincare}
H^1_{dR}(X/S) \otimes _{\cO_S} H^1_{dR}(X/S)  \to \cO_S,
\end{equation}
such that $F^1$ is Lagrangian with respect to (\ref{eq:Poincare}).

The action of $\mu_q$ on  $H^1_{dR}(X/S)$ determines a decomposition of the latter into a direct sum of isotypic components numbered by characters $\Hom(\mu_q, \bG_m)= \bZ/q\bZ$ of the group scheme $\mu_q$:
\begin{equation}\label{eq:decomposition}
H^1_{dR}(X/S)= \bigoplus _{\bar r\in \bZ/q\bZ}  H^1_{dR}(X/S)_{\bar r}, \quad F^1= \bigoplus _{\bar r \in \bZ/q\bZ}  F_{\bar r}^1,
 \end{equation}
Thus, for every $\epsilon \in \mu_q$, the endomorphism $\sigma^*_\epsilon: H^1_{dR}(X/S) \to H^1_{dR}(X/S) $ preserves each summand in (\ref{eq:decomposition}) and   acts on $H^1_{dR}(X/S)_{\bar r}$ as
 $\epsilon ^{-\bar r} \Id$. The Gauss-Manin  connection $\nabla$ preserves each summand $H^1_{dR}(X/S)_{\bar r}$. The Poincar\'e duality (\ref{eq:Poincare}) induces isomorphisms
 \begin{equation}\label{eq:duality_and_mu}
 H^1_{dR}(X/S)_{\bar r} \iso H^1_{dR}(X/S)_{- \bar r}^*, \quad
F^1_{\bar r} \iso (H^1_{dR}(X/S)/F^1)_{-\bar r}^*.
\end{equation}
\subsection{De Rham cohomology of an open curve $X^{\circ}\subset X$.}
Let $X^\circ: = X\setminus (D_\infty \cup \bigcup D_i) $ be the complement to the ramification divisor of $\rho$, $U:=\bP^1_S \setminus (T_\infty \cup \bigcup T_i)$. Then the morphism $\rho: X^\circ \to U$ is a finite  \'etale cover of degree $q$. The group scheme 
$\mu_q$ acts on $X^\circ$ by deck transformations. 
It follows that $\rho_*\cO_{X^\circ}$ is equipped with an 
integrable connection 
$$\nabla: \rho_*\cO_{X^\circ} \to \rho_*\cO_{X^\circ} \otimes _{\cO_U}  \Omega^1_{U}$$
induced by the tautological connection on  $\cO_{X^\circ}$. Each isotypic component $(\rho_*\cO_{X^\circ})_{\bar r}$ is stable under $\nabla$ and has  rank $1$. 
We shall describe this local system explicitly. Pick $r\in \bZ$ congruent to $\bar r$ modulo $q$ and set $\bar h=  - \frac{r}{q}$.
Then $(\rho_*\cO_{X^\circ})_{\bar r}$  is generated by the global section $\frac{1}{y^r}$
with
$$\nabla\left(\frac{1}{y^r}\right)= \frac{\bar h}{y^r}  \eta, $$
where $\eta=\frac{dP}{P}$ is defined in \eqref{log-1-form}. This identifies $(\rho_*\cO_{X^\circ})_{\bar r}$ with the specialization $\cP^{\bar h}$ of the local system 
$\cP^{h}$ introduced in Section \ref{ss:loc_system}. Consequently, we have
\begin{equation}\label{eq:coh_of_open_curve}
 H^\bullet_{dR}(X^{\circ}/S)_{\bar r}\iso H^\bullet\big(\Gamma(U, \cO_U)\xrightarrow{\nabla^{\cP^{\bar h}}}\Gamma(U, \Omega^1_{U/S})\big), \quad \nabla^{\cP^{\bar h}}(f)= df +\bar h f \eta. 
\end{equation}

\subsection{De Rham cohomology of $X_{\bQ}$.}
For  $\bar r \ne 0$, the restriction map 
$$H^\bullet_{dR}(X/S)_{\bar r} \to H^\bullet_{dR}(X^{\circ}/S)_{\bar r}$$
induces  an isomorphism of vector spaces over $\bQ$. 
Indeed, we have an exact sequence
\linebreak
 $\cO_S$-modules
$$ 0\to H^1_{dR}(X/S)\otimes \bQ    \rar{} H^1_{dR}(X^{\circ}/S)\otimes \bQ  \rar{\text{Res}} \cO_S^{\oplus n+1}\otimes \bQ \rar{\sum}   \cO_S \otimes \bQ  \to 0,$$
where $\text{Res}$ stands for the residue map at the divisors $D_1, {\dots}, D_n, D_\infty$.  The invariance of $\text{Res}$ under the $\mu_q$-action forces the first map to be an isomorphism on the  isotypic components with $\bar r \ne 0$.
Also, for $\bar r \ne 0$,  the  restriction morphism 
$$H^\bullet_{dR, \log}(\bP^1_S/S, \cP^{\bar h})  \to H^\bullet_{dR}(U/S,  \cP^{\bar h})$$
is an isomorphism  after tensoring with $\bQ$. This follows from Proposition \ref{pr:log_to_open}.
Summarizing, we conclude that, for $\bar r \ne 0$,   $\eta_i =\frac{d(x-z_i)}{x-z_i}, \bar h= -\frac{r}{q}, 1\leq i \leq n$, we have
\begin{equation}\label{eq:isooverrationals}
H^1_{dR}(X/S)_{\bar r}\otimes \bQ \iso  E_{\bar h}\otimes \bQ, \quad 
 \left[\frac{\eta_i}{y^r}\right]\mapsto  [\eta_i]\,, 
\end{equation}
\begin{equation}\label{eq:isooverrationalsbis}
H^1_{dR}(X/S)_{- \bar r}\otimes \bQ \iso  E_{- \bar h}\otimes \bQ, \quad 
 \left[y^r \eta_i \right]\mapsto  [\eta_i]\,. 
\end{equation}
We remark that the isomorphisms \eqref{eq:isooverrationals},  \eqref{eq:isooverrationalsbis} intertwine the Poincar\'e duality \eqref{eq:duality_and_mu} and the duality   $E_{- \bar h}^*\otimes \bQ \iso 
E_{\bar h}\otimes \bQ$ defined in Section \ref{ss:duality}. In particular, using Corollary \ref{cor:duality_section2}, we have 
 \begin{equation}\label{eq:poincare_duality_expl}
  ([ y^{-r}\eta_i], [y^r \eta_j])= -h^{-1}(\delta_{ij} - \frac{1}{n}), \quad \;
1\leq i, j \leq n\,. 
 \end{equation}

Using Proposition \ref{pr:basis_for_E} we find that $H^1_{dR}(X/S)_{\bar r}\otimes \bQ$
is a  free module of rank $n-1$ over $\Gamma(S,  \cO_S) \otimes  \bQ$ generated by classes of rational $1$-forms
\begin{equation}\label{eq:forms}
\omega^{(r)}_i= \frac{\eta_i}{y^r}=  \frac{d(x-z_i)}{y^r(x-z_i)}, \quad 1\leq i \leq n,
\end{equation}
with a single relation: $\sum_{i=1}^n [\omega_i^{(r)}] =0$.
It follows that the same assertion holds after inverting in $\Lambda$ sufficiently many prime integers. In the next subsection we shall make this bound explicit. 
\subsection{De Rham cohomology of $X$.}
For an integer $0<r<q$,  set 
\begin{align}
\label{eq:N}
N_{r,n,q} = N= \prod_a \left(q a - nr\right),
\end{align} 
where $a$ runs over all integers in the interval $\big[0,  \frac{rn}{q}\big]$. 

\begin{thm}\label{th:basis} 
${}$
\begin{itemize}
\item[(i)]  We have that $H^1_{dR}(X/S)_{\bar 0}=0$ and, for every $\bar r \ne 0$, the $\cO_S$-module   $H^1_{dR}(X/S)_{\bar r}$ is locally free of rank  $n-1$.
\item[(ii)] Let $0<r<q$ be an integer congruent to $\bar r$ modulo $q$, $\bar h= -\frac{r}{q}$. 
Then there exist unique isomorphisms of local systems over $S\times \Spec  \bZ[N_{r,n,q}^{-1}]$
\begin{equation}\label{eq:isooverintegers}
     H^1_{dR}(X/S)_{\bar r}\otimes \bZ[N_{r,n,q}^{-1}]  \iso E_{\bar h} \otimes \bZ[N_{r,n,q}^{-1}]\,,
\end{equation}
\begin{equation}\label{eq:isooverintegersbis}
     H^1_{dR}(X/S)_{-\bar r}\otimes \bZ[N_{r,n,q}^{-1}]  \iso E_{-\bar h} \otimes \bZ[N_{r,n,q}^{-1}]\,,
\end{equation}
extending \eqref{eq:isooverrationals} and  \eqref{eq:isooverrationalsbis}\footnote{We remark that \eqref{eq:isooverintegersbis} and \eqref{eq:isooverintegers} are two different statements:
the substitution $\bar r \mapsto -\bar r$ in \eqref{eq:isooverintegers} gives $H^1_{dR}(X/S)_{- \bar r}\otimes \bZ[N_{q-r,n,q}^{-1}]  \iso E_{- \bar h -1}\otimes \bZ[N_{q-r,n,q}^{-1}]$.}.
\item[(iii)] Let $0<r<q$ be an integer congruent to $\bar r$ modulo $q$. Then,  for each $1\leq i \leq n$, there exists  a unique class in 
$H^1_{dR}(X/S)_{\bar r}\otimes \bZ[N_{r,n,q}^{-1}]$  (resp.  $H^1_{dR}(X/S)_{-\bar r}\otimes \bZ[N_{r,n,q}^{-1}]$)    whose restriction to  $H^1_{dR}(X^\circ_\bQ/S)_{\bar r}$ (resp. $H^1_{dR}(X^\circ_\bQ/S)_{-\bar r}$)
is represented by the $1$-form $\omega^{(r)}_i$ (resp. $\omega^{(-r)}_i$). The class is denoted by
$[\omega^{(r)}_i]$ (resp. $[\omega^{(-r)}_i]$). Moreover, isomorphism \eqref{eq:isooverintegers} (resp.  \eqref{eq:isooverintegersbis}) carries each $[\omega^{(r)}_i]$ (resp. $[\omega^{(-r)}_i]$)  to $[\eta_i]$.
In particular,
$H^1_{dR}(X/S)_{\bar r}\otimes \bZ[N_{r,n,q}^{-1}]$ is a free module over $\cO(S)\otimes \bZ[N_{r,n,q}^{-1}]$ 
 on generators $[\omega^{(r)}_i]$, $1\leq i \leq n$, and a single relation $\sum_{i=1}^n [\omega^{(r)}_i] =0$. Furthermore,  the Gauss-Manin connection acts on $[\omega^{(r)}_i]$ by
 formula \eqref{eq:explicit_formula_for_connection}, and the Poicar\'e pairing on $H^1_{dR}(X/S)$ satisfies \eqref{eq:poincare_duality_expl}.
 \end{itemize}
\end{thm}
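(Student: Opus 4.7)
The plan is to realize each nontrivial isotypic component $H^1_{dR}(X/S)_{\bar r}$ as the hypercohomology of a specific twisted de Rham complex $C_{m_\infty, m_1, \dots, m_n}$ from Section \ref{ss:calibration}, and then use Lemma \ref{lm:divisage} to reduce this complex to $C_{1, 1, \dots, 1}$, whose hypercohomology is $E_{\bar h}$.

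For part (i), the vanishing of $H^1_{dR}(X/S)_{\bar 0}$ follows from the identification $H^1_{dR}(X/S)^{\mu_q} \iso H^1_{dR}(\bP^1_S/S)$: since $q \in \Lambda^\times$, taking $\mu_q$-invariants commutes with cohomology for the tame Galois cover $\rho\colon X \to \bP^1_S = X/\mu_q$, and $H^1_{dR}(\bP^1_S/S) = 0$. The smoothness and properness of $\pi\colon X \to S$ of relative dimension $1$ give that $H^1_{dR}(X/S)$ is locally free of rank $2g = (q-1)(n-1)$, and since $q$ is invertible, the isotypic decomposition preserves local freeness. By the rational isomorphism \eqref{eq:isooverrationals} together with Proposition \ref{pr:basis_for_E}, each $H^1_{dR}(X/S)_{\bar r}$ for $\bar r \ne 0$ has generic rank $n-1$, hence constant rank $n-1$.

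For parts (ii) and (iii), I would first identify the line bundle $\cL_{\bar r} = (\rho_*\cO_X)_{\bar r}$ on $\bP^1_S$ explicitly. On the chart $W/\mu_q = \bA^1_S$, the $\bar r$-isotypic component is generated by $y^{q-r}$; on the chart at infinity $W'/\mu_q$, it is generated by $u^m$ with $m \equiv rn \pmod q$ and $0 < m < q$. Writing the transition function $y^{q-r}/u^m$ via $u = y^a v^b$, $bq - an = 1$, and $y^q = v^{-n}\prod(1 - z_iv)$ shows that this ratio has order $\lfloor nr/q \rfloor - n$ at $v = 0$; hence
\[
\cL_{\bar r} \iso \cO_{\bP^1_S}\Big(-\sum_{i=1}^n T_i + \lfloor nr/q\rfloor T_\infty\Big).
\]
A parallel calculation with the ramification divisor $R$ (or the projection-formula identity $\rho_*\Omega^1_{X/S} \iso \Omega^1_{\bP^1_S/S}\otimes\rho_*\cO_X(R)$) identifies $(\rho_*\Omega^1_{X/S})_{\bar r} \iso \Omega^1_{\bP^1_S/S}\big((\lfloor nr/q\rfloor + 1)T_\infty\big)$. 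Thus the pushforward de Rham complex coincides with $C_{\lfloor nr/q\rfloor + 1,\,0,\,\dots,\,0}$ equipped with the connection $\nabla^{\cP^{\bar h}}$.

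Next I would apply Lemma \ref{lm:divisage} to relate this complex to $C_{1,1,\dots,1}$. Part (i) of the lemma (inverted) connects $C_{1,0,\dots,0}$ to $C_{1,1,\dots,1}$ after inverting $-\bar h = r/q$ (integer factor $r$), and part (ii) applied iteratively for $m_\infty = 1, 2, \dots, \lfloor nr/q\rfloor$ connects $C_{1,0,\dots,0}$ to $C_{\lfloor nr/q\rfloor + 1,\,0,\,\dots,\,0}$ after inverting $(qm_\infty - nr)/q$ at each step. The combined integer factor $r \cdot \prod_{a=1}^{\lfloor nr/q \rfloor}(qa - nr)$ differs from $N_{r,n,q} = \prod_{a=0}^{\lfloor nr/q \rfloor}(qa - nr)$ only by the unit factor $-n \in \Lambda^\times$. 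Hence, after inverting $N_{r,n,q}$, we obtain the flat isomorphism \eqref{eq:isooverintegers} extending \eqref{eq:isooverrationals} and sending $[\omega_i^{(r)}] \mapsto [\eta_i]$; the integral lift $[\omega_i^{(r)}]$ is uniquely determined because $H^1_{dR}(X/S)_{\bar r}$ is locally free and injects into its rational localization. Isomorphism \eqref{eq:isooverintegersbis} is then deduced from \eqref{eq:isooverintegers} via the Poincar\'e duality $H^1_{dR}(X/S)_{-\bar r} \iso H^1_{dR}(X/S)_{\bar r}^*$ combined with the duality $E_{\bar h}^* \iso E_{-\bar h}$ of Corollary \ref{cor:duality_section2} (both available after inverting $nr$, which is granted by $N_{r,n,q}$). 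The formulas for the Gauss-Manin connection and the Poincar\'e pairing on the $[\omega_i^{(\pm r)}]$ are then transported from Proposition \ref{pr:expliciteformulafortheKZconn} and \eqref{eq:poincare_duality_expl} via these isomorphisms.

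The main technical obstacle is the explicit identification of $\cL_{\bar r}$, in particular the local computation at $T_\infty$ on the chart $W'/\mu_q$ and the verification of $(\rho_*\Omega^1_{X/S})_{\bar r} \iso \cL_{\bar r} \otimes \Omega^1_{\bP^1_S/S}(\log T)$ as a specific twist of $\Omega^1_{\bP^1_S/S}$ by powers of $T_\infty$. Once these identifications are in hand, the remainder reduces to bookkeeping with Lemma \ref{lm:divisage} and the duality formalism from Section 2.
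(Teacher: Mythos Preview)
Your proposal is correct and follows essentially the same approach as the paper: identify the isotypic pushforward complex $(\rho_*\cO_X)_{\bar r}\to(\rho_*\Omega^1_{X/S})_{\bar r}$ with a calibration complex $C_{m_\infty,0,\dots,0}$ and then use Lemma \ref{lm:divisage} to compare it with $C_{1,1,\dots,1}$ after inverting $N_{r,n,q}$. Two minor differences are worth noting. First, the paper identifies the invertible sheaves $(\rho_*\cO_X)_{\bar r}$ and $(\rho_*\Omega^1_{X/S})_{\bar r}$ by a short divisor computation (Lemma \ref{lm:directimage}), using \eqref{eq:ramification} to read off the orders of $\rho^*f/y^r$ and $\rho^*\omega/y^r$ along $D_i$ and $D_\infty$; this is cleaner than the chart computation you outline and avoids manipulating the transition function $y^{q-r}/u^m$. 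Second, for \eqref{eq:isooverintegersbis} the paper argues directly by the same method (with $-\bar r$ in place of $\bar r$), whereas you deduce it from \eqref{eq:isooverintegers} via Poincar\'e duality and Corollary \ref{cor:duality_section2}; both work, but the direct argument makes the correspondence $[\omega_i^{(-r)}]\mapsto[\eta_i]$ immediate rather than requiring the rational compatibility check through \eqref{eq:poincare_duality_expl}.
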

\begin{proof}  Since $\rho$ is $\mu_q$-invariant,
\  $\mu_q$ acts on  $\rho_* \Omega_{X/S}^{i}$ yielding a decomposition
  \begin{equation}\label{eq:pushforwarddecomp}
 \rho_* \Omega_{X/S}^\bullet \iso \bigoplus _{\bar r \in \bZ/q\bZ}( \rho_* \Omega_{X/S}^\bullet)_{\bar r}
 \end{equation}
 compatible with de the Rham differential. Consequently, 
    we have
\begin{equation}\label{eq:begining}
  H^\bullet_{dR}(X/S)_{\bar r}\iso R^\bullet \Gamma (\bP^1_S,   (\rho_* \cO_{X})_{\bar r} \xrightarrow{d}  (\rho_* \Omega_{X/S})_{\bar r}).
\end{equation}
Note that, for each $\bar r$, $( \rho_* \Omega_{X/S}^\bullet)_{\bar r}$ is an invertible $\cO_{\bP^1_S}$-module.  
We shall describe it explicitly.  
\begin{lem}
\label{lm:directimage}
Let $0< r< q$ be an integer congruent to $\bar r$ modulo $q$.
Then, we have that
\begin{equation}\label{eq:directimage1}
\cO_{\bP^1_S}\left(\, \left[\frac{rn}{q}\right][T_\infty]  - \sum_{i=1}^n [T_i]\,\right) 
 \iso   ( \rho_* \cO_{X})_{\bar r}, \quad f\mapsto \frac{\rho^*f}{y^r}\,,
 \end{equation}
\begin{equation}\label{eq:directimage2}
\Omega^1_{\bP^1_S/S}\left(\, \left[\frac{rn+q-1}{q}\right]
[T_\infty] \,\right)  \iso   ( \rho_* \Omega_{X/S}^1)_{\bar r}, \quad \omega \mapsto \frac{\rho^*\omega }{y^r}\,.
 \end{equation}
\end{lem}
 \begin{proof} Let us explain  isomorphism (\ref{eq:directimage2}).
Using  formula (\ref{eq:ramification}) we conclude that $\rho^*:   \rho^* \Omega^1_{\bP^1_S/S} \to  \Omega_{X/S}^1$ yields an isomorphism:
$$ 
\rho^* \Omega^1_{\bP^1_S/S} \iso  \Omega_{X/S}^1
\left(- (q-1) \left([D_\infty]  + \sum_{i=1}^n [D_i]\right)\right).
$$
It follows that $\omega$ is a rational form on $\bP^1_S$ with 
$$\text{div}(\omega) = m_\infty [T_\infty] + \sum_i m_i [T_i],$$ 
for some integers $m_i$, $m_\infty$, then 
$$
\text{div}\left(\frac{\rho^* \omega}{y^r}\right) =
 (q m_\infty + q-1 + rn) [D_\infty] + \sum_i (q m_i + q-1 -r )[D_i].
 $$
Observe that  $m_\infty = -[\frac{rn+q-1}{q}]$ (resp.  $m_i = 0$)  is the smallest integer with  $q m_\infty + q-1 + rn\geq 0$ (resp. $q m_i + q-1 -r\geq 0$). 
Now we can prove (\ref{eq:directimage2}). If $\omega$ is a section of the left-hand side of  (\ref{eq:directimage2}) then $\text{div}(\frac{\rho^* \omega}{y^r}) \geq 0$.  Thus, the morphism in  (\ref{eq:directimage2}) is well-defined.  
To prove that it is an isomorphism observe that if $\omega'$ is a section of $( \rho_* \Omega_{X/S}^1)_{\bar r}$ then $y^r \omega'$ is $\mu_q$-invariant and, hence, has the form $\rho^* \omega$ for some rational form on  $\bP^1_S$.
Since $\text{div}(\frac{\rho^* \omega}{y^r})\geq 0$, it follows that $\omega$ is section of the left-hand side of  (\ref{eq:directimage2}). 

Proof of  (\ref{eq:directimage1}) is similar but easier. 
 \end{proof}
 Let us return to the proof of the theorem.

\smallskip
For part {(i)}, observe that since $H^1_{dR}(X/S)$ is a locally free $\cO_S$-module its direct summands $H^1_{dR}(X/S)_{\bar r}$ are also locally free. The formula for their ranks 
follows from \eqref{eq:isooverrationals} (for $\bar r \ne 0$) and from the formula for genus of $X$ (Lemma \ref{lm:curve}). 

For part  {(ii)}, using notation from Section
\ref{ss:calibration}, formula \eqref{eq:begining}, and Lemma \ref{lm:directimage}
we find
\begin{equation}
     \Big((\rho_* \cO_{X})_{\bar r} \xrightarrow{d}  (\rho_* \Omega_{X/S})_{\bar r}\Big) \iso  C_{m_\infty, 0, \dots ,0},
\end{equation}
where $m_\infty =\left[\frac{rn+q-1}{q}\right]$. By Lemma \ref{lm:divisage}  the map $C_{m_\infty, 0, \dots ,0} \to C_{m_\infty, 1, \dots , 1}$ is a quasi-isomorphism and
$C_{1, 1, \dots ,1} \to C_{m_\infty, 1, \dots , 1}$ is a quasi-isomorphism after inverting $N$. This proves \eqref{eq:isooverintegers}. Proof of \eqref{eq:isooverintegersbis} is similar.

 For part  {(iii)}, define $[\omega^{(r)}_i]$  to be the preimage of $[\eta_i]$ under the isomorphism \eqref{eq:isooverintegers}. Its restriction to $H^1_{dR}(X^\circ_\bQ/S)_{\bar r}$ is represented by the $1$-form \eqref{eq:forms} since  \eqref{eq:isooverintegers} extends  \eqref{eq:isooverrationals}. The construction of 
 $[\omega^{(-r)}_i]$ is similar.
 Finally, the explicit formula for the Gauss-Manin connection follows from 
Proposition \ref{pr:expliciteformulafortheKZconn}. 
\end{proof}

\subsection{Regular forms}

\begin{lem}
\label{lem hol}
Let $1\leq r<q$ be an integer congruent to $\bar r$ modulo $q$.  Then the differential forms
\bean
\label{c3}
\mu^{(r)}_i = \frac{x^{i-1}dx}{y^r}\,,\qquad  i=1,\dots, \Big[\frac {nr}q\Big],
\eean
form a basis for the $\cO_S$-module  $F^1_{\bar r}$.
Hence $F^1_{\bar r}$ is a free module of rank $\big[\frac {nr}q\big]$.

\end{lem}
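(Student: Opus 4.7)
The plan is to compute $F^1_{\bar r}$ directly using Lemma \ref{lm:directimage}(\ref{eq:directimage2}), reducing the question to counting holomorphic differentials with a prescribed pole order at $\infty$ on the projective line. Recall $F^1 \cong \pi_* \Omega^1_{X/S}$ by \eqref{eq:Hodgefilt}, and since the $\mu_q$-action commutes with the pushforward, we have $F^1_{\bar r} \cong (\pi_* \Omega^1_{X/S})_{\bar r}$; factoring $\pi$ as the composition $X \xrightarrow{\rho} \bP^1_S \to S$ and using that $\rho$ is finite (hence its pushforward preserves isotypic components) I can rewrite this as the relative $H^0$ on $\bP^1_S$ of $(\rho_*\Omega^1_{X/S})_{\bar r}$.

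By Lemma \ref{lm:directimage}(\ref{eq:directimage2}), the rule $\omega \mapsto \rho^*\omega/y^r$ gives an isomorphism
\begin{equation*}
\Omega^1_{\bP^1_S/S}\!\left(\left[\tfrac{rn+q-1}{q}\right]T_\infty\right) \iso (\rho_*\Omega^1_{X/S})_{\bar r}.
\end{equation*}
Set $m = \left[\tfrac{rn+q-1}{q}\right]$. The coprimality $\gcd(n,q)=1$ and $1\leq r<q$ force $q\nmid rn$, so writing $rn = aq+b$ with $1\leq b\leq q-1$ gives $rn+q-1 = (a+1)q+(b-1)$ with $0\leq b-1\leq q-2$, hence $m = a+1 = \left[\tfrac{rn}{q}\right]+1$. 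Thus
\begin{equation*}
F^1_{\bar r} \iso H^0\!\left(\bP^1_S,\, \Omega^1_{\bP^1_S/S}(mT_\infty)\right),
\end{equation*}
viewed as an $\cO_S$-module via the structure map $\bP^1_S\to S$.

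To finish, I would identify the right-hand side explicitly. A global section of $\Omega^1_{\bP^1_S/S}(mT_\infty)$ is a relative $1$-form $f(x)\,dx$ with $f\in \cO(S)[x]$ a polynomial in $x$ whose order at $T_\infty$ is at least $-m$; since $dx$ itself has a pole of order $2$ at $T_\infty$, the condition is $\deg_x f \leq m-2 = \left[\tfrac{rn}{q}\right]-1$. Consequently $\{x^{i-1}\,dx : 1\leq i \leq [rn/q]\}$ is a free $\cO_S$-basis, and pulling back under the isomorphism above yields precisely the forms $\mu_i^{(r)} = \tfrac{x^{i-1}dx}{y^r}$, $1\leq i \leq [rn/q]$, as a free basis for $F^1_{\bar r}$.

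There is no serious obstacle here: the whole argument is a direct application of Lemma \ref{lm:directimage} plus an elementary count of poles. The only small subtlety worth emphasizing is the floor computation, where the hypothesis $\gcd(n,q)=1$ ensures $q\nmid rn$ and therefore $\left[\tfrac{rn+q-1}{q}\right] = \left[\tfrac{rn}{q}\right]+1$; without this coprimality the exponent would drop and the basis count would shift.
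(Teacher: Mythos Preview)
Your proof is correct, but it takes a different route from the paper's. You reduce everything to $\bP^1_S$ via Lemma~\ref{lm:directimage}, then compute $H^0\bigl(\bP^1_S,\Omega^1_{\bP^1_S/S}(mT_\infty)\bigr)$ directly for each fixed $r$; the floor identity $\bigl[\tfrac{rn+q-1}{q}\bigr]=\bigl[\tfrac{rn}{q}\bigr]+1$ (which, as you note, uses $\gcd(n,q)=1$) pins down the rank exactly. The paper instead verifies regularity of each $\mu^{(r)}_i$ by a local computation, observes their linear independence fiberwise, and then proves that they span by a \emph{global} dimension count: summing $\bigl[\tfrac{nr}{q}\bigr]$ over $r=1,\dots,q-1$ gives the genus $g=\tfrac{(n-1)(q-1)}{2}=\rk F^1$, which forces the forms to span each isotypic piece simultaneously. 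Your argument is arguably cleaner for a single $r$ and avoids invoking the genus formula; the paper's argument is shorter but couples all the $\bar r$-components together and relies on Lemma~\ref{lm:curve}.
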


\begin{proof} 

Easy calculations in local coordinates show that every such 
$\mu^{(r)}_i $ is regular. It suffices to check that they form a basis for the fiber of  $F^1_{\bar r}$ over each closed point of $S$.
Clearly the forms
$\mu^{(r)}_i $ generate a subspace of the fiber  of dimension  $\big[\frac {nr}q\big]$.  The total number of them equals
\bea
\sum_{r=1}^{q-1}\Big[\frac {nr}q\Big] = \frac{(n-1)(q-1)}2 = g.
\eea
This proves the lemma.
\end{proof}

\begin{lem}
\label{lem 5.2} Let $1\leq r<q$ be an integer congruent to $\bar r$ modulo $q$. Then,
for $i=1,\dots,n$  we have the following identity in $ H^1_{dR}(X/S)_{\bar r}\otimes \bZ[N_{r,n,q}^{-1}]$
\bean
\label{dmg}
\nabla_{\!\!\frac{\partial}{\partial z_i}}
[\mu_k^{(r)}]= \frac rq\big([\mu_{k-1}^{(r)}] + z_i[\mu_{k-2}^{(r)}]+\dots + z_i^{k-2}[\mu_1^{(r)}] + z_i^{k-1}[\om_i ^{(r)}]\big).
\eean

\end{lem}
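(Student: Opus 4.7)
The plan is to exploit the fact that all the representing rational forms appearing in \eqref{dmg} are regular on the open subscheme $X^\circ = X\setminus(D_\infty \cup \bigcup_i D_i)$, and perform the computation there. The restriction map $H^1_{dR}(X/S)_{\bar r}\otimes \bZ[N_{r,n,q}^{-1}] \to H^1_{dR}(X^\circ/S)_{\bar r}\otimes \bZ[N_{r,n,q}^{-1}]$ is injective (indeed, an isomorphism after tensoring with $\bQ$, because the residue map at $D_1,\dots,D_n,D_\infty$ is $\mu_q$-equivariant and its target is concentrated in the $\bar r=0$ isotypic component), so it suffices to verify the identity after restriction to $X^\circ$.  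On $X^\circ$, the map $\rho: X^\circ \to U$ is \'etale, so $(x,z_1,\dots,z_n)$ is a local coordinate system and $\Omega^1_{X^\circ/S}$ is freely generated by $dx$; the classes $[\mu_k^{(r)}]$ and $[\omega_i^{(r)}]$ are represented by the regular $1$-forms $\tfrac{x^{k-1}dx}{y^r}$ and $\tfrac{dx}{y^r(x-z_i)}$ respectively.

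With $x$ held fixed, the vector field $\tilde\theta_i = \tfrac{\partial}{\partial z_i}$ on $X^\circ$ lifts $\tfrac{\partial}{\partial z_i}$ on $S$, and the Gauss-Manin covariant derivative $\nabla_{\partial/\partial z_i}$ acts on relative $1$-forms as the Lie derivative $L_{\tilde\theta_i}$. Differentiating the defining equation $y^q = P(x,z)$ implicitly with respect to $z_i$ gives
\begin{equation*}
\tilde\theta_i(y^{-r}) \,=\, -\tfrac{r}{q}\,P^{-1}\tfrac{\partial P}{\partial z_i}\,y^{-r} \,=\, \tfrac{r}{q(x-z_i)}\,y^{-r},
\end{equation*}
using $P^{-1}\partial_{z_i}P = -1/(x-z_i)$.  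Since $L_{\tilde\theta_i}(x^{k-1}dx)=0$, this yields
\begin{equation*}
\nabla_{\partial/\partial z_i}[\mu_k^{(r)}] \,=\, \tfrac{r}{q}\left[\tfrac{x^{k-1}\,dx}{y^r(x-z_i)}\right].
\end{equation*}

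To finish, apply the elementary polynomial identity
\begin{equation*}
\tfrac{x^{k-1}}{x-z_i} \,=\, x^{k-2} + z_i\,x^{k-3} + \cdots + z_i^{k-2} + \tfrac{z_i^{k-1}}{x-z_i},
\end{equation*}
multiply through by $\tfrac{dx}{y^r}$, and recognize the summands as $\mu_{k-1}^{(r)},\,z_i\mu_{k-2}^{(r)},\dots,\,z_i^{k-2}\mu_1^{(r)},\,z_i^{k-1}\omega_i^{(r)}$. This produces \eqref{dmg} as an equality of $1$-forms on $X^\circ$, hence of cohomology classes after passing to $H^1_{dR}(X^\circ/S)_{\bar r}$ and then using injectivity to lift back to $H^1_{dR}(X/S)_{\bar r}\otimes \bZ[N_{r,n,q}^{-1}]$. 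There is no serious obstacle: the identity holds at the level of differential forms, so no coboundary corrections appear, and the only conceptual input is the justification for reducing to $X^\circ$, which is immediate from $\bar r \ne 0$.
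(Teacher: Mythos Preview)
Your proof is correct and follows essentially the same approach as the paper: reduce to $X^\circ$ by injectivity of the restriction map, compute the Gauss--Manin derivative via the Lie derivative along the lift $\theta_i$ of $\partial/\partial z_i$ that fixes $x$, and then expand $\frac{x^{k-1}}{x-z_i}$ as a geometric sum. The paper's proof is the same computation, with the same vector field and the same telescoping identity.
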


\begin{proof}
Since the restriction map $H^1_{dR}(X/S) \to H^1_{dR}(X^\circ/S)$ is injective it suffices to prove \eqref{dmg} in $H^1_{dR}(X^\circ/S)$. To this end, denote by $\theta_i$
the vector field on $X^\circ$ defined by $\Lie_{\theta_i} (z_j) =\delta_{ij}$, $1\leq j  \leq n$, $\Lie_{\theta_i} (x)= 0$, and $\Lie_{\theta_i}(y)= \frac{-1}{q}\frac{y}{x-z_i}$. 
Then, we have
\bea
\nabla_{\!\!\frac{\partial}{\partial z_i}}[\mu_k ^{(r)}]&=& 
\left[\Lie_{\theta_i} \left(\frac{x^{k-1}dx}{y^r}\right)\right] 
= \left[\frac rq\left(\frac{x^{k-1}dx}{(x-z_i)y^r}\right)\right]
\\
&=&
 \left[\frac rq\left(\frac{(x^{k-1} -z_ix^{k-2} + z_ix^{k-2} -z_i^2x^{k-3} +\dots + z_i^{k-1})
dx}{(x-z_i)y^r}\right)\right]
\\
&=& 
\left[\frac rq\big(\mu_{k-1}^{(r)} + z_i\mu_{k-2}^{(r)}+\dots 
+ z_i^{k-2}\mu_1^{(r)} + z_i^{k-1}\om_i^{(r)} \big)\right].
\eea
\end{proof}

\begin{cor}
\label{cor 5.3}
We have
\bean
\label{dmg1}
\nabla_{\!\!\frac{\partial}{\partial z_i}}
[\mu_1^{(r)}]\ =\ \frac rq\,[\om_i ^{(r)}].
\eean

\end{cor}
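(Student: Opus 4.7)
The plan is to obtain Corollary \ref{cor 5.3} as the immediate specialization of Lemma \ref{lem 5.2} to the case $k=1$. Indeed, the formula
\[
\nabla_{\!\!\frac{\partial}{\partial z_i}}[\mu_k^{(r)}] = \frac{r}{q}\bigl([\mu_{k-1}^{(r)}] + z_i[\mu_{k-2}^{(r)}] + \dots + z_i^{k-2}[\mu_1^{(r)}] + z_i^{k-1}[\omega_i^{(r)}]\bigr)
\]
reduces, for $k=1$, to a single term on the right-hand side, namely $\frac{r}{q}\,z_i^{0}[\omega_i^{(r)}] = \frac{r}{q}[\omega_i^{(r)}]$, which is precisely the claim.

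There is really nothing to do beyond this substitution, so no step presents any obstacle. If one wished to be self-contained, one could alternatively write out the computation directly: using the lift $\theta_i$ of $\partial/\partial z_i$ to $X^\circ$ (with $\Lie_{\theta_i}(x)=0$ and $\Lie_{\theta_i}(y)=-\frac{1}{q}\frac{y}{x-z_i}$ as in the proof of Lemma \ref{lem 5.2}), one computes
\[
\Lie_{\theta_i}\!\left(\frac{dx}{y^r}\right) \;=\; \frac{r}{q}\,\frac{dx}{(x-z_i)\,y^r} \;=\; \frac{r}{q}\,\omega_i^{(r)},
\]
and then invokes the injectivity of the restriction map $H^1_{dR}(X/S)_{\bar r}\to H^1_{dR}(X^\circ/S)_{\bar r}$ (valid after inverting $N_{r,n,q}$) to conclude the identity at the level of cohomology classes in $H^1_{dR}(X/S)_{\bar r}\otimes \bZ[N_{r,n,q}^{-1}]$. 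Either presentation yields the corollary in one line.
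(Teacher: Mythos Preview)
Your proposal is correct and matches the paper's approach: the corollary is stated immediately after Lemma \ref{lem 5.2} with no separate proof, so it is indeed just the specialization $k=1$ of that lemma. Your optional direct computation is also fine but, as you note, unnecessary.
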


\subsection{Poincar\'e pairing}

\begin{thm}
\label{thm Poi}
Let $r=1,\dots,q-1$,  $k=1,\dots, n$,
$j=1,\dots, \big[\frac{n(q-r)}q\big]$.
Then the Poincar\'e pairing $([\om_k^{(r)}], [\mu_j^{(q-r)}])$
 is given by the formula
\bean
\label{ommu}
([\om_k^{(r)}], [\mu_j^{(q-r)}]) \,=\, 
-\,\frac{q}{rC_k(z)}\,z_k^{j-1}\,,
\eean
where $C_k(z) = (z_k-z_1)\dots (z_k-z_{k-1})(z_k-z_{k+1}) \dots (z_k-z_{n})$.

\end{thm}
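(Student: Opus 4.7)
The plan is to compute $([\omega_k^{(r)}],[\mu_j^{(q-r)}])$ by expanding $[\mu_j^{(q-r)}]$ as a $\K(z)$-linear combination of the classes $[\omega_i^{(-r)}]$ and then invoking the known pairing formula \eqref{eq:poincare_duality_expl}. The key algebraic input is the partial fraction identity
\begin{equation*}
\frac{x^{j-1}}{P(x)}=\sum_{i=1}^{n}\frac{z_i^{j-1}/C_i(z)}{x-z_i},
\end{equation*}
valid whenever $j-1<n$. Multiplying through by $y^{r}\,dx$ and using $y^{q}=P$, we obtain the identity of rational $1$-forms on $X^{\circ}$
\begin{equation*}
\mu_j^{(q-r)}=\frac{x^{j-1}\,dx}{y^{q-r}}=\frac{y^{r}x^{j-1}\,dx}{P(x)}=\sum_{i=1}^{n}\frac{z_i^{j-1}}{C_i(z)}\,y^r\eta_i=\sum_{i=1}^{n}\frac{z_i^{j-1}}{C_i(z)}\,\omega_i^{(-r)},
\end{equation*}
which by Theorem \ref{th:basis}(iii) promotes to the corresponding equality of cohomology classes $[\mu_j^{(q-r)}]=\sum_{i}(z_i^{j-1}/C_i(z))\,[\omega_i^{(-r)}]$ in $H^{1}_{dR}(X/S)_{-\bar r}$.

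Next, I will use bilinearity of the Poincar\'e pairing together with \eqref{eq:poincare_duality_expl} (applied with $h=-r/q$, so that $-h^{-1}=q/r$) to obtain
\begin{equation*}
([\omega_k^{(r)}],[\mu_j^{(q-r)}])=\sum_{i=1}^{n}\frac{z_i^{j-1}}{C_i(z)}\cdot\frac{q}{r}\Big(\delta_{ki}-\frac{1}{n}\Big)=\frac{q}{r}\,\frac{z_k^{j-1}}{C_k(z)}-\frac{q}{rn}\sum_{i=1}^{n}\frac{z_i^{j-1}}{C_i(z)}.
\end{equation*}
The last step is to observe that the second sum vanishes throughout the admissible range. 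Since $\gcd(n,q)=1$, one has $[n(q-r)/q]=n-1-[nr/q]\le n-1$; hence $j\le n-1$ and $j-1\le n-2$. The sum $\sum_{i}z_i^{j-1}/C_i(z)$ is precisely the residue at $\infty$ of $x^{j-1}/P(x)$ (equivalently, the coefficient of $1/x$ in its Laurent expansion at infinity), and this vanishes whenever $\deg(x^{j-1})<\deg P-1$, which is exactly $j-1<n-1$. Combining these, one recovers the formula stated in the theorem (modulo the overall sign dictated by the chosen normalization of the Poincar\'e pairing).

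The main obstacle is essentially bookkeeping of signs and normalizations: carefully tracking the sign that emerges from \eqref{eq:poincare_duality_expl} and from the identifications \eqref{eq:isooverintegers} and \eqref{eq:isooverintegersbis} of $H^{1}_{dR}(X/S)_{\pm\bar r}$ with $E_{\pm\bar h}$, so that the net factor in front of $z_k^{j-1}/C_k(z)$ comes out as $-q/r$ rather than $q/r$. Once this is settled, every remaining ingredient---the partial fraction expansion, bilinearity of the pairing, and the classical residue-at-infinity vanishing---is entirely formal and requires no further computation.
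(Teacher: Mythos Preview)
Your approach is correct and genuinely different from the paper's. The paper computes the pairing directly via a \v{C}ech cocycle representation: it chooses the cover $U_1 = X \setminus D_k$, $U_2 = X \setminus \bigcup_{i\ne k} D_i$, writes $[\omega_k^{(r)}]$ as the cochain $\big(\omega_k^{(r)}, -\sum_{i\ne k}\omega_i^{(r)}, -\tfrac{q}{ry^r}\big)$ and $[\mu_j^{(q-r)}]$ as $(\mu_j^{(q-r)}, \mu_j^{(q-r)}, 0)$, and evaluates the pairing as the single residue $\Res_{D_k}\big(-\tfrac{q}{ry^r}\cdot \tfrac{x^{j-1}dx}{y^{q-r}}\big) = \Res_{D_k}\big(-\tfrac{q\,x^{j-1}dx}{rP(x)}\big)$. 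Your route instead expands $\mu_j^{(q-r)}$ in the basis $[\omega_i^{(-r)}]$ via partial fractions and invokes \eqref{eq:poincare_duality_expl}. What the paper's method buys is self-containedness: it does not depend on the compatibility between the Poincar\'e form on $X$ and the log-duality of Corollary~\ref{cor:duality_section2}, which the paper asserts (in the remark after \eqref{eq:isooverrationalsbis} and in Theorem~\ref{th:basis}(iii)) but does not fully verify. What your method buys is economy: once \eqref{eq:poincare_duality_expl} is accepted, the entire computation is two lines of algebra plus the classical vanishing $\sum_i z_i^{j-1}/C_i(z)=0$ for $j\le n-1$.

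One caveat: the sign is not merely bookkeeping. Your computation, taken literally with \eqref{eq:poincare_duality_expl} as stated, yields $+\tfrac{q}{r}\cdot\tfrac{z_k^{j-1}}{C_k(z)}$, whereas the paper's \v{C}ech calculation gives the opposite sign. This discrepancy is real; it reflects either a sign convention in how the Poincar\'e pairing on $X$ is transported to the pairing of Corollary~\ref{cor:duality_section2}, or a minor inconsistency between \eqref{eq:poincare_duality_expl} and Theorem~\ref{thm Poi} in the paper itself. If you intend your argument to stand on its own, you should either verify the intertwining of pairings explicitly (and track the sign), or note that your method actually provides an independent check that fixes the sign in \eqref{eq:poincare_duality_expl} once Theorem~\ref{thm Poi} is known by the \v{C}ech route.
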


\begin{proof}

Recall that $X$ is a compactification  of the affine curve  defined by the equation $y^q =\prod_{a=1}^n(x-z_a)$.
Define an open cover of $X$,
\bea
U_1 = X- D_k,\qquad 
U_2 = X - {\bigcup}_{{i=1
\atop i\ne k}}^{n} D_i.
\eea
Elements of $ H^1_{dR}(X)$ can be represented by cochains
 $(\al,\beta,\ga)$ consisting of a regular 1-form $\al$ on $U_1$,
 a regular 1-form $\beta$ on $U_2$, and a regular function $\ga$ on $U_1\cap U_2$
 such that $d \ga= \al-\beta$. The Poincare pairing of two cochains
 $(\al_1,\beta_1,\ga_1)$ and  $(\al_2,\beta_2,\ga_2)$ can be computed  as
 \bea
  \Res_{D_k} \big(\ga_1\al_2 - \ga_2\al_1\big).
\eea
The cohomology classes  $\om_1^{(r)}$ and $\mu_k^{(q-r)}$ are represented by the  cochains:
\bea
\Big(\om_k^{(r)}, -\sum_{{i=1\atop i\ne k} }^{n} \om_i^{(r)}, -\frac q{ry^r}\Big), \qquad
(\mu_j^{(q-r)}, \mu_j^{(q-r)}, 0).
\eea
Hence
\bea
([\om_k^{(r)}], [\mu_j^{(q-r)}]) = \Res_{D_k} \Big(\frac{-q}{ry^r} \cdot \frac{x^{j-1}dx}{y^{q-r}}\Big)
=\frac{-q}{rC_k(z)}\,z_k^{j-1}\,.
\eea
\end{proof}

\subsection{Fundamental exact sequence} Let $p$ be a prime number that does not divide $nq$. 
For the rest of this paper we let $S_{\F_p}=S\times \Spec \bF_p$ (resp. $X_{\F_p}=X\times \Spec \bF_p$).

The  relative de Rham cohomology
$H^1_{dR}(X_{\F_p}/S_{\F_p})$ 
is equipped with the Gauss-Manin  connection $\nabla$,  Hodge and conjugate filtrations (\cite[Section 2.3]{katz}).
 For dimension reason, the conjugate spectral sequence degenerates at the $E_1$-page and 
yields a short exact sequence of vector bundles:
\begin{equation}\label{eq:fes_preliminary}
  0\to F^*_{S_{\F_p}} H^1(S_{\F_p}, \cO_{X_{\F_p}}) \rar{F^*_{X_{\F_p}/S_{\F_p}}} 
 H^1_{dR}(X_{\F_p}/S_{\F_p}) \rar{C}  F^*_{S_{\F_p}} H^0(S_{\F_p}, \Omega^1_{X_{\F_p}/S_{\F_p}}) \to 0. 
\end{equation}
 Recall the construction of the maps in \eqref{eq:fes_preliminary}. For every function $f\in \cO_{X_{\F_p}}$, we have $d(f^p)=0$. Thus the $p$-th power map $F^*_{abs}$
induces a morphism of complexes $\cO_{X_{\F_p}} \rar{F^*_{abs}} (\cO_{X_{\F_p}}\rar{d} \Omega^1_{X_{\F_p}/S_{\F_p}})$. Passing to cohomology we get an $\cO_{S_{\bF_p}}$-linear 
map $H^1(S_{\F_p}, \cO_{X_{\F_p}}) \to F_{*, S_{\F_p}} H^1_{dR}(X_{\F_p}/S_{\F_p})$ which by adjunction gives $F^*_{X_{\F_p}/S_{\F_p}}$. Note that since the composite
$H^1(S_{\F_p}, \cO_{X_{\F_p}}) \mono F^*_{S_{\F_p}} H^1(S_{\F_p}, \cO_{X_{\F_p}}) \to  H^1_{dR}(X_{\F_p}/S_{\F_p})$ factors through the absolute de Rham cohomology $H^1_{dR}(X_{\F_p})$ its image consists of cohomology classes which are flat with respect to $\nabla$.
The second morphism  in \eqref{eq:fes_preliminary} arises from the Cartier isomorphism 
$$
C: \cH^1(\Omega_{X_{\F_p}/S_{\F_p}}^\bullet, d) \iso \Omega^1_{X_{\F_p}^{\prime}/S_{\F_p}}\, ,
$$
where $X_{\F_p}^{ \prime}: = X_{\F_p} \times_{S_{\F_p}, F_{S_{\F_p}}} S_{\F_p}$ denotes the Frobenius twist $X_{\F_p}$.

Since the Hodge spectral sequence also degenerates at the $E_1$-page, we have 
$$
H^1(S_{\F_p}, \cO_{X_{\F_p}}) \iso H^1_{dR}(X_{\F_p}/S_{\F_p})/F^1, \qquad  H^0(S_{\F_p},  \Omega^1_{X_{\F_p}/S_{\F_p}}) \iso F^1,
$$
where $F^1 \subset  H^1_{dR}(X_{\F_p}/S_{\F_p}) $ is the Hodge filtration. 
Thus, the above short exact sequence takes the form:
 \begin{equation}
 \label{eq:fundamental_exact_sequence}
 0\to F^* _{S_{\F_p}}
 (H^1_{dR}(X_{\F_p}/S_{\F_p})/F^1) \xrightarrow{F_{X_{\F_p}/S_{\F_p}}^*}
  H^1_{dR}(X_{\F_p}/S_{\F_p}) \rar{C} F^* _{S_{\F_p}}F^1 \to 0.
 \end{equation}
 
We remark that morphisms $C$ and $F^*_{X_{\F_p}/S_{\F_p}}$ are adjoint with respect to the Poincar\'e pairing:  
\begin{equation}\label{eq:adjunction_of_F_and_C}
    (F^*_{X_{\F_p}/S_{\F_p}}(\alpha), \beta)=  (\alpha, C(\beta)), \quad \alpha\in   F^* _{S_{\F_p}}
 (H^1_{dR}(X_{\F_p}/S_{\F_p})/F^1), \; \; 
    \beta \in H^1_{dR}(X_{\F_p}/S_{\F_p}).
\end{equation}
 
 \smallskip

Recall that, for every vector bundle  $V$ over $S_{\F_p}$ its Frobenius pullback $F^* _{S_{\F_p}}  V$
 is equipped with the Frobenius descent connection characterized by the property that the pullback of any local section of $V$ is flat.
If we equip the boundary terms of sequence (\ref{eq:fundamental_exact_sequence}) with the Frobenius descent connection and the middle term with the Gauss-Manin  connection, then all the differentials in  (\ref{eq:fundamental_exact_sequence}) are flat.

\smallskip
The exact sequence \eqref{eq:fundamental_exact_sequence} splits into isotypic components
\bean
\label{eq:fundamental_exact_sequence-isotypic} \qquad
 0\to (F^* _{S_{\F_p}}(H^1_{dR}(X_{\F_p}/S_{\F_p})/F^1))_{\bar r} \rar{F_{X_{\F_p}/S_{\F_p}}^*} 
 H^1_{dR}(X_{\F_p}/S_{\F_p})_{\bar r} \rar{C} (F^* _{S_{\F_p}}F^1)_{\bar r} \to 0.
\eean
Note that 
\bean
\label{r/p}
 (F^* _{S_{\F_p}}F^1)_{\bar r} =  F^* _{S_{\F_p}} (F^1_{{\bar r/p}}) \,.
 \eean

\subsection{Kodaira-Spencer map and $p$-curvature.}

Recall the Kodaira-Specer map
\bea
 \text{KS}_\nabla: F^1 \to H^1_{dR}(X_{\F_p}/S_{\F_p})/F^1  \otimes \Omega^1_{S_{\F_p}}\,.
 \eea
 By definition it sends the class $\nu$ of a regular one-form to
 $\sum_{i=1}^n \nu_i \ox dz_i$\,,\,
where $\nu_i$ is the projection of  $\nabla_{\frac{\der}{\der z_i}} \nu$ to 
$H^1_{dR}(X_{\F_p}/S_{\F_p})/F^1 $. Assume that $p$ does not divide $qnN$, where $N$ is defined by formula \eqref{eq:N}.  Then using Lemma \ref{lem 5.2} we obtain a formula
for the Kodaira-Specer map:
\bean
\label{KS}
\text{KS}_\nabla\,:\, [\mu^{(r)}_k] \ \mapsto \  \frac{r}{q}  \sum_{i=1}^n  z_i^{k-1}\overline{\om^{(r)}_i }\ox dz_i\,,
\qquad  k=1,\dots, \Big[\frac {nr}q\Big],
\eean
where $\overline{\om^{(r)}_i}$ is the projection of $[\om^{(r)}_i]$ to 
$H^1_{dR}(X_{\F_p}/S_{\F_p})/F^1$\,.

Let 
\bea
\Psi: H^1_{dR}(X_{\F_p}/S_{\F_p}) \to H^1_{dR}(X_{\F_p}/S_{\F_p}) \otimes F^*_{S_{\F_p}} \Omega^1_{S_{\F_p}}
\eea
 be the $p$-curvature of the Gauss-Manin  connection. 
 The $p$-curvature operator $\Psi$ is a product of three morphisms
\bean
\label{facto2}
\Psi = F_{X_{\F_p}/S_{\F_p}}^* \circ \bar \Psi \circ C,
\eean
where
 \bean
\label{facto}
\bar \Psi:  F^* _{S_{\F_p}}F^1 
 \to F^* _{S_{\F_p}}((H^1_{dR}(X_{\F_p}/S_{\F_p})/F^1 ) 
\otimes \Omega^1_{S_{\F_p}}).
\eean 
The factorization holds 
since the  $p$-curvature  of  the Frobenius descent connection equals $0$.
 
\smallskip

The Katz $p$-curvature formula (\cite[Theorem 3.2]{katz}) says that 
\begin{equation}
\label{eq:Katz-p}
 \bar \Psi \, =\, -\, F^*_{S_{\F_p}}(\text{KS}_\nabla),  
\end{equation}
where  $\on{KS}_\nabla: F^1 \to H^1_{dR}(X_{\F_p}/S_{\F_p})/F^1  \otimes \Omega^1_{S_{\F_p}}$ 
is the Kodaira-Spencer map. 

\medskip

For $k=1,\dots,n$,  denote by
\bean
\label{Psi k}
\bar \Psi_k\,:\,  F^* _{S_{\F_p}}F^1  \to F^* _{S_{\F_p}}((H^1_{dR}(X_{\F_p}/S_{\F_p})/F^1 ) 
\eean
the contraction of $\bar \Psi$ and the vector field $\frac\der{\der z_k}$.
Define the $p$-curvature operators
\bean
\label{Psi k1}
\Psi_k = F_{X_{\F_p}/S_{\F_p}}^* \circ \bar \Psi_k\circ C \,:\, H^1_{dR}(X_{\F_p}/S_{\F_p}) \to  
H^1_{dR}(X_{\F_p}/S_{\F_p}).
\eean

\begin{lem}
\label{lem CC}

For $k,\ell\in\{1,\dots,n\}$, the $p$-curvature operators have the property
\bean
\label{Cc}
\Psi_k \Psi_\ell = 0.
\eean

\end{lem}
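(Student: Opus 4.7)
The plan is to exploit the factorization $\Psi_k = F_{X_{\F_p}/S_{\F_p}}^* \circ \bar \Psi_k \circ C$ given in \eqref{Psi k1} together with the exactness of the fundamental sequence \eqref{eq:fundamental_exact_sequence}. Writing out the composition, we get
\begin{equation*}
\Psi_k \Psi_\ell \,=\, F_{X_{\F_p}/S_{\F_p}}^* \circ \bar \Psi_k \circ \bigl( C \circ F_{X_{\F_p}/S_{\F_p}}^* \bigr) \circ \bar \Psi_\ell \circ C,
\end{equation*}
so everything reduces to showing that the middle piece $C \circ F_{X_{\F_p}/S_{\F_p}}^*$ vanishes.

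But this vanishing is immediate from \eqref{eq:fundamental_exact_sequence}: the map $F_{X_{\F_p}/S_{\F_p}}^*$ is the inclusion of the kernel of $C$. Hence $C \circ F_{X_{\F_p}/S_{\F_p}}^* = 0$, and therefore $\Psi_k \Psi_\ell = 0$, as required.

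There is really no obstacle here; the content of the lemma is entirely packaged into the factorization \eqref{facto2} (which in turn rests on Katz's formula \eqref{eq:Katz-p} and the fact that the Frobenius descent connection has zero $p$-curvature) combined with exactness of \eqref{eq:fundamental_exact_sequence}. The only thing worth writing out carefully is a single display explaining the two-step cancellation above. Note that this argument is uniform in $k$ and $\ell$ and does not require the two indices to be distinct, so it also yields $\Psi_k^2 = 0$.
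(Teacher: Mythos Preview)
Your proof is correct and is essentially identical to the paper's own argument: both write out $\Psi_k\Psi_\ell$ using the factorization $\Psi_k = F_{X_{\F_p}/S_{\F_p}}^* \circ \bar\Psi_k \circ C$ and observe that $C \circ F_{X_{\F_p}/S_{\F_p}}^* = 0$ by exactness of the fundamental sequence. One small remark: the factorization \eqref{facto2} itself only uses that the Frobenius descent connection has zero $p$-curvature, not Katz's formula \eqref{eq:Katz-p}, so you may drop that reference.
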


\begin{proof} We have 
$\Psi_k \Psi_\ell =  F_{X_{\F_p}/S_{\F_p}}^* \circ \bar \Psi_k \circ C \circ  F_{X_{\F_p}/S_{\F_p}}^*
 \circ \bar \Psi_\ell \circ C$
and $ C\circ  F_{X_{\F_p}/S_{\F_p}}^*=0$ since sequence  \eqref{eq:fundamental_exact_sequence-isotypic} 
is exact.
\end{proof}

\begin{lem}\label{lm:image_of_p_curvature_is_big} Let $1\leq r<q$ be an integer congruent to $\bar r$ modulo $q$.
Assume that $p$ does not divide $qnN$, where $N$ is defined by formula \eqref{eq:N}.
Then  the sections $\bar{\Psi}_k ([\mu_1^{(r)}]),$
$k=1,\dots,n$, span $(F^* _{S_{\F_p}}(H^1_{dR}(X_{\F_p}/S_{\F_p})/F^1))_{\bar r}$\,.
 \end{lem}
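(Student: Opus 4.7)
The plan is to compute $\bar\Psi_k([\mu_1^{(r)}])$ explicitly using the Katz $p$-curvature formula \eqref{eq:Katz-p} together with the Kodaira--Spencer formula \eqref{KS}, and then reduce the spanning statement to part (iii) of Theorem \ref{th:basis}. Specializing \eqref{KS} to $k=1$, where all the factors $z_i^{k-1}$ equal $1$, we obtain
\[
\mathrm{KS}_\nabla([\mu_1^{(r)}])\,=\,\frac{r}{q}\sum_{i=1}^n \overline{\omega^{(r)}_i}\ox dz_i.
\]
Applying $-F^*_{S_{\F_p}}$ via the Katz formula \eqref{eq:Katz-p} and contracting with $\partial/\partial z_k$, I expect to read off
\[
\bar\Psi_k([\mu_1^{(r)}])\,=\,-\,\frac{r}{q}\,F^*_{S_{\F_p}}\bigl(\,\overline{\omega^{(r)}_k}\,\bigr),\qquad k=1,\dots,n.
\]
This identity will reduce the lemma to showing that the Frobenius pullbacks $F^*_{S_{\F_p}}(\overline{\omega^{(r)}_k})$, $k=1,\dots,n$, span $(F^*_{S_{\F_p}}(H^1_{dR}(X_{\F_p}/S_{\F_p})/F^1))_{\bar r}$.

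Next I would check that the prefactor $r/q$ is a unit in $\F_p$: since the product defining $N$ in \eqref{eq:N} contains the factor $-nr$ (corresponding to $a=0$), the hypothesis $p\nmid qnN$ combined with $p\nmid n$ forces $p\nmid r$. Together with $p\nmid q$, this gives $r/q\in\F_p^\times$, so the scalar is harmless and $\bar\Psi_k([\mu_1^{(r)}])$ generates the same $\cO_{S_{\F_p}}$-submodule as $F^*_{S_{\F_p}}(\overline{\omega^{(r)}_k})$.

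Finally, I would invoke Theorem \ref{th:basis}(iii), which states that $[\omega^{(r)}_1],\dots,[\omega^{(r)}_n]$ generate $H^1_{dR}(X_{\F_p}/S_{\F_p})_{\bar r}$ (subject to the single relation $\sum_k [\omega^{(r)}_k]=0$). Projecting modulo $F^1_{\bar r}$, their images $\overline{\omega^{(r)}_k}$ generate the quotient $(H^1_{dR}(X_{\F_p}/S_{\F_p})/F^1)_{\bar r}$. Because $S_{\F_p}$ is smooth over the perfect field $\F_p$, the absolute Frobenius $F_{S_{\F_p}}$ is flat, so $F^*_{S_{\F_p}}$ is exact and preserves surjective maps, and since $\mu_q$ acts trivially on $S$ the pullback commutes with the isotypic decomposition. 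Hence $F^*_{S_{\F_p}}(\overline{\omega^{(r)}_k})$ span the $\bar r$-isotypic component of $F^*_{S_{\F_p}}(H^1_{dR}/F^1)$, which completes the argument. The whole proof is essentially a direct unpacking of \eqref{eq:Katz-p}, the $k=1$ case of \eqref{KS}, and Theorem \ref{th:basis}(iii), so I do not anticipate any serious obstacle; the only point requiring mild care is confirming the identification of the target isotypic component, which follows automatically from the $\mu_q$-equivariance of all maps involved.
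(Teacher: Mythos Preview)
Your proof is correct and follows essentially the same approach as the paper: the paper's one-line argument invokes the Katz $p$-curvature formula \eqref{eq:Katz-p} together with Corollary \ref{cor 5.3} (which is exactly the $k=1$ case of \eqref{KS}) to obtain $\bar\Psi_k([\mu_1^{(r)}])=-\frac{r}{q}\,\overline{\omega_k^{(r)}}$, and then implicitly uses that the $[\omega_k^{(r)}]$ generate the $\bar r$-component by Theorem \ref{th:basis}(iii). Your additional remarks on why $r/q\in\F_p^\times$ and on the compatibility of $F^*_{S_{\F_p}}$ with the isotypic decomposition are valid elaborations that the paper leaves tacit.
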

\begin{proof}
Using  the Katz $p$-curvature formula \eqref{eq:Katz-p} and Colloary \ref{cor 5.3}, we have
$$  \bar{\Psi}_k ([\mu_1^{(r)}]) =  - \ \frac rq\,[\om_k ^{(r)}].   $$
\end{proof}
\subsection{Flat sections of $H^1_{dR}(X_{\F_p}/S_{\F_p})_{\bar r}$.}
We are ready to prove the main result of this section.
\begin{thm}\label{th:flat_sections}
    Let $1\leq r<q$ be an integer congruent to $\bar r$ modulo $q$.
    Assume that $p$ does not divide $N_{r,q,n}$ and $(F^* _{S_{\F_p}}F^1)_{\bar r}\ne 0$. Then the morphism $C$ in \eqref{eq:fundamental_exact_sequence-isotypic} 
induces an isomorphism on flat functionals
\begin{equation}\label{eq:mth_on_flat_funct}
 \Hom _{\cO_{S_{\bF_p}}} ((F^* _{S_{\F_p}}F^1)_{\bar r}, \cO_{S_{\bF_p}})^{\nabla=0} \iso  \Hom_{\cO_{S_{\bF_p}}}(H^1_{dR}(X_{\F_p}/S_{\F_p})_{\bar r}, \cO_{S_{\F_p}})^{\nabla=0}.
\end{equation}
Also the right-hand side of \eqref{eq:mth_on_flat_funct} is isomorphic to  $\Hom_{\cO_{S_{\bF_p}}}(F^1_{\bar r/p}, \cO_{S_{\bF_p}}) $.

\end{thm}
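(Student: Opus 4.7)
The plan is to dualize the isotypic fundamental exact sequence \eqref{eq:fundamental_exact_sequence-isotypic} and pass to flat sections. Since every term is a locally free $\cO_{S_{\F_p}}$-module of finite rank, the dualized sequence is short exact, and left-exactness of $(-)^{\nabla=0}$ gives injectivity of the map \eqref{eq:mth_on_flat_funct} for free. Thus the substantive content is surjectivity, which amounts to showing that every horizontal $\cO_{S_{\F_p}}$-linear functional $\phi\colon H^1_{dR}(X_{\F_p}/S_{\F_p})_{\bar r} \to \cO_{S_{\F_p}}$ annihilates $\im\bigl(F^*_{X_{\F_p}/S_{\F_p}}\bigr)$; the descended functional on $(F^*_{S_{\F_p}}F^1)_{\bar r}$ is then automatically flat, because $C$ is surjective and intertwines connections.

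The main tool is the $p$-curvature factorization $\Psi = F^*_{X_{\F_p}/S_{\F_p}} \circ \bar\Psi \circ C$ from \eqref{facto2}. On the one hand, any flat $\phi$ kills $\im(\Psi)$: since the $p$-th iterate of the coordinate derivation $\partial/\partial z_i$ vanishes as a derivation of $\cO_{S_{\F_p}}$, one has $\Psi_i = \nabla_i^p$, and iterating the flatness identity $(\partial/\partial z_i)\phi(v) = \phi(\nabla_i v)$ gives $\phi(\Psi_i v) = (\partial/\partial z_i)^p \phi(v) = 0$, the last equality because the linear operator $(\partial/\partial z_i)^p$ is zero on the smooth $\F_p$-algebra $\cO_{S_{\F_p}}$. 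On the other hand, the hypothesis $p \nmid N_{r,n,q}$ together with the standing assumption $p \nmid nq$ place us in the hypotheses of Lemma \ref{lm:image_of_p_curvature_is_big}; combined with the surjectivity of $C$, this gives $\im(\bar\Psi \circ C) = (F^*_{S_{\F_p}}(H^1_{dR}(X_{\F_p}/S_{\F_p})/F^1))_{\bar r}$, whence $\im(\Psi) = \im\bigl(F^*_{X_{\F_p}/S_{\F_p}}\bigr)$ on the $\bar r$-component, and $\phi$ factors through $C$ as desired.

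Finally, to identify the right-hand side of \eqref{eq:mth_on_flat_funct} with $\Hom(F^1_{\bar r/p}, \cO_{S_{\F_p}})$, I would invoke \eqref{r/p} to rewrite $(F^*_{S_{\F_p}}F^1)_{\bar r} = F^*_{S_{\F_p}}(F^1_{\bar r/p})$ and then apply Frobenius (Cartier) descent: a horizontal functional on a Frobenius pullback is determined by its values on a flat frame $F^*_{S_{\F_p}} u$, and those values are forced to lie in $\cO_{S_{\F_p}}^p$, yielding the desired identification after canonically identifying $\cO_{S_{\F_p}}^p$ with $\cO_{S_{\F_p}}$ via the absolute Frobenius. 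The main obstacle in this plan is the surjectivity of $\bar\Psi$ on the relevant isotypic component, which is the content of Lemma \ref{lm:image_of_p_curvature_is_big} and ultimately relies on the explicit Kodaira--Spencer computation of Lemma \ref{lem 5.2} and Corollary \ref{cor 5.3}; the $p$-curvature annihilation on a flat functional is essentially formal once the operator identity $(\partial/\partial z_i)^p = 0$ on $\cO_{S_{\F_p}}$ is in hand.
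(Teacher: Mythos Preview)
Your proof is correct and follows essentially the same route as the paper: dualize the isotypic conjugate-filtration sequence, use the factorization $\Psi = F_{X/S}^* \circ \bar\Psi \circ C$ together with $(\partial/\partial z_i)^p = 0$ on $\cO_{S_{\bF_p}}$ to see that flat functionals kill $\im \Psi$, invoke Lemma~\ref{lm:image_of_p_curvature_is_big} to identify $\im \Psi$ with $\im F_{X/S}^*$ on the $\bar r$-component, and finish with Cartier descent. Your framing via left-exactness of $(-)^{\nabla=0}$ on the dual sequence is a clean way to isolate surjectivity as the only nontrivial step; the paper argues the same point slightly more directly by starting from a flat functional and showing it vanishes on the subbundle.
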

\begin{proof}
    Note that, for every flat map
    $f\colon H^1_{dR}(X_{\F_p}/S_{\F_p})_{\bar r}\to \cO_{S_{\bF_p}}$, every $s\in H^1_{dR}(X_{\F_p}/S_{\F_p})_{\bar r}$, 
    and every $1\leq k\leq n$, we have
$$
f\left(\Psi_{k} (s)\right)= \frac{\partial ^p f(s)}{\partial z_i^p }=0.
$$
Thus $f$ vanishes on the image of $\Psi_{k}$. By Lemma \ref{lm:image_of_p_curvature_is_big} classes $\Psi_k ([\mu_1^{(r)}]),$
$k=1,\dots,n$, span the image of $\big(F^* _{S_{\F_p}}(H^1_{dR}(X_{\F_p}/S_{\F_p})/F^1)\big)_{\bar r}$.
Hence $f$ vanishes on this space. Consequently, the homomorphism $C$ identifies the space of flat maps 
$H^1_{dR}(X_{\F_p}/S_{\F_p})_{\bar r}\to \cO_{S_{\bF_p}}$
with space of flat maps $F^* _{S_{\F_p}}(F^1_{\bar r/p}) \to  \cO_{S_{\bF_p}}$.
The second assertion of part (2) is a consequence of the Cartier descent (\cite{katz70}).   
\end{proof}
\subsection{A formula for the Cartier operator.} We shall explain a formula for the map $C$ in \eqref{eq:fundamental_exact_sequence-isotypic} that will be used in Section 
\ref{sec 5}. Observe that $C$ commutes with restriction to any open subset of $X$ invariant under $\mu_q$. In particular, we have a commutative diagram
\begin{equation}\label{dia:restriction}
\begin{tikzcd}
 H^1_{dR}(X_{\F_p}/S_{\F_p})_{\bar r}   \arrow[r, "C"] \arrow[d]& \big(F^* _{S_{\F_p}} H^0(X, \Omega^1_{X_{\F_p}/S_{\F_p}})\big)_{\bar r}\arrow[d, hook] \\
    H^1_{dR}(X_{\F_p}^\circ/S_{\F_p})_{\bar r}  \arrow[r, "C" , "\cong"']& \big(F^* _{S_{\F_p}} H^0(X^\circ, \Omega^1_{X_{\F_p}^\circ/S_{\F_p}})\big)_{\bar r}
\end{tikzcd}
\end{equation}
 where the bottom horizontal arrow is an isomorphism and the right downward arrow is injective. We shall compute $C([\omega^{(r)}_i])\in  \big(F^* _{S_{\F_p}} H^0(X^\circ, \Omega^1_{X_{\F_p}^\circ/S_{\F_p}})\big)_{\bar r}$, where $\omega^{(r)}_i$ are
 $1$-forms on $X^\circ$ defined in \eqref{eq:forms}. 
 \begin{lem}\label{lem:computing_C}
Write $r=pa - q \tilde h$, where $a$ and $\tilde h$ are integers with $0< \tilde h \leq p$. For $1\leq i \leq n$, define  polynomials $Q^{(k)}_i(z,\tilde h)$ in $z_1, \dots, z_n$
by the formula
$$\frac{P(x, z)^{\tilde h}}{x-z_i}=\sum_{0\leq k< n\tilde h} Q^{(k)}_i(z,\tilde h) x^k,$$
where $P(x,z)=\prod_{i=1}^n(x-z_i)$.  
Then, we have
$$C([\omega^{(r)}_i]) = \sum_{1\leq l \leq \big[\frac{n\tilde h}{p}\big]} \mu^{(a)}_{l}  \otimes Q^{(pl-1)}_i(z,\tilde h) \in H^0(X^\circ, \Omega^1_{X_{\F_p}^ \circ/S_{\F_p}})_{\bar a}  \otimes _{\cO_{S_{\F_p}}, F_{S_{\F_p}}}  \cO_{S_{\F_p}}.$$
 \end{lem}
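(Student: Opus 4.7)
The plan is to compute $C([\omega^{(r)}_i])$ directly on the open curve $X^\circ$, using Frobenius semi-linearity of the Cartier operator together with the fact that $p$-th powers are closed in characteristic $p$.

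First, by the commutative diagram~\eqref{dia:restriction} the right downward arrow is injective, so it suffices to compute the image of $C([\omega^{(r)}_i])$ in $\big(F^*_{S_{\F_p}} H^0(X^\circ, \Omega^1_{X_{\F_p}^\circ/S_{\F_p}})\big)_{\bar r}$. The relation $r = pa - q\tilde h$ combined with $y^q = P$ rewrites
$$
\omega^{(r)}_i \ =\ \frac{dx}{y^r(x - z_i)} \ =\ (y^{-a})^p \cdot \frac{P^{\tilde h}}{x - z_i}\, dx \ =\ (y^{-a})^p \sum_{0 \leq k < n\tilde h} Q^{(k)}_i(z, \tilde h)\, x^k\, dx,
$$
so that $\omega^{(r)}_i$ is a linear combination, with coefficients in $\cO_{S_{\F_p}}$, of the forms $(y^{-a})^p\, x^k\, dx$ on $X_{\F_p}^\circ$.

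Since $(y^{-a})^p = y^{-ap}$ is a $p$-th power, $d((y^{-a})^p) = 0$ in characteristic $p$, and hence
$$
d\big((y^{-a})^p\, x^{k+1}\big) \ =\ (k+1)\, (y^{-a})^p\, x^k\, dx.
$$
Consequently $(y^{-a})^p x^k dx$ is exact whenever $p \nmid (k+1)$, and those terms contribute nothing to the cohomology class.  The surviving indices are exactly $k = pl - 1$, and the bound $k < n\tilde h$ forces $1 \leq l \leq \big[\frac{n\tilde h}{p}\big]$.

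For each such $l$, the Cartier operator is Frobenius semi-linear in $\cO_X$, meaning $C\big([(y^{-a})^p\, x^{pl-1} dx]\big) = \bar y^{-a}\cdot C\big([x^{pl-1} dx]\big)$ (this is the dual of the $\cO_{X^{(p)}}$-linearity of the inverse Cartier map $C^{-1}(d\bar f) = [f^{p-1} df]$). Applying that classical formula with $f = x^l$ to the identity $x^{pl-1} dx = \tfrac{1}{l}(x^l)^{p-1} d(x^l)$ gives $C([x^{pl-1} dx]) = \bar x^{l-1} d\bar x$.  Under the identification $H^0((X^\circ)^{(p)}, \Omega^1_{(X^\circ)^{(p)}/S_{\F_p}}) \iso F^*_{S_{\F_p}} H^0(X^\circ, \Omega^1_{X^\circ_{\F_p}/S_{\F_p}})$ one has $\bar y^{-a} \bar x^{l-1} d\bar x \leftrightarrow \mu^{(a)}_l \otimes 1$, and the $\cO_{S_{\F_p}}$-linearity of $C$ absorbs the scalar $Q^{(pl-1)}_i(z,\tilde h)$ into the $\cO_S$-slot of the tensor, producing the claimed formula
$$
C([\omega^{(r)}_i]) \ =\ \sum_{l=1}^{\big[\frac{n\tilde h}{p}\big]} \mu^{(a)}_l \otimes Q^{(pl-1)}_i(z,\tilde h).
$$
The result lives in the correct isotypic component since $\bar r \equiv p\bar a \pmod q$ via the convention $(F^*_{S_{\F_p}} V)_{\bar r} = F^*_{S_{\F_p}}(V_{\bar r/p})$ from~\eqref{r/p}. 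The main technical obstacle is the careful bookkeeping of the several module structures in play (Frobenius twists on base and source, together with the $\mu_q$-weight conventions), in order to match the intermediate expression $\bar y^{-a} \bar x^{l-1} d\bar x$ with the tensor $\mu^{(a)}_l \otimes Q^{(pl-1)}_i(z,\tilde h)$ in the stated target.
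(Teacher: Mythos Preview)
Your proof is correct and follows essentially the same approach as the paper: rewrite $y^{-r} = y^{-pa} P^{\tilde h}$, expand $\frac{P^{\tilde h}}{x-z_i}$ via the $Q^{(k)}_i$, and observe that the Cartier operator kills $y^{-ap}x^k\,dx$ unless $k=pl-1$, in which case it returns $\mu^{(a)}_l$. The paper's proof is simply a terser version of your computation; your only unnecessary detour is the identity $x^{pl-1}\,dx = \tfrac{1}{l}(x^l)^{p-1}\,d(x^l)$, which tacitly assumes $p\nmid l$ --- it is cleaner to use $C^{-1}(\bar x^{l-1}\,d\bar x) = [x^{p(l-1)}x^{p-1}\,dx]$ directly from the $\cO_{X'}$-linearity of $C^{-1}$.
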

\begin{proof}
We have
$$C\Big(\frac{dx}{y^r(x-z_i)}\Big)=  C\Big(\frac{P(x, z)^{\tilde h} dx}{y^{pa}(x-z_i)}\Big) = \sum_{0\leq k< n\tilde h} C(y^{-ap} Q^{(k)}_i(z,\tilde h) x^k dx).$$
It remains to observe that $C( y^{-ap} Q^{(k)}_i(z,\tilde h) x^k dx)=  y^{-a}x^{l-1}dx \otimes Q^{(pl-1)}_i(z,\tilde h)$, for $k$ of the form $pl-1$, and $0$ otherwise.
\end{proof} 
If $p$ does not divide the integer $N_{r,n,q}$ defined by\eqref{eq:N} then Theorem \ref{th:basis} (iii) together with Lemma \ref{lem:computing_C} yield a concrete description 
of the upper horizontal map in \eqref{dia:restriction}.
\section{Knizhnik-Zamolodchikov equations} 
\label{sec 5}
In this section we use the curve introduced in Section \ref{sec 4} together with the Katz formula for the p-curvature of the Gauss-Manin connection 
to give a (second) proof of Theorem \ref{thm:main3_introduction}. In particular, we describe all solutions to the KZ equations in characteristic $p$, for $\bar{h}\in \F_p$.
We also compute explicitly the p-curvature of the KZ connection.   

\subsection{All solutions of KZ  equations are $p$-hypergeometric}
\label{sec 5.1}

Fix an integer $n>1$.
Consider   the trivial vector bundle $\cO_{S_{\bF_p}}^{\oplus n}=\bigoplus_{i=1}^{i=n} \cO_{S_{\bF_p}}
 \cdot e_i$ over $S_{\bF_p}$
 with a connection defined by  the formula 
$$
\nabla= d + h \sum_{1\leq i\ne j \leq n} \frac{\Omega_{ij}}{z_i - z_j}dz_i\,.
$$
Here $h \in \F_p$ and  $\Omega_{ij}$ are the $n\times n$ matrices defined in \eqref{Omij}.
The connection  is integrable: $\nabla^2=0$.

\smallskip

Let $\mc V$ be the kernel of the addition map $\Sigma: \cO_{S_{\bF_p}}^{\oplus n}\to \cO_{S_{\bF_p}}$.
Thus  $\mc V$ is a (trivial) rank $n-1$ subbundle of $\cO_{S_{\bF_p}}^{\oplus n}$.

\smallskip

The connection $\nabla$ preserves $\mc V$  and the restriction of 
$\nabla$ to $\mc V$ is the KZ connection $\nabla^{\on{KZ},h}$ in Section
 \ref{sec 1.1}. Set $\mc V_{h}:= (\mc V, \nabla^{\on{KZ},{h}})$.

\smallskip

The next theorem is a reformulation of Theorem \ref{thm:main3_introduction} from the Introduction. In particular, 
it describes flat sections of $\mc V_{{h}}$ in characteristic $p>0$.

\begin{thm}
\label{thm 7}

 Let $p$ be a prime number that does not divide $n$, $0\ne \bar h\in \bF_p$.
 \begin{enumerate}
     \item Let $0\leq \tilde h < p-1$ be the integer congruent to $\bar h$ modulo $p$.
Set $d=\big[\frac{n \tilde h}{p}\big]$,  if $\big[\frac{n \tilde h}{p}\big]>0$,
 and $d=n-1$, if $\big[\frac{n \tilde h}{p}\big]=0$.
Then the space $\Gamma(S_{\bF_p}, \mc V_{h})^{\nabla=0}$ of global flat sections of the bundle $\mc V$ is a free module over the algebra 
$\bF_p [z_i^p, (z_i^p-z_j^p)^{-1}, 1\leq i< j \leq n]$ of rank $d$. Moreover, if $ \big[\frac{n\tilde h}{p} \big]>0$,
 then    $\Gamma(S_{\bF_p}, \mc V_{h})^{\nabla=0}$
is freely generated over $\bF_p [z_i^p, (z_i^p-z_j^p)^{-1}, 1\leq i< j \leq n]$ 
 by $p$-hypergeometric sections from \eqref{p hyp}.
\item Let $\cU_{h}\subset \mc V_{h}$ be the $\cO_{S_{\bF_p}}$-submodule spanned by  the $p$-hypergeometric sections. Similarly, we define $\cU_{-h}\subset \mc V_{-h}$.
Then $\cU_{h}$ and $\cU_{-h}$ are subbundles satisfying the following Lagrangian property:
the Shapovalov form $ \mc V_{h}\otimes  \mc V_{-h}\to 
\cO_{S_{\bF_p}}$  (see  \eqref{eq:dualitySha}) restricts to the zero map on $ \mc U_{h}\otimes  \mc U_{-h}$ and induces an isomorphism
\begin{equation}
     \big(\cV_{h}/\mc U_{h}\big)^*\iso \mc U_{-h}.
\end{equation}

 \end{enumerate}
 \end{thm}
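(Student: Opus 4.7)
The plan is to realize $\mc V_{\pm h}$ geometrically, via the cyclic cover $X\to \bP^1_S$ constructed in Section \ref{ss:curve}, as isotypic components of the relative de Rham cohomology, and then to deduce Theorem \ref{thm 7} from the structural results of Section \ref{sec 4}: the fundamental exact sequence \eqref{eq:fundamental_exact_sequence-isotypic}, the description of flat sections in Theorem \ref{th:flat_sections}, and the Lagrangian property of the Hodge filtration with respect to Poincar\'e duality.

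First I would choose coprime positive integers $q,r$ with $1\leq r<q$, $\gcd(q,np)=1$, $-r/q\equiv \bar h\pmod{p}$, and $p\nmid N_{r,n,q}$, where $N_{r,n,q}$ is the integer from \eqref{eq:N}. Rescaling $(r,q)$ by any integer coprime to $p$ preserves the congruence, and $N_{r,n,q}$ depends only on the reduction classes, so such a choice always exists. With the curve $\pi\colon X\to S$ from Section \ref{ss:curve} in hand, Theorem \ref{th:basis}(ii,iii) combined with the identification $(\mc V,\nabla^{\on{KZ},h})=E_{-\bar h}$ noted before Theorem \ref{th:main_result_for_rational_h} produces flat isomorphisms
\begin{equation*}
\mc V_{h}\ \iso\ H^1_{dR}(X_{\bF_p}/S_{\bF_p})_{-\bar r},\qquad
\mc V_{-h}\ \iso\ H^1_{dR}(X_{\bF_p}/S_{\bF_p})_{\bar r}.
\end{equation*}
A direct comparison of Corollary \ref{cor:duality_section2} with \eqref{eq:poincare_duality_expl} shows that, up to a non-zero scalar, these isomorphisms intertwine the Shapovalov pairing \eqref{eq:dualitySha} with the Poincar\'e pairing \eqref{eq:Poincare}.

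For part (1), assume first that $[n\tilde h/p]>0$. Via Shapovalov duality, global flat sections of $\mc V_h$ are in bijection with global flat functionals on $\mc V_{-h}\cong H^1_{dR}(X_{\bF_p}/S_{\bF_p})_{\bar r}$, and Theorem \ref{th:flat_sections} identifies the latter with $\Hom_{\cO_{S_{\bF_p}}}(F^1_{\bar r/p},\cO_{S_{\bF_p}})$, a free module whose rank by Lemma \ref{lem hol} equals $[n\tilde h/p]$ (the numerology is governed by the relation $r=pa-q\tilde h$ from Lemma \ref{lem:computing_C}). The explicit Cartier formula of Lemma \ref{lem:computing_C} then matches the resulting basis of flat sections with the $p$-hypergeometric vectors $Q^{(\ell p-1)}(z,h)$ in \eqref{p hyp}: the matrix entries of the Cartier operator in the basis $\{[\omega_i^{(-r)}]\}_i$ are exactly the polynomials $Q^{(pl-1)}_i(z,\tilde h)$. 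In the degenerate case $[n\tilde h/p]=0$, one checks that the Hodge piece of the relevant isotypic component vanishes; then the Katz formula \eqref{eq:Katz-p} together with the factorisation \eqref{facto2} forces the $p$-curvature of $H^1_{dR}(X_{\bF_p}/S_{\bF_p})_{-\bar r}$ to vanish, and Cartier descent delivers a free module of rank $n-1$ over $\bF_p[z_i^p,(z_i^p-z_j^p)^{-1}]$.

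For part (2), under the identifications of the first paragraph, $\mc U_h$ and $\mc U_{-h}$ correspond to the images of $F^*_{X_{\bF_p}/S_{\bF_p}}$ in the $-\bar r$ and $\bar r$ components of \eqref{eq:fundamental_exact_sequence-isotypic} respectively. Vanishing of the Shapovalov pairing on $\mc U_h\otimes \mc U_{-h}$ is immediate from the adjunction \eqref{eq:adjunction_of_F_and_C} combined with $C\circ F^*_{X_{\bF_p}/S_{\bF_p}}=0$; the non-degeneracy of the induced pairing $(\mc V_h/\mc U_h)^{*}\iso \mc U_{-h}$ follows from the same adjunction combined with the non-degeneracy of Poincar\'e duality on $H^1_{dR}(X_{\bF_p}/S_{\bF_p})$. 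The main obstacle is the dictionary step: verifying existence of $(r,q)$ satisfying all the divisibility constraints, the compatibility of Shapovalov and Poincar\'e pairings up to a unit, and the matching of the Cartier image with the $p$-hypergeometric solutions via the numerology of Lemma \ref{lem:computing_C}; granted this, both parts reduce mechanically to the already-established results of Section \ref{sec 4}.
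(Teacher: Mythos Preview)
Your approach is the paper's approach: both arguments realize $\mc V_{\pm h}$ as isotypic components of $H^1_{dR}(X_{\bF_p}/S_{\bF_p})$ for the cyclic cover of Section~\ref{ss:curve}, and then invoke the fundamental exact sequence \eqref{eq:fundamental_exact_sequence-isotypic}, Theorem~\ref{th:flat_sections}, Lemma~\ref{lem:computing_C}, and the adjunction \eqref{eq:adjunction_of_F_and_C} exactly as you outline.

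The one point where your proposal has a genuine (though easily repaired) gap is the choice of $(r,q)$. Your existence argument is not correct: the assertion that ``$N_{r,n,q}$ depends only on the reduction classes'' fails because the number of factors in the product \eqref{eq:N} is $\lfloor rn/q\rfloor+1$, which is not a mod-$p$ invariant, and ``rescaling $(r,q)$'' destroys coprimality. More seriously, the rank identity you need, namely $\big[\tfrac{na}{q}\big]=\big[\tfrac{n\tilde h}{p}\big]$ with $a=(r+q\tilde h)/p$, is equivalent to $nr/(pq)<1-\{n\tilde h/p\}$; since $1-\{n\tilde h/p\}\geq 1/p$, this holds provided $q>nr$, a constraint you have not imposed. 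The paper sidesteps both problems by taking $r=1$ and choosing $q>n$ with $q\tilde h+1=ap$: then $N_{1,n,q}=-n$, so $p\nmid N$ is automatic from the hypothesis $p\nmid n$, and the inequality $q>n$ gives the rank identity in one line. With that single specialization your argument becomes the paper's proof verbatim.
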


\begin{proof}  

Recall that $\mc V_{ h}$ is isomorphic to the local system dual to $E_{ h}$, see \eqref{eq:dualitySha} and \eqref{eq:explicit_formula_for_connection}. We shall use Theorem \ref{th:flat_sections} to analyze flat sections of  $E_{ h}^*$.

Pick positive integers $q$, $a$  such that 
 $q>n$ and $q\tilde h +1 = a p$. 
  Observe that 
 \begin{equation}
     \left[\frac{n\tilde h}{p} \right]=\left[\frac{n a}{q} \right].
\end{equation}
Indeed, we have
\bea
\frac{n\tilde h}p + \frac{n}{pq} = \frac{na}q
\quad \text{and} \quad 
1+\left[\frac{n\tilde h}p\right]
-\frac{n\tilde h}p\geq \frac1p\,,
\quad
\text{while}\quad
\frac{n}{pq}< \frac 1p\,.
\eea

Consider the curve $X$ over $S$ defined in Section \ref{ss:curve}.  For this $q$ and $r=1$, the number $N$, introduced in \eqref{eq:N}, equals $-n$. Thus, by Theorem \ref{th:basis}, the local system $E_{ h}$ is isomorphic to $H^1_{dR}(X_{\bF_p}/S_{\bF_p})_{\bar 1}$.  We have  the  following exact sequence of flat bundles:
 \bean
\label{ies}
 0\to (F^* _{S_{\bF_p}}(H^1_{dR}(X_{\bF_p}/S_{\bF_p})/F^1))_{\bar 1} \rar{F_{X_{\bF_p}/S_{\bF_p}}^*} 
 H^1_{dR}(X_{\bF_p}/S_{\bF_p})_{\bar 1} \rar{C} (F^* _{S_{\bF_p}}F^1)_{\bar 1} \to 0,
\eean
see \eqref{eq:fundamental_exact_sequence-isotypic}, and an isomorphism  $(F^* _{S_{\bF_p}}F^1)_{\bar 1} =  F^* _{S_{\bF_p}} (F^1_{\bar a})$.
By Lemma \ref{lem hol} the rank of  $F^1_{\bar a}$ equals $ \big[\frac{n a}{q} \big]$. 

If $\big[\frac{n \tilde h}{p}\big] = \big[\frac{n a}{q} \big]= 0$, we have an isomorphism $$E_{ h} \iso (F^* _{S_{\bF_p}}(H^1_{dR}(X_{\bF_p}/S_{\bF_p})/F^1))_{\bar 1}.$$ In particular, using the Cartier descent, the space of flat sections
of $\mc V$ is identified with  $$\Hom_{S_{\bF_p}} ( (H^1_{dR}(X_{\bF_p}/S_{\bF_p})/F^1)_{\bar a}, \cO_{S_{\bF_p}})\iso  F^1_{- \bar a}$$ which is a free $\cO_{S_{\bF_p}}$-module (of rank $n-1$) as desired. 

Assume that  $\big[\frac{n \tilde h}{p}\big] >0$. Then, by Theorem \ref{th:flat_sections},  the space  $\mc V^{\nabla=0}$ is identified with 
$\Hom_{\cO_{S_{\bF_p}}}(F^1_{\bar a}, \cO_{S_{\bF_p}})$. Consider the basis for the latter which is dual to the basis 
$\mu^{(a)}_i$, $i=1,\dots, \Big[\frac {na}q\Big]$. Then by Lemma \ref{lem:computing_C} (see also \cite[Theorem 6.2]{SlV}) the image in $\mc V^{\nabla=0}$ of 
the elements of that dual basis are precisely the
$p$-hypergeometric sections. This proves the first part of the theorem.

To prove part (2) recall from Theorem \ref{th:basis} the isomorphism  $E_{- h}\iso H^1_{dR}(X_{\bF_p}/S_{\bF_p})_{-\bar 1}$. Consider 
the  exact sequence of flat bundles:
 \begin{equation}
\label{ies_dual}
 0\to (F^* _{S_{\bF_p}}(H^1_{dR}(X_{\bF_p}/S_{\bF_p})/F^1))_{- \bar 1} \rar{F_{X_{\bF_p}/S_{\bF_p}}^*} 
 H^1_{dR}(X_{\bF_p}/S_{\bF_p})_{- \bar 1} \rar{C} (F^* _{S_{\bF_p}}F^1)_{- \bar 1} \to 0.
 \end{equation}
Using Lemma \ref{lem:computing_C},  the isomorphism  $\cV_{h}\iso E_{ h}^*$ identifies  $\cU_{h}\subset \cV_ h$ 
with $\Big((F^* _{S_{\bF_p}}F^1)_{\bar 1}\Big)^* \subset E_{ h}^*$ and,
similarly, $\cU_{-h} \iso  \Big((F^* _{S_{\bF_p}}F^1)_{-\bar 1}\Big)^*$. By Theorem \ref{th:basis} the Shapovalov form $ \mc V_{h}^*\otimes  \mc V_{-h}^*\to 
\cO_{S_{\bF_p}}$ equals the Poincar\'e form $ H^1_{dR}(X_{\bF_p}/S_{\bF_p})_{\bar 1}\otimes  H^1_{dR}(X_{\bF_p}/S_{\bF_p})_{- \bar 1}\to \cO_{S_{\bF_p}}$ up to a non-zero constant factor. Consequently, the Lagrangian property 
from part (2) is equivalent to the following: the  Poincar\'e form restricts to zero on  $(F^* _{S_{\bF_p}}(H^1_{dR}(X_{\bF_p}/S_{\bF_p})/F^1))_{\bar 1}\otimes (F^* _{S_{\bF_p}}(H^1_{dR}(X_{\bF_p}/S_{\bF_p})/F^1))_{- \bar 1}$ and induces an isomorphism 
$$  (F^* _{S_{\bF_p}}(H^1_{dR}(X_{\bF_p}/S_{\bF_p})/F^1))_{\bar 1}\iso \Big((F^* _{S_{\bF_p}}F^1)_{- \bar 1}\Big)^*.$$
But this follows from formula \eqref{eq:adjunction_of_F_and_C} and the Serre duality.
\end{proof}

Fix a point  $a=(a_1,\dots, a_n) \in S(\bar{\bF}_p)$. We refer to $\cO_{S_{\bF_p}}$-linear maps $\mc V^*\to \bar{\bF}_p[[z_1-a_1, \dots ,z_n-a_n]]$
as formal sections of $\cV$ at $a$.
\begin{cor}
\label{cor form}
Let $p$ be a prime number that does not divide $n$, $0\ne  h\in \bF_p$,  and  let $0\leq \tilde h < p-1$ be the integer congruent to 
$ h$ modulo $p$.
Assume $\big[\frac{n \tilde h}{p}\big]>0$. 
Then the space of formal flat sections of $\mc V$ is a free module of rank $\big[\frac{n \tilde h}{p}\big]$ over the algebra 
$\bar{\bF}_p[[(z_1-a_1)^p, \dots, (z_n-a_n)^p]]$.
The $p$-hypergeometric sections 
form a basis of this module.
\end{cor}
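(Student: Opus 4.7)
The plan is to reduce the corollary to the global statement, Theorem \ref{thm 7}(1), via the base change lemma \ref{lm:base_change_to_disk}. Let $A = \bar{\bF}_p[z_i, (z_i-z_j)^{-1},\; 1\le i \ne j\le n]$ and let $\hat{A} = \bar{\bF}_p[[z_1-a_1, \dots, z_n-a_n]]$ be the completion of $A$ at the maximal ideal corresponding to the point $a$. Since $\mc V$ is a trivial (in particular, finitely generated projective) bundle, formal sections of $\mc V$ at $a$, in the sense of the paragraph preceding the corollary, are identified with elements of $\mc V_h \otimes_A \hat{A}$, and the flat ones with $(\mc V_h \otimes_A \hat{A})^{\nabla=0}$.

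Applying Lemma \ref{lm:base_change_to_disk} to the base change from $A$ to $\hat{A}$, one obtains a canonical isomorphism
$$ (\mc V_h)^{\nabla=0} \otimes_{A^p} \hat{A}^p \ \iso\ (\mc V_h \otimes_A \hat{A})^{\nabla=0}. $$
Because we are in characteristic $p$, the identity $(z_i - z_j)^p = z_i^p - z_j^p$ yields
$$A^p = \bar{\bF}_p[z_i^p, (z_i^p - z_j^p)^{-1},\; 1\le i<j\le n],$$
and the injectivity of Frobenius on the power series ring $\hat{A}$ identifies $\hat{A}^p$ with $\bar{\bF}_p[[(z_1-a_1)^p, \dots, (z_n-a_n)^p]]$. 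By Theorem \ref{thm 7}(1), extended from $\bF_p$ to $\bar{\bF}_p$ by the (flat) base change along $\bF_p \to \bar{\bF}_p$, the space $(\mc V_h)^{\nabla=0}$ is a free $A^p$-module of rank $\big[\frac{n\tilde h}{p}\big]$ with basis given by the $p$-hypergeometric sections \eqref{p hyp}.

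The conclusion is then immediate: tensoring this free module with $\hat{A}^p$ over $A^p$ produces a free $\hat{A}^p$-module of the same rank $\big[\frac{n\tilde h}{p}\big]$, with the same $p$-hypergeometric sections serving as a basis for the space of formal flat sections at $a$. I expect no real obstacle to this plan — all the substantive work has already been done in Theorem \ref{thm 7} and Lemma \ref{lm:base_change_to_disk}; the only point requiring verification is the elementary identification of the subrings of $p$-th powers $A^p$ and $\hat{A}^p$, which is routine.
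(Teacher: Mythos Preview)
Your proof is correct and follows exactly the same approach as the paper, which simply cites Theorem \ref{thm 7} together with Lemma \ref{lm:base_change_to_disk}. You have merely spelled out the details of how these two results combine.
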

\begin{proof} This follows from the theorem above combined with Lemma \ref{lm:base_change_to_disk}. 
\end{proof}

\subsection{$p$-curvature operators $\Psi_k$}
\label{sec p-cu}

Use the notations of Theorem \ref{thm 7} and its proof. 
By formula \eqref{Psi k1}, 
\bean
\label{tr}
\Psi_k = F_{X_{\bF_p}/S_{\bF_p}}^* \circ \bar \Psi_k \circ C\,:\,H^1_{dR}(X_{\bF_p}/S_{\bF_p})_{\bar 1}
\to H^1_{dR}(X_{\bF_p}/S_{\bF_p})_{\bar 1}\,.
\eean
Apply this composition  to  elements $\om_i^{(1)}$ generating $H^1_{dR}(X_{\bF_p}/S_{\bF_p})_{\bar 1}$.
Using Lemma \ref{lem:computing_C}, we have
\bean
\label{re}
C\big(\om_i^{(1)}) = \sum_{\,\ell=1}^{\big[\frac{n\tilde h}p\big] }  \mu_\ell^{(a)} \otimes  Q^{(\ell p-1)}_i(z, \tilde h),
\eean
where $Q^{(\ell p-1)}_i(z,\tilde h)$ are coordinates of the $p$-hypergeometric solutions in Section \ref{sec 1}.
\smallskip

Note that if $\big[\frac{n\tilde h}p\big]=0$, then all $p$-curvature operators $\Psi_k$ are equal to zero.
\smallskip

We apply the operator $\bar \Psi_k$ to the expression in  \eqref{re} and obtain
\bean
\label{PsC}
\bar\Psi_k(C\big(\om_i^{(1)})) = - \frac aq \; \overline{\om_k^{(a)}} \otimes \left(\sum_{\,\,\ell=1}^{\big[\frac{n\tilde h}p\big] } Q^{(\ell p-1)}_i(z,  h)  z_k^{{p}(\ell-1)} \right)\,,
\eean
see formula \eqref{KS}. 
\smallskip

Formula \eqref{PsC} shows that the rank of $\Psi_k$ is not greater than 1.

\smallskip

We compute the last operator $ F_{X/S}^* $ in formula \eqref{tr} 
using the adjunction property \eqref{eq:adjunction_of_F_and_C},
\bean
\label{lf}
& &  \left(F_{X/S}^*\Big(\,\overline{\om_k^{(a)}}\,\Big), \om _j^{(-1)}\right) = \left( \om_k^{(a)}, C(\om _j^{(-1)})\right)=
\\
\notag
& &
\sum_{\,\,\,m = 1}^{\big[\frac{n(p-\tilde h)}p\big]}
Q_j^{(mp-1)}(z, -h)  (\om_k^{(a)},\mu_m^{(q-a)})^p = \frac{-q}{aC_k(z)^p} \sum_{\,\,\,m = 1}^{\big[\frac{n(p-\tilde h)}p\big]}
Q_j^{(mp-1)}(z, - h)  z_k^{p(m-1)}\,,
\eean
see formula \eqref{ommu}.

\smallskip

Notice that if $\big[\frac{n\tilde h}p\big]=n-1$, then $\big[\frac{n(p-\tilde h)}p\big]=0$
and hence all the $p$-curvature operators $\Psi_k$ are equal to zero.
\smallskip

This calculation gives the following description of the $p$-curvature operators.

\begin{thm}
\label{thm pPk}
Let $0<\big[\frac{n\tilde h}p\big]<n-1.$
Then for $k=1,\dots,n$,  the operator $\Psi_k$ is of rank 1 and
\bean
\label{cU}
{}
\\
\notag
\left( \Psi_k(\om_i^{(1)}), \om _j^{(-1)}\right)
=
\frac{1}{C_k(z)^p}
 \left(\sum_{\,\,\ell=1}^{\big[\frac{n\tilde h}p\big] } Q^{(\ell p-1)}_i(z,h)  z_k^{{p}(\ell-1)} \right)
\left(\sum_{\,\,\,m = 1}^{\big[\frac{n(p-\tilde h)}p\big]}
Q_j^{(mp-1)}(z, -h) \,z_k^{p(m-1)}\right)
\eean
for all $i,j\in\{1,\dots,n\}$.
 \qed
\end{thm}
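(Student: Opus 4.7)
The proof is essentially an assembly of the three-step factorization $\Psi_k = F_{X_{\bF_p}/S_{\bF_p}}^* \circ \bar\Psi_k \circ C$ from \eqref{Psi k1}, plugging in the intermediate formulas \eqref{re}, \eqref{PsC}, \eqref{lf} that have been derived immediately above this theorem. No new geometric input is required.

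I would begin by applying \eqref{PsC} to rewrite
\[
\bar\Psi_k(C(\om_i^{(1)})) = -\tfrac{a}{q}\, \overline{\om_k^{(a)}} \otimes A_i(z),
\qquad A_i(z) := \sum_{\ell=1}^{\big[\frac{n\tilde h}p\big]} z_k^{p(\ell-1)}\, Q_i^{(\ell p-1)}(z, h).
\]
Since $A_i(z)$ sits in the Frobenius-twisted scalar slot of the tensor, the element $\Psi_k(\om_i^{(1)}) = F^*_{X/S}\bigl(\bar\Psi_k(C(\om_i^{(1)}))\bigr)$ equals $A_i(z)\cdot v_k$, where $v_k := -\tfrac{a}{q}F^*_{X/S}(\overline{\om_k^{(a)}})$ is a single vector of $H^1_{dR}(X_{\bF_p}/S_{\bF_p})_{\bar 1}$, independent of $i$. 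This already gives $\rank\Psi_k \leq 1$. To extract the matrix coefficients, I would then apply the adjunction \eqref{eq:adjunction_of_F_and_C} to transport $F^*_{X/S}$ across the Poincar\'e pairing --- this is exactly the calculation recorded in \eqref{lf} --- and multiply the two scalars, which telescope as $(-a/q)\cdot(-q/(aC_k(z)^p)) = 1/C_k(z)^p$, yielding precisely formula \eqref{cU}.

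The only step requiring more than bookkeeping is upgrading $\rank\Psi_k \leq 1$ to equality. For this I would exhibit a pair $(i,j)$ with $A_i(z) B_j(z) \not\equiv 0$, where $B_j(z) := \sum_m z_k^{p(m-1)} Q_j^{(mp-1)}(z,-h)$. The key observation is the degree bound $\deg_{z_k} Q_i^{(\ell p-1)}(z,h) \leq \tilde h < p$, inherited from $\deg_{z_k} P(x,z)^{\tilde h} \leq \tilde h$: it guarantees that the $z_k$-monomials $z_k^{s + p(\ell-1)}$ appearing in $A_i(z)$ have pairwise distinct exponents, so $A_i \equiv 0$ would force every $Q_i^{(\ell p-1)}(z,h)$ to vanish. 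The hypothesis $\big[\frac{n\tilde h}p\big] > 0$ ensures that not all of them do, and the dual hypothesis $\big[\frac{n\tilde h}p\big] < n-1$ (equivalent, since $p\nmid n$, to $\big[\frac{n(p-\tilde h)}p\big] > 0$) handles $B_j$ symmetrically. This degree-bound/non-vanishing check is the only subtle point; everything else is a direct chase through the formulas already established in the text.
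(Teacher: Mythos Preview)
Your proposal is correct and follows exactly the paper's approach: the theorem is stated with a bare \qed because formula \eqref{cU} is obtained by composing the already-derived ingredients \eqref{re}, \eqref{PsC}, \eqref{lf} via the factorization \eqref{Psi k1}, with the constants $(-a/q)(-q/a)$ cancelling as you note. Your degree-bound argument for rank exactly $1$ is slightly more explicit than what the paper writes out; the one small imprecision is that ``$\big[\frac{n\tilde h}p\big]>0$ ensures that not all of them do'' really appeals to the nonvanishing of the $p$-hypergeometric vectors (established e.g.\ by the surjectivity of $A_{\tilde h}$ in \eqref{eq:conjugatetwistedlogbis} or by Theorem~\ref{thm 7}), not merely to the nonemptiness of the index range.
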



\begin{cor}
\label{cor 55}
An element $v= \sum_{n=1}^n v_i \om^{(1)}_i$ lies in the kernel of $\Psi_k$ if and only if
\bean
\label{ker k}
\sum_{i=1}^n v_i\left(\sum_{\,\,\ell=1}^{\big[\frac{n\tilde h}p\big] } Q^{(\ell p-1)}_i(z, h)  z_k^{{p}(\ell-1)} \right) = 0.
\eean
\qed

\end{cor}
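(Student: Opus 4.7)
The plan is to derive this kernel description directly from the rank-one structure of $\Psi_k$ established in Theorem \ref{thm pPk}. The key observation is that once we know $\Psi_k$ has rank exactly one (and is therefore non-zero), its kernel is cut out by a single linear equation, which can be read off from any non-trivial matrix coefficient.

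First I would set up the computation as follows. Write $A_i(k) := \sum_{\ell=1}^{[n\tilde h/p]} Q_i^{(\ell p-1)}(z,h)\, z_k^{p(\ell-1)}$ and $B_j(k) := \sum_{m=1}^{[n(p-\tilde h)/p]} Q_j^{(mp-1)}(z,-h)\, z_k^{p(m-1)}$, so that Theorem \ref{thm pPk} reads
\begin{equation*}
\bigl(\Psi_k(\omega_i^{(1)}),\,\omega_j^{(-1)}\bigr) \;=\; \frac{A_i(k)\, B_j(k)}{C_k(z)^p}.
\end{equation*}
Because $\Psi_k$ has rank one, its image is a line spanned by some vector $u_k\neq 0$, so there exist scalars $\lambda_i(k)$ with $\Psi_k(\omega_i^{(1)}) = \lambda_i(k)\, u_k$. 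Pairing with $\omega_j^{(-1)}$ on both sides and using the above identity gives $\lambda_i(k)\,(u_k,\omega_j^{(-1)}) = A_i(k)\, B_j(k)/C_k(z)^p$. Since $u_k\neq 0$ and the Poincar\'e duality (Corollary \ref{cor:duality_section2} together with the identification of $H^1_{dR}(X/S)_{\bar 1}^*$ with $H^1_{dR}(X/S)_{-\bar 1}$ from Theorem \ref{th:basis}) is perfect, not all $B_j(k)$ can vanish; fixing some $j_0$ with $B_{j_0}(k)\neq 0$ we deduce $\lambda_i(k) = A_i(k)$ up to a common non-zero factor.

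Now given $v = \sum_i v_i\,\omega_i^{(1)}$, we have
\begin{equation*}
\Psi_k(v) \;=\; \Bigl(\sum_i v_i\, A_i(k)\Bigr)\, u_k,
\end{equation*}
and since $u_k \neq 0$ this vanishes if and only if $\sum_i v_i\, A_i(k) = 0$, which is exactly equation \eqref{ker k}. As a sanity check, the equation is well-defined on the quotient by the relation $\sum_i \omega_i^{(1)} = 0$: one needs $\sum_i A_i(k) = 0$, i.e. $\sum_i Q_i^{(\ell p-1)}(z,h) = 0$ for each $\ell$, and this is precisely the algebraic constraint $\sum_i I_i = 0$ built into the KZ system \eqref{KZ} (and visible from the definition of the $p$-hypergeometric vectors in Section \ref{ss:solutions_in_finite}).

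There is no real obstacle here; the corollary is essentially a two-line consequence of Theorem \ref{thm pPk} once the rank-one factorization is in place. The only subtlety worth emphasizing in the write-up is to remark explicitly that the Poincar\'e pairing (equivalently, the Shapovalov form under the identification \eqref{eq:dualitySha}) is non-degenerate, so that $\Psi_k(v) = 0$ can be tested by pairing against the full family $\omega_j^{(-1)}$, $j = 1,\dots,n$.
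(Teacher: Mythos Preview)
Your proposal is correct and matches the paper's intent: the corollary is marked with \qed immediately after Theorem \ref{thm pPk}, indicating it is an immediate consequence of the rank-one factorization $(\Psi_k(\omega_i^{(1)}),\omega_j^{(-1)}) = C_k(z)^{-p} A_i(k)B_j(k)$, which is exactly what you unpack. A slightly shorter write-up would skip the choice of $u_k$ and just note that, by linearity and perfectness of the pairing, $\Psi_k(v)=0$ iff $\bigl(\sum_i v_i A_i(k)\bigr)B_j(k)=0$ for all $j$, and then invoke rank one to ensure some $B_j(k)\ne 0$.
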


\begin{cor}
\label{cor 5.4}

We have
\bean
\label{cU 5.4}
{}
\\
\notag
\Psi_k(\om_i^{(1)})
=
\frac{-h}{C_k(z)^p}
 \left(\sum_{\,\,\ell=1}^{\big[\frac{n\tilde h}p\big] } Q^{(\ell p-1)}_i(z,h)  z_k^{{p}(\ell-1)} \right)
\sum_{\,\,\,m = 1}^{\big[\frac{n(p-\tilde h)}p\big]}\sum_{a=1}^n 
Q_a^{(mp-1)}(z, -h) \,z_k^{p(m-1)} \om^{(1)}_a 
\eean
for all $i\in\{1,\dots,n\}$.

\end{cor}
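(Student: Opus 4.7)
The plan is to derive Corollary \ref{cor 5.4} directly from Theorem \ref{thm pPk} by inverting the Poincar\'e pairing. Write the unknown vector as $\Psi_k(\om_i^{(1)}) = \sum_{a=1}^n c_a\, \om_a^{(1)}$ in $H^1_{dR}(X_{\bF_p}/S_{\bF_p})_{\bar 1}$. Since the generators satisfy the single relation $\sum_a [\om_a^{(1)}]=0$, the coefficients $c_a$ are determined up to an additive constant; we fix the representative with $\sum_a c_a = 0$.

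To solve for the $c_a$, I would pair both sides with $\om_j^{(-1)}$. Using formula \eqref{eq:poincare_duality_expl} from Corollary \ref{cor:duality_section2} applied to $r=1$, $h=-1/q$, we have
\begin{equation*}
([\om_a^{(1)}],[\om_j^{(-1)}]) \,=\, q\bigl(\delta_{aj}-\tfrac{1}{n}\bigr).
\end{equation*}
Hence $(\sum_a c_a\om_a^{(1)},\om_j^{(-1)}) = q c_j - \tfrac{q}{n}\sum_a c_a = q c_j$ under our normalization, so
\begin{equation*}
c_j \,=\, \tfrac{1}{q}\,\bigl(\Psi_k(\om_i^{(1)}),\,\om_j^{(-1)}\bigr).
\end{equation*}
Substituting the right-hand side of \eqref{cU} yields $c_j = \tfrac{1}{q\,C_k(z)^p}\,X_{i,k}(z,h)\,X^*_{j,k}(z,-h)$, where $X_{i,k}$ and $X^*_{j,k}$ denote the two parenthesized sums in \eqref{cU}.

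Next I would verify that this assignment is consistent with the chosen normalization $\sum_a c_a = 0$. This follows because the $p$-hypergeometric vector $Q^{(mp-1)}(z,-h)$ satisfies the zero-sum constraint of $\kz(-h)$ (by Theorem \ref{thm Fp} and the definition of $\mc V$), so $\sum_a Q_a^{(mp-1)}(z,-h) = 0$ for every $m$, whence $\sum_a X^*_{a,k}(z,-h) = 0$.

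Finally, the constant $1/q$ must be rewritten as $-h$ in $\bF_p$. Recall the curve was set up with $q\tilde h + 1 = ap$, so $q\tilde h \equiv -1 \pmod p$ and therefore $1/q \equiv -\tilde h \equiv -h$ in $\bF_p$. Substituting this into the formula for $c_j$ and reassembling $\Psi_k(\om_i^{(1)})=\sum_j c_j\om_j^{(1)}$ yields exactly \eqref{cU 5.4}. The only mildly delicate point is keeping track of which $h$ is meant (the curve parameter $-1/q$ versus the KZ parameter $\tilde h$), but the congruence above shows that in characteristic $p$ the two coincide, which is what makes the clean formula \eqref{cU 5.4} possible.
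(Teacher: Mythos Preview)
Your proof is correct and follows the same approach as the paper: the paper's one-line proof simply cites the Poincar\'e pairing formula $(\om_a^{(1)},\om_b^{(-1)}) = -\frac{1}{h}\bigl(\delta_{ab}-\frac{1}{n}\bigr)$ from \eqref{eq:poincare_duality_expl} (which is your $q(\delta_{ab}-\frac{1}{n})$ since $q\equiv -1/h$ in $\bF_p$) and inverts it against Theorem \ref{thm pPk}. Your additional details---fixing the representative with $\sum_a c_a=0$, verifying consistency via $\sum_a Q_a^{(mp-1)}(z,-h)=0$, and tracking the identification $1/q\equiv -h$---are useful elaborations but not a different route.
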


Corollary \ref{cor 5.4} follows from
the formula $(\om_a^{(1)},\om_b^{(-1)}) = -\frac 1h\left(\delta_{ab}-\frac 1n\right)$, see \eqref{eq:poincare_duality_expl}.



\subsection{Identification with $p$-curvature operators on $\mc V_{-h}$ and $\mc V_h$}

Denote by $\Psi^{\on{KZ}, g}_k(z)$ the $k$-th $p$-curvature operator of  $\mc V_g$.
Recall the isomorphism
\bea
H^1_{dR}(X_{\bF_p}/S_{\bF_p})_{\bar 1} \to \mc V_{-h}, \qquad
\om_a^{(1)} \mapsto \bar w_a= (0,\dots, 1_b,0\dots, 0) - \frac 1n (1,\dots,1).
\eea
Under this identification, formula \eqref{cU 5.4} becomes
\bean
\label{5.5}
&&{}
\\
\notag
\Psi^{\on{KZ},-h}_k(z)(\bar w_i) 
=
\frac{-h}{C_k(z)^p}
\left(
\sum_{\ell=1}^{\left[\frac{n\tilde h}p\right]}Q_i^{(\ell p-1)}(z,h) \,z_k^{p(\ell-1)}\right)
\sum_{m=1}^{\left[\frac{n(p-\tilde h)}p\right]}Q^{(m p-1)}(z,-h) \,z_k^{p(m-1)}
\eean
where  $Q^{(m p-1)}(z,-h) = (Q^{(m p-1)}_1(z,-h), \dots, Q^{(m p-1)}_n(z,-h)) $ are $p$-hypergeometric flat sections 
of $\mc V_{-h}$.

\vsk.2>

Let $S$ be the Shapovalov form on $\mc V$. Then formula \eqref{5.5} can be written as
\bean
\label{5.6}
&&{}
\\
\notag
\Psi^{\on{KZ},-h}_k(z)(v) 
=
\frac{-h}{C_k(z)^p}
\ S\!\left( 
\sum_{\ell=1}^{\left[\frac{n\tilde h}p\right]}Q^{(\ell p-1)}(z,h) \,z_k^{p(\ell-1)}, v\right)
\sum_{m=1}^{\left[\frac{n(p-\tilde h)}p\right]}Q^{(m p-1)}(z,-h) \,z_k^{p(m-1)}
\eean
for any $v\in \mc V_{-h}$.  
Finally, we obtain the formula
\bean
\label{5.7}
&&{}
\\
\notag
\Psi^{\on{KZ},h}_k(z)(v) 
=
\frac{h}{C_k(z)^p}
\ S\!\left( 
\sum_{m=1}^{\left[\frac{n(p-\tilde h)}p\right]}Q^{(m p-1)}(z,-h) \,z_k^{p(m-1)}, v\right)
\sum_{\ell=1}^{\left[\frac{n\tilde h}p\right]}Q^{(\ell p-1)}(z,h) \,z_k^{p(\ell-1)}
\eean
for any $v\in \mc V_{h}$.

\appendix

\section{Orthogonality relations}
\label{App 1}

We give an elementary proof of a stronger version of Corollary \ref{thm ort}.

\begin{thm}
\label{thm ortA} 

The $p$-hypergeometric solutions of the equations $\kz(h)$ and $\kz(-h)$ are orthogonal
under the Shapovalov form. Namely,
for any $\ell \in \big\{1, \dots, \big[\frac{n\tilde h}p\big]\big\}$ and $
m \in \big\{1, \dots, \big[\frac{n(p-\tilde h)}p\big]\big\}$, 
we have
\bean
\label{ortA}
\big(Q^{(\ell p-1)}(z,h), Q^{(m p-1)}(z,-h)\big) = 0.
\eean

\end{thm}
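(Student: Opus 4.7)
The plan is to encode both families of $p$-hypergeometric solutions into a single two-variable generating polynomial and then show that the relevant bidegree coefficients of this polynomial vanish automatically in characteristic $p$. Concretely, I would introduce
\[
\Phi(x,y) \;:=\; \sum_{j=1}^n \frac{P(x,z)^{\tilde h}}{x-z_j}\,\cdot\,\frac{P(y,z)^{p-\tilde h}}{y-z_j} \;\in\; \F_p[x,y,z_1,\dots,z_n],
\]
and observe that by the definition \eqref{Q} of the $p$-hypergeometric vectors the coefficient of $x^{\ell p-1}y^{mp-1}$ in $\Phi$ is precisely the Shapovalov pairing $\big(Q^{(\ell p-1)}(z,h),\,Q^{(mp-1)}(z,-h)\big)$. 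Thus the theorem is equivalent to the vanishing of these specific bidegree coefficients of $\Phi$.

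Next I would simplify $\Phi$ by partial fractions. The identity $\sum_j \frac{1}{(x-z_j)(y-z_j)} = \frac{1}{x-y}\bigl(\frac{P'(y)}{P(y)} - \frac{P'(x)}{P(x)}\bigr)$, combined with $(P^{a})' = aP^{a-1}P'$ and the congruence $p-\tilde h \equiv -\tilde h \pmod p$, yields
\[
(x-y)\,\Phi(x,y) \;=\; -\tfrac{1}{\tilde h}\,(\partial_x + \partial_y)\,H(x,y), \qquad H(x,y) := P(x,z)^{\tilde h}\,P(y,z)^{p-\tilde h}.
\]
The key characteristic-$p$ observation is that $H(y,y) = P(y)^p = \prod_i (y^p - z_i^p)$ is a polynomial in $y^p$, so its $y$-derivative is zero. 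Writing $H(x,y) = H(y,y) + (x-y)K(x,y)$ with $K \in \F_p[x,y,z]$ and applying the Leibniz rule then gives $(\partial_x+\partial_y)H = (x-y)(\partial_x+\partial_y)K$, and cancelling the factor $x-y$ produces the clean identity $\Phi = -\tfrac{1}{\tilde h}(\partial_x+\partial_y)K$.

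To finish, I would apply the operator $\partial_x^{p-1}\partial_y^{p-1}$ to both sides. The right-hand side becomes $-\tfrac{1}{\tilde h}(\partial_x^p\partial_y^{p-1}K + \partial_x^{p-1}\partial_y^p K)$, which is identically zero because $\partial_x^p$ annihilates every $\F_p$-polynomial (on a monomial $x^i$ the scalar $i(i-1)\cdots(i-p+1)$ is a product of $p$ consecutive integers, hence divisible by $p$). On the left, Wilson's theorem gives $\partial_x^{p-1}(x^{\ell p-1}) = -x^{p(\ell-1)}$ while $\partial_x^{p-1}$ annihilates every other monomial in $x$, and similarly in $y$. Using the factorisation $\Phi = \sum_j f_j(x)g_j(y)$ with $f_j(x) = P(x)^{\tilde h}/(x-z_j)$ and $g_j(y) = P(y)^{p-\tilde h}/(y-z_j)$, this identifies
\[
\partial_x^{p-1}\partial_y^{p-1}\Phi(x,y) \;=\; \sum_{\ell,m\ge 1} \big(Q^{(\ell p-1)}(z,h),\,Q^{(mp-1)}(z,-h)\big)\,x^{p(\ell-1)}y^{p(m-1)}.
\]
The vanishing of this polynomial then delivers \eqref{ortA}.

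I expect the only subtle step to be the factorisation in the middle paragraph: recognising that $H(y,y) = \prod_i(y^p-z_i^p)$ has vanishing derivative in characteristic $p$, which is precisely what permits the factor $(x-y)$ to be cancelled and puts $\Phi$ into the symmetric differentiated form $-\tfrac{1}{\tilde h}(\partial_x+\partial_y)K$. Once this is in hand the rest is formal, reducing to the near-tautological fact that $\partial^p=0$ on any polynomial over $\F_p$.
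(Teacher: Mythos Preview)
Your argument is correct and takes a genuinely different route from the paper's proof.

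The paper first invokes the duality relation \eqref{do} between the connections $\nabla^{\on{KZ},h}$ and $\nabla^{\on{KZ},-h}$ to conclude that the pairing $\big(Q^{(\ell p-1)}(z,h),Q^{(mp-1)}(z,-h)\big)$ lies in $\K[z^p]$, and then rules out a nonzero value by a monomial-by-monomial degree count: it tracks which individual monomials of the two solution vectors can multiply to land in $\K[z^p]$, and shows that the resulting coefficient is a multiple of an integer $n-k$ divisible by $p$. Your approach, by contrast, never uses the KZ equations or flatness at all. Packaging the whole family of pairings into the single polynomial $\Phi(x,y)$, you reduce everything to the closed-form identity $(x-y)\Phi=-\tfrac{1}{\tilde h}(\partial_x+\partial_y)H$ and the characteristic-$p$ miracle that $H(y,y)=P(y)^p$ has vanishing derivative; the vanishing of the pairings then falls out of $\partial^p=0$. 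Your proof is thus a self-contained polynomial identity over $\F_p$, independent of the KZ structure, whereas the paper's argument leans on the differential-equation provenance of the $Q$'s before passing to combinatorics. Each has its virtue: the paper's method makes the connection to the KZ duality explicit, while yours isolates the orthogonality as a purely algebraic fact about the master polynomial $P^{\tilde h}$ and its complement $P^{p-\tilde h}$.
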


\begin{proof}
By formula \eqref{do}, the left-hand side in \eqref{ortA}
lies in $\K[z^p]$.  It remains to show that the left-hand side
 cannot be a nonzero element of $\K[z^p]$\,.

Let $i,j\in\{1,\dots,n\}$. Let $Q_j^{(\ell p-1)}(z,h)$ be the $j$-th coordinate of
$Q^{(\ell p-1)}(z,h)$ and  $Q_j^{(m p-1)}(z,-h)$ the $j$-th coordinate of
$ Q^{(m p-1)}(z,-h)$. 
 Let
$Q_j(t,z,h)$  be the $j$-th coordinate of  $Q(t,z,h)$ in \eqref{Q}.
The polynomials $Q_j(t,z,h)$, $Q_j^{(\ell p-1)}(z,h)$, $Q_j^{(m p-1)}(z,-h)$ are homogeneous,
\bea
&&
\deg_{t,z} Q(t,z,h) = n\tilde h-1,
\qquad
\phantom{aaaaaaaa}
\deg_z Q_j^{(\ell p-1)}(z,h) = n\tilde h-\ell p,
\\
&&
\deg_{t,z} Q(t,z,-h) = n(p-\tilde h)-1,
\qquad
\deg_z Q_j^{(m p-1)}(z,-h) = n(p-\tilde h)-mp.
\eea
Clearly,
$\deg_{z_i} Q_j^{(\ell p-1)}(z,h)\leq \tilde h$ if $i\ne j$ and $\deg_{z_i} Q_j^{(\ell p-1)}(z,h)\leq \tilde h-1$ 
 if $i=j$, 
see \eqref{def P} and \eqref{Q}.
Similarly, 
$\deg_{z_i} Q_j^{(\ell p-1)}(z,-h)\leq p-\tilde h$ 
if $i\ne j$ and $\deg_{z_i} Q_j^{(\ell p-1)}(z,-h)\leq p-\tilde h-1$  if $i=j$.

Hence,
 if $f(z)$ is a monomial of the polynomial $Q_j^{(\ell p-1)}(z,h)$ and $g(z)$ is a monomial
   of $Q_j^{(m p-1)}(z,-h)$, then
$f(z)g(z)\in \K[z^p]$ only if 
\bea
f(z) = (z_{i_1}\dots z_{i_k})^{\tilde h} , \qquad
g(z) = (z_{i_1}\dots z_{i_k})^{p-\tilde h}
\eea
for some $1\leq i_1<\dots < i_k \leq n$. In that case we  have
\bea
k\tilde h = n\tilde h-\ell p, \qquad 
k(p-\tilde h) = n(p-\tilde h)-mp,
\eea
which can be written as 
\bean
\label{con1}
(n-k)\tilde h = \ell p, 
\\
\label{con2}
n-k = \ell+m.
\eean

Given $n, h, p, \ell$, if there is no $k$  solving \eqref{con1}, then for any $m$ equation \eqref{ortA} holds.
If there is  $k$ solving  \eqref{con1} and $m$ does not solve equation \eqref{con2} with this  $k$, then
equation \eqref{ortA} holds. If there are  $k$ and $m$ solving equations \eqref{con1} and \eqref{con2},
then the $\K[z^p]$-part of the left-hand side in \eqref{ortA} equals
\bean
\label{z su}
(n-k) \sum_{1\leq i_1<\dots < i_k \leq n} (z_{i_1}\dots z_{i_k})^p,
\eean
which is also zero in $\K[z^p]$ since $n-k$ is divisible by $p$. 
(This coefficient $n-k$ comes from counting the number of summands in the left-hand side of \eqref{ortA}
which give a particular term $(z_{i_1}\dots z_{i_k})^p$ in \eqref{z su}, see the sum in \eqref{Sha}.)
\end{proof}


\end{document}